\newif\iftechreport\techreporttrue
\newcommand{\chop}{%
  \mathchoice{\raisebox{5pt}{$\displaystyle\frown$}}
             {\raisebox{5pt}{$\frown$}}
             {\raisebox{5pt}{$\scriptstyle\frown$}}
             {\raisebox{5pt}{$\scriptscriptstyle\frown$}} }
\newcommand{\s}{\diamond}
\RenewDocumentCommand { ebproof#1 }{ #2 } { #3 }
\begin{document}

\title{A Sequent Calculus For\\ Trace Formula Implication}

\author{Niklas Heidler\orcidID{0009-0001-9944-7587} \and
Reiner Hähnle\orcidID{0000-0001-8000-7613} }

\authorrunning{N. Heidler, R. H\"ahnle}

\institute{Technical University of Darmstadt, Germany\\
\email{<firstName>.<lastName>@tu-darmstadt.de}}

\maketitle     

\begin{abstract}
  Specification languages are essential in deductive program
  verification, but they are usually based on first-order logic, hence
  less expressive than the programs they specify. Recently,
  trace specification logics with fixed points that are at least as
  expressive as their target programs were proposed. This makes it
  possible to specify not merely pre- and postconditions, but the
  whole trace of even recursive programs.
  %
  % Trace formulas are a $\mu$-calculus consisting of unary
  % predicates, binary relations, a chop operator, and a least
  % fixed point operator~($\mu$) handling program recursion.
  Previous work established a sound and complete calculus to determine
  whether a program satisfies a given trace formula. However, the
  applicability of the calculus and its prospects for mechanized
  verification rely on the ability to prove consequence between trace
  formulas.
  We present a sound sequent calculus for proving implication (i.e.\
  trace inclusion) between trace formulas.  To handle fixed point
  operations with an unknown recursive bound, fixed point induction
  rules are used. We also employ contracts and $\mu$-formula
  synchronization.  While this does not yet result in a complete
  calculus for trace formula implication, it is possible to prove many
  non-trivial properties.

  \keywords{Program specification, fixed point logic, $\mu$-calculus}
\end{abstract}

\section{Introduction}

There exist a variety of ways to specify and verify program properties
in a mechanized fashion. In \textit{Model Checking} \cite{MC},
temporal logic, such as Linear Temporal Logic (LTL) or Computation
Tree Logic (CTL) is used to specify program behaviour. During
verification, a model of the given program and its temporal logic
specification are finitely unwound, typically by automata
constructions. \textit{Deductive Verification} \cite{Hähnle2019} uses
first-order logic (FOL) to formalize procedure contracts in Hoare
calculus \cite{Hoare69} or in program logic \cite{KeYTutorial24} to
prove that a given first-order postcondition holds in any state
reachable by executing the given procedure, assuming that a
precondition held in the start state.

It is interesting to note that ---with few exceptions
\cite{McGuireMW94,SprengerDam03b}--- specification languages in
deductive verification are weaker in expressiveness than the programs
they are supposed to specify. Moreover, nearly all deductive
verification techniques are based on reasoning about intermediate
states, i.e. before and after a procedure call. In this sense, model
checking is more natural, because there is a direct correspondence
between the program model and its specification. However, LTL and CTL,
certain extensions \cite{AEM04} notwithstanding, cannot express
modular verification over contracts and they target \emph{models} of
programs. In consequence, an obvious question arises: Is there a logic
that permits trace-based \emph{and} contract-based specification of
imperative programs with recursive procedures that has a natural
correspondence between program and specification?

This was recently answered affirmatively in the form of a trace
specification logic with smallest fixed points. Here, trace formulas
$\Phi$ specify a (possibly infinite) set of finite computation traces
generated by a program $S$ from a simple imperative language
\texttt{Rec} with recursive procedure declarations. Judgments take the
form $S{\,:\,}\Phi$ and mean: Any possible execution trace of $S$ is
contained in the set of traces characterized by $\Phi$.  Gurov \&
Hähnle \cite{GurovHaehnle24} provide a sound, complete, and
\emph{compositional} proof calculus for judgments of the form
$S{\,:\,}\Phi$, where ``compositional'' means that the rule premises
do not introduce intermediate formulas not present in the
conclusion. However \emph{weakening} of trace formulas (i.e.~prove
$S{\,:\,}\Psi$ instead of $S{\,:\,}\Phi$ provided that $\Psi$ implies
$\Phi$) is still necessary.

Soundness and completeness of the calculus rest on a strong
correspondence between programs and trace formulas: For any
\texttt{Rec} program $S$, there exists a \textit{strongest trace
  formula} $\mathit{stf}(S)$ that characterizes \emph{exactly} the traces
generated by $S$.\footnote{The paper \cite{GurovHaehnle24} even proves
  the reverse direction: For any trace formula $\Phi$ there is a
  \emph{canonical program} $S$ having exactly the same traces as
  $\Phi$, establishing a Galois connection between programs and trace
  formulas. However, this result is not relevant for the present
  paper.}  Hence, $S{\,:\,}\Phi$ is valid iff the traces specified by
$\mathit{stf}(S)$ are included in the traces specified by $\Phi$. This implies
one can verify a judgment $S{\,:\,}\Phi$ by simply proving the trace
formula consequence $\mathit{stf}(S) \models \Phi$. Alternatively, one can use
the rules of the calculus to prove $S{\,:\,}\Phi$ directly.  Thus, the
correspondence between programs and trace formulas creates the
opportunity to verify judgments with a program calculus \emph{or} by
trace formula consequence. It is also possible to mix both styles, of
course. In either case, weakening is needed for completeness, so
implication between trace formulas is a crucial ingredient.  This
requires a separate proof system and such a calculus was considered as
future work in \cite{GurovHaehnle24}. It is the main objective of the
present paper.

The consequence relation between formulas in a fixed point logic is a
difficult problem---because trace formulas are as expressive as
recursive programs it is highly undecidable. Therefore, our
investigation into how far one can get with such a calculus, is
interesting in its own right. Existing literature has little to say
about the topic.
%
% Based on the concept of strongest trace formulas, Hähnle et
% al. \cite{lipics} additionally prove that for every program $S$ the
% judgement $S{\,:\,}stf(S)$ is indeed derivable.
% However, the applicability of the calculus relies on the ability to
% handle the implication between trace formulas, which has not been
% investigated yet. The same can be said for various other calculi
% handling trace-based verification, such as \cite{hoho}.
%
% However, establishing this kind of trace inclusion proof is
% non-trivial and needs a specialized formal calculus to derive
% corresponding instances. Even if this calculus cannot achieve
% completeness, it can also be used to strengthen formulas in
% judgements $S{\,:\,}\Phi$ to $S{\,:\,}\Phi'$, as long as we could
% establish the trace consequence $\Phi' \models \Phi$, opening up
% another use case for a trace implication calculus.
%
% , which is able to derive tuples of trace formulas $(\Phi, \Psi)$ with
% $\Phi \models \Psi$.
%
% , specifically focusing on deriving trace inclusions of the shape
% $stf(S) \models \Psi$.
%%%%%%%%%%%%%%%%%%%%%%%%%%%%%%%%%%%%%%%%%%%%%%%%%%%%%%%%%%%%%
%%%%%%%%%%%%%%%%%%%%%%%%%%%%%%%%%%%%%%%%%%%%%%%%%%%%%%%%%%%%%
%%%%%%%%%%%%%%%%%%%%%%%%%%%%%%%%%%%%%%%%%%%%%%%%%%%%%%%%%%%%%
%
The central challenge in the design of a calculus for implication of
trace formulas is the handling of fixed point formulas, i.e.\ formulas
with a leading fixed point operator $\mu$. We propose increasingly
complex strategies of how to eliminate fixed point formulas, without
reaching completeness yet:
\begin{enumerate}
\item Straightforward \emph{unfolding} of $\mu$-formulas is sufficient
  to deal with executions that have concrete bounds
  (Section~\ref{sec:base-rules}).
\item Fixed point \emph{induction} lets one prove trace inclusion of
  recursive executions with an unknown (or very high) bound
  (Section~\ref{sec:fixed point-induction}).
\item To capture the execution state \emph{after} a fixed point
  formula we equip the calculus with Hoare-style state-based procedure
  contracts. The logic and calculus is expressive enough to prove such
  contracts and to propagate them inside the proofs, without the need
  to refer to meta theorems (Section~\ref{sec:contracts}).
\item When proving the consequence relation between two
  $\mu$-formulas, one often encounters the problem that the execution
  of their bodies is not \emph{synchronized}. We equip the calculus
  with $\mu$-formula synchronization rules
  (Section~\ref{sec:synchronization}) that are able to synchronize
  recursive variables inside fixed point operations in many, but not
  in all cases. This is one source of incompleteness.
\end{enumerate}

%
% Since fixpoint operations typically lose state information that may
% be necessary for deriving the continuation of the trace, we will
% extend the calculus with Hoare-style method contracts. These method
% contracts will be proven valid via our calculus and then applied in
% order to propagate necessary state information over fixpoint
% operations. This ensures derivability of trace formulas in the
% antecedent, in which recursive variables occur non-tailrecursive and
% method calls occur mid-program.
%
% Whilst we will prove that our calculus rules are all sound, we will
% \textit{not achieve completeness}, as is indicated by various open
% incompleteness problems. Extending this calculus with the goal of
% achieving - and possibly even proving - completeness, is hence left
% for future~research.
%
The paper is structured as follows: In \Cref{ch2}, we introduce
\texttt{Rec} programs. Trace formulas are defined in \Cref{ch3},
together with some examples of provable properties. \Cref{ch4}
proposes a basic calculus for trace implication, which is the core of
this paper. \Cref{ch5} extends the basic calculus with method
contracts and $\mu$-formula synchronization. \Cref{ch6} refers to
related work, while \Cref{ch7} concludes the paper and proposes future
work.  As noted, completeness is elusive at the moment, however, we
are able to prove a range of interesting and non-trivial properties.

%%% Local Variables:
%%% mode: latex
%%% TeX-master: "main"
%%% End:
  
\section{The Rec Language}
\label{ch2}

We define a simple imperative programming language \texttt{Rec}
\cite{GurovHaehnle24} with (recursive) procedure calls.

\begin{definition}[Rec Program] A \textit{\texttt{Rec} Program} is a pair
  $(S, T)$, where $S$ is a \textit{\texttt{Rec} Statement} generated
  by the grammar
  {\normalfont
    $$
    S ::= \text{\textbf{skip} | } x := a \text{ | } S;S \text{ | }
    \text{\textbf{if} } b \text{ \textbf{then} } S \text{
      \textbf{else} } S \text{ | } m()
    $$}%
  \noindent and $T$ is a possibly empty sequence $M^*$ of procedure
  declarations, where each $M$ declares a parameter-less procedure
  $M \equiv m \{S\}$ consisting of \textit{procedure name} $m$ and
  \textit{procedure body} $S$. Schema variables $a$ and $b$ range over
  side-effect free arithmetic and boolean expressions, respectively,
  that are not further specified.
\end{definition}

The \texttt{Rec} language does not include syntax for
\textbf{while}-loops, however, these can easily be modeled with the
help of a tail-recursive procedure. There is also no parameter passing
mechanism.

A program \emph{trace} $\sigma$ is a, possibly empty, finite sequence
of execution \emph{states}~$s$, partial mappings from program
variables $x$ to integer values. Regarding the semantics of a program
in terms of its finite $traces(S)$ of statements~$S$, we refer to the
standard definitions in the literature \cite{GurovHaehnle24}.

\begin{example} The factorial \texttt{Rec} Program $(S_{fac},\,T_{fac})$ is given by the statement
  $S_{fac} \equiv y := 1;\, factorial()$ and the procedure table
  $$
  T_{fac} \equiv factorial \{ \text{\textbf{if} } x = 1 \text{
    \textbf{then} } skip \text{ \textbf{else} } y := y * x;\, x := x -
  1;\, factorial() \}
  $$

  By convention, sequential composition binds stronger that the
  conditional, i.e.\ the final three statements form the \textbf{else}
  block.  For any start state $s=[x\mapsto i]$ with $i>0$, the program
  computes the factorial of $x$ and stores the result in $y$, i.e.\
  the program terminates in a state $s'$ where $s'(y)=x!$.
\end{example}

%%% Local Variables:
%%% mode: latex
%%% TeX-master: "main"
%%% End:

\section{Trace Formulas}
\label{ch3}

We define the \emph{trace formula logic}. Like for \texttt{Rec}
programs, the semantics of its formulas is given as a set of program
traces.

\begin{definition}[Trace Formula Syntax] The grammar of \emph{trace
    formulas} is
  {\normalfont
    $$
    \Phi ::= p \text{ | } R % \text{ | } true
    \text{ | } \Phi \land \Phi \text{ | }
    \Phi \lor \Phi \text{ | } \Phi \chop \Phi \text{ | } X \text{ | }
    \mu X. \Phi
    $$}%
  where $p$ ranges over first-order state predicates $Pred$, $R$
  ranges over binary relations between states, and $X$ ranges over
  recursion variables \textit{RVar}. The binary operator $\chop$ is called
  chop.\footnote{It is inspired by Interval Temporal Logic
    \cite{HalpernMM83} and its use in specification by
    \cite{nak-uus-09}.} We assume $R$ contains at least the relations
  \begin{center}{\normalfont $Id := \{ (s, s) \in State^2\}$} and
    {\normalfont
      $Sb_x^a := \{ (s, s') \in State^2 \text{ | } s' = s[x \mapsto
      \mathbb{A}\llbracket a\rrbracket(s)] \}$}\enspace.
  \end{center}
\end{definition}

Relation $Id$ models a skip and $Sb_x^a$ an
assignment. $\mathbb{A}\llbracket a\rrbracket(s)$ refers to the
evaluation of arithmetic expression $a$ in state $s$. Observe that the
logic is not closed under negation: only smallest fixed point formulas
are permitted.

\begin{definition}[Trace Formula Semantics] Each trace formula $\Phi$
  evaluates to a set of finite traces. Given a \textit{valuation
    function} $\mathbb{V} : \mathit{RVar} \rightarrow P(State^+)$ that maps
  recursion variables to sets of traces, the \textit{semantics} of a
  trace formula $\Phi$ under valuation $\mathbb{V}$, denoted
  $\llbracket \Phi\rrbracket_{\mathbb{V}}$, is defined by the
  equations in~\Cref{fig:semantics-formulas}.
  $\llbracket \Phi\rrbracket$ abbreviates
  $\llbracket \Phi\rrbracket_{\mathbb{V}}$ when $\mathbb{V}$ does not
  affect the result.
\end{definition}

\begin{figure}[t]
  \begin{align*}
    \llbracket p\rrbracket_{\mathbb{V}} & = \{ s \cdot \sigma \hspace{-0.5mm} \text{ | } \hspace{-0.5mm} s \models p \land \sigma \in State^* \} & \llbracket R\rrbracket_{\mathbb{V}} &= \{ s \cdot s' \text{ | } R(s, s') \} \\
    \llbracket \Phi_1 \land \Phi_2\rrbracket_{\mathbb{V}} &= \llbracket \Phi_1\rrbracket_{\mathbb{V}} \cap \llbracket\Phi_2\rrbracket_{\mathbb{V}} & \llbracket \Phi_1 \lor \Phi_2\rrbracket_{\mathbb{V}} &= \llbracket \Phi_1\rrbracket_{\mathbb{V}} \cup \llbracket\Phi_2\rrbracket_{\mathbb{V}} \\
    \llbracket \Phi_1 \chop \Phi_2\rrbracket_{\mathbb{V}} & = \{\sigma \cdot s \cdot \sigma' \hspace{1mm} | \hspace{1mm} \sigma \cdot s \in \llbracket \Phi_1\rrbracket_{\mathbb{V}} \land s \cdot \sigma' \in \llbracket \Phi_2\rrbracket_{\mathbb{V}} \} & \llbracket X\rrbracket_{\mathbb{V}} &= \mathbb{V}(X) \\
    \llbracket \mu X. \Phi\rrbracket_{\mathbb{V}} &= \bigcap\{ \gamma \subseteq State^+ \text{ | } \llbracket \Phi\rrbracket_{\mathbb{V}[X \mapsto \gamma]} \subseteq \gamma \} % & \llbracket true\rrbracket_{\mathbb{V}} &= State^+
  \end{align*}
  \begin{center}\vspace*{-3em}
  \end{center}
  \caption{Semantics of trace formulas}
  \label{fig:semantics-formulas}
\end{figure}

\noindent Observe that $\llbracket \mu X. \Phi\rrbracket_{\mathbb{V}}$
maps to the least fixed point of $\Phi$ in the powerset lattice
$(P(State^+), \subseteq)$. This is justified by monotonicity of
$\lambda \gamma. \llbracket \Phi\rrbracket_{\mathbb{V}[X \mapsto
  \gamma]}$ and the Knaster-Tarski theorem.

\begin{theorem}[Strongest Trace Formula~\cite{GurovHaehnle24}]
  \label{th1}
  For each \textit{\texttt{Rec} Program} $(S, T)$ there exists a
  \textit{closed strongest trace formula} $\Phi$ with
  $traces(S) = \llbracket\Phi\rrbracket$.
\end{theorem}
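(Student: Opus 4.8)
The plan is to construct $\Phi$ by structural recursion on the statement $S$, defining a translation $\mathit{stf}(\cdot)$ and proving $traces(S) = \llbracket \mathit{stf}(S)\rrbracket$ by induction on the structure of $S$. I would parameterize the translation by an environment assigning to each procedure name $m$ declared in $T$ a fresh recursion variable $X_m$, so that while translating a procedure body an occurrence of a call $m()$ is rendered as the variable $X_m$. The top-level formula is obtained by closing off every such variable with a $\mu$-binder, which will also yield the required closedness.

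The base and compositional cases are routine and dictated by the semantics in \Cref{fig:semantics-formulas}. For \textbf{skip} I take $\mathit{stf}(\text{skip}) = Id$, matching the stuttering trace $s\cdot s$ produced by skip; for an assignment I take $\mathit{stf}(x := a) = Sb_x^a$. Sequential composition maps to chop, $\mathit{stf}(S_1;S_2) = \mathit{stf}(S_1)\chop\mathit{stf}(S_2)$, since the chop semantics glues traces sharing the intermediate state, exactly mirroring how $traces(S_1;S_2)$ concatenates runs through a common state. The conditional maps to a guarded disjunction, $\mathit{stf}(\text{if } b \text{ then } S_1 \text{ else } S_2) = (b \land \mathit{stf}(S_1)) \lor (\neg b \land \mathit{stf}(S_2))$, where $b,\neg b \in Pred$ are used as first-order state predicates filtering the initial state; note that only the state predicate $b$ is negated here, so the absence of trace-formula negation is no obstacle. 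In each case I would verify the set identity by unfolding both the operational definition of $traces$ and the denotational equations and matching them.

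The heart of the proof is the procedure-call case. For a single recursive procedure $m\{S_m\}$ I would set $\mathit{stf}(m()) = \mu X_m.\,\mathit{stf}(S_m)$, with $\mathit{stf}(S_m)$ computed in the environment binding $m$ to $X_m$. Correctness reduces to showing that the least fixed point computed by the $\mu$-semantics in $(P(State^+),\subseteq)$ coincides with the least fixed point defining $traces(m())$ operationally. The body functional $\lambda\gamma.\,\llbracket\mathit{stf}(S_m)\rrbracket_{\mathbb{V}[X_m\mapsto\gamma]}$ is monotone, so Knaster--Tarski applies; moreover chop, union and the guards are Scott-continuous, so by Kleene iteration the least fixed point equals $\bigcup_n F^n(\emptyset)$, the union of all finite unfoldings, which is precisely the set of finite terminating traces of $m()$. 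The restriction to the \emph{least} fixed point is what guarantees that only finite traces are captured, matching the finiteness of $traces$.

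The main obstacle I anticipate is mutual recursion among several procedures in $T$, since the logic offers only a single-variable binder $\mu X.\Phi$ rather than a system of simultaneous fixed-point equations. Here I would invoke the Beki\'{c} principle to decompose the system of mutually recursive equations into nested single fixed points, translating the procedure table into nested $\mu$-formulas in a fixed dependency order. The delicate points are avoiding variable capture when substituting nested fixed points and proving that the component-wise least solution of the simultaneous system agrees with the iterated least fixed points produced by the Beki\'{c} construction; establishing this equivalence, together with the monotonicity it requires, is the technically demanding part. Once the fixed-point machinery is in place, closedness of $\Phi = \mathit{stf}(S)$ is immediate, since every recursion variable $X_m$ introduced for a call lies under its corresponding $\mu$-binder and no free recursion variable remains.
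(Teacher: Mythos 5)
This theorem is imported from \cite{GurovHaehnle24}: the paper you are working against contains no proof of \Cref{th1} at all, only the remark that ``the details of the construction and the proof are in \cite{GurovHaehnle24}'', so there is nothing to compare your argument to line by line. That said, your strategy---a compositional translation $\mathit{stf}(\cdot)$ with $Id$ for \textbf{skip}, $Sb_x^a$ for assignment, $\chop$ for sequencing, guarded disjunction for the conditional, a $\mu$-binder per procedure with recursion variables $X_m$ at call sites, Knaster--Tarski plus Kleene iteration (using continuity of $\chop$ and $\lor$) to match the least fixed point against the union of finite unfoldings, and Beki\'{c} to reduce mutual recursion to nested single binders---is clearly the intended construction, and it is corroborated by the shape of $\Phi_{fac}$ in \Cref{ex:fac} and by the \texttt{FPI}/\texttt{LENL} machinery built on exactly this fixed-point structure. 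The one place where your clauses demonstrably diverge from the paper's own data is the treatment of calls: in \Cref{ex:fac} the call $factorial()$ contributes $Id \chop \Phi_m$ at top level and the recursive call appears as $\cdots \chop Id \chop X_{fac}$, and each branch of the conditional also carries a leading $Id$, i.e.\ the operational semantics of \texttt{Rec} inserts stuttering steps at calls (and apparently at guard evaluation) that your translation omits. With your clauses as literally stated, $\llbracket\mathit{stf}(S_{fac})\rrbracket$ would contain strictly shorter traces than $traces(S_{fac})$ and the claimed \emph{equality} (not mere inclusion) would fail. This is a calibration issue against a semantics the paper never spells out rather than a flaw in the method---and your stated intent to verify each case by unfolding the operational definition of $traces$ would surface it---but it is worth flagging because \Cref{th1} asserts exact trace equality, and every $Id$ matters for that.
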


The strongest trace formula can be effectively constructed from a
given \texttt{Rec} program. The details of the construction and the
proof are in \cite{GurovHaehnle24}. The theorem implies that trace
formulas are at least as expressive as the \texttt{Rec} language.

\begin{example}\label{ex:fac}
  Trace formula $\Phi_{fac}$ is the strongest trace formula for
  $(S_{fac}, T_{fac})$:
  $$
  \Phi_{fac} \equiv Sb_y^1 \chop Id \chop \Phi_{m}\text{, where}$$
  $$
  \Phi_{m} \equiv \mu X_{fac.} ((x = 1 \land Id \chop Id) \lor (x \neq 1 \land Id \chop Sb_y^{y*x} \chop Sb_x^{x-1} \chop Id \chop X_{fac}))$$
\end{example}

\begin{definition}[Satisfiability]
  A \texttt{Rec} program $S$ satisfies a trace formula $\Phi$ (write
  $S{\,:\,}\Phi$) iff $traces(S) \subseteq \llbracket\Phi\rrbracket$.
\end{definition}

As noted in the introduction, a sound and complete compositional proof
calculus for $S{\,:\,}\Phi$ is given in \cite{GurovHaehnle24}, but its
applicability relies on weakening, i.e.~the semantic entailment oracle
$\Phi\models\Psi$, of which this paper presents the first formal
investigation.

Theorem~\ref{th1} implies that trace formulas can characterize the
traces of a given \texttt{Rec} program \emph{precisely}. But this does
not mean that trace formulas are just an alternative notation for
programs: Unlike programs, with the help of suitable binary
predicates, formulas can conveniently abstract away from computational
details.

\begin{example}
  Let $S_{down}$ be a \texttt{Rec} program that decreases a variable
  $x$ by $2$ until $x$ reaches the value $0$. Afterwards, it further
  decreases variable $x$ by $1$.  Whether the recursion is entered
  depends on the initial value of $x$.
  \begin{align*}
    S_{down}  \equiv\,  &  down()  \text{  with  }\\
                        &  down  \{\text{\textbf{if}  }  x  =  0  \text{  \textbf{then}  }  x  :=  x  -  1  \text{  \textbf{else}  }  x  :=  x  -  2;  down()\}
  \end{align*}

  We illustrate how to use trace formulas as an abstract specification
  $S_{down}$ with two properties. None of them can be expressed with
  Hoare-style contracts based on pre- and postconditions.
  \begin{enumerate}
  \item For variable $x$, define the relation
    $R_{dec}^x := \{ (s, s') \in State^2\mid s(x)\geq s'(x)\}$ which
    is easy to axiomatize.  The property that $x$ never increases
    throughout the execution of program $S_{down} $ is expressed with
    the fixed point formula
    $ \mu X_{dec.} R_{dec}^x \lor R_{dec}^x \chop X_{dec} $. 
\item If $x$ is \textit{even} and \textit{non-negative}, then $x$ will
  eventually reach value $0$. Afterwards, $x$ will eventually reach
  value $-1$:
  $$
  \overline{even(x)} \lor x < 0 \lor true \chop x = 0 \chop x = -1\enspace.
  $$

  Observe that the negated expressions $\overline{even(x)}$ and $x<0$
  serve to impose their complement as a constraint on the initial
  state of a trace. Trace formulas are not closed under negation, but
  Boolean expressions related to individual states are.  Also observe
  that the semantics of atomic trace formulas $p$ implies that between
  the states with $x=0$ and $x=-1$ an arbitrary number of intermediate
  states can occur.
\end{enumerate}

\end{example}

The properties above were proven as judgments in the calculus provided
in \cite{GurovHaehnle24}, while the necessary weakening steps were
proven in the calculus presented in Section~\ref{ch4}.  The
derivations can be found in \cite{Heidler24}.

%%% Local Variables:
%%% mode: latex
%%% TeX-master: "main"
%%% End:

\section{A Proof Calculus for Trace Formula Consequence}
\label{ch4}

% Determining the validity of trace formula consequences
% $\Phi \models \Psi$ (i.e.
% $\llbracket\Phi\rrbracket \subseteq \llbracket\Psi\rrbracket$) is
% imperative for many calculi focusing on trace-based verification,
% such as \cite{GurovHaehnle24} or~\cite{hoho}, and at least as
% difficult as program implication. We will now provide a sound, but
% incomplete, formal calculus for this problem.

\subsection{Sequents}

\begin{definition}[Sequents]
  A \emph{sequent} in our calculus has the shape
  $\xi \s \Gamma \vdash \Delta$, where
  $\xi \subseteq \mathit{RVar} \times Pred \times \mathit{RVar}$ and $\Gamma, \Delta$
  are sets of trace formulas. A triple $(X, p, X') \in \xi$ is written
  $(X|p, X')$ as syntactic sugar.
\end{definition}

The purpose of $\xi$ is to specify constraints on the recursion
variables occurring in a valuation $\mathbb{V}$. We write
$\Gamma \vdash \Delta$ as an abbreviation for
$\emptyset \s \Gamma \vdash \Delta$ in case $\xi$ is~empty or irrelevant. $\xi$ is
always empty for a top-level sequent.

\begin{definition}[Validity of Sequents]
  A sequent $\xi \s \Gamma \vdash \Delta$ is \emph{valid}, if for all
  valuations $\mathbb{V}$ with
  $\llbracket X \land p\rrbracket_{\mathbb{V}} \subseteq \llbracket
  X'\rrbracket_{\mathbb{V}}$ for all $(X|p, X') \in \xi$, it is the
  case that
  $\llbracket \bigwedge\Gamma\rrbracket_{\mathbb{V}} \subseteq
  \llbracket\bigvee\Delta\rrbracket_{\mathbb{V}}$.
\end{definition}

\begin{example}
  Let $X_1$ and $X_2$ be recursion variables. Then
  $$
  (X_1|_{x \geq 0}, X_2) \s x = 0, X_1 \vdash X_2
  $$
  is a (trivially) valid sequent, because $(X_1|_{x \geq 0}, X_2)$ already assumes trace inclusion between $X_1$ and $X_2$, whenever $x \geq 0$.
\end{example}

\subsection{Base Rules}
\label{sec:base-rules}

\begin{definition}[Program State]
  To extract the current state from the antecedent $\Gamma$ of a
  sequent, we define
  $P_\Gamma := \{ p \in \Gamma {\normalfont \text{ | }} p \in Pred \}$
  as the set of all first-order state predicates occurring in
  $\Gamma$.
\end{definition}

\paragraph{First-order Rules.}  Standard axioms such as \texttt{CLOSE},
\texttt{TRUE} and \texttt{FALSE}, as well as the usual rules of the
first-order sequent calculus are not separately listed. They are all
valid in our setting.

\begin{figure}[t]
  \begin{align*} 
    &\begin{prooftree} 
      \hypo{\xi \s \Gamma, p \vdash \Delta}
      \hypo{\xi \s \Gamma, \overline{p} \vdash \Delta}
      \infer[left label=\texttt{CUT}]2{\xi \s \Gamma \vdash \Delta}
    \end{prooftree} \hspace{5mm}
       \begin{prooftree}
      \infer[left label=\texttt{REL}]0[$\underbrace{\{(s, s') \in R \hspace{1mm} | \hspace{1mm} s \models P_\Gamma\}}_{R|_{P_\Gamma}} \subseteq R'$]{\xi \s \Gamma, R \vdash R', \Delta}
\end{prooftree} \\[7pt]
    &\begin{prooftree}
        \hypo{P_\Gamma \vdash q}
        \hypo{\xi \s \Gamma, q \vdash \Delta}
        \infer[left label=\texttt{PRED}]2{\xi \s \Gamma \vdash \Delta}
      \end{prooftree} \hspace{10mm}
      \begin{prooftree}
        \hypo{P_\Gamma \vdash p}
        \infer[left label=\texttt{RVAR}]1{\xi, (X_1|_{p}, X_2) \s \Gamma, X_1 \vdash X_2, \Delta}
      \end{prooftree}\\[10pt]
    &\begin{prooftree}
      \hypo{\xi \s P_\Gamma, Id \vdash \Psi_1}
      \hypo{\cdots}
      \hypo{\xi \s P_\Gamma, Id \vdash \Psi_n}
      \hypo{\xi \s P_\Gamma, \Phi \vdash \Psi_1',\ldots, \Psi_n'}
      \infer[left label=\texttt{CH-ID}]4{\xi \s \Gamma, Id \chop \Phi \vdash \Psi_1 \chop \Psi_1',\ldots, \Psi_n \chop \Psi_n', \Delta}
    \end{prooftree} \\[7pt]
    &\begin{prooftree}
      \hypo{\xi \s P_\Gamma, Sb_x^a \vdash \Psi_1}
      \hypo{\cdots}
      \hypo{\xi \s P_\Gamma, Sb_x^a \vdash \Psi_n}
      \hypo{\xi \s spc_{x := a}(P_\Gamma), \Phi \vdash \Psi_1', \ldots, \Psi_n'}
      \infer[left label=\texttt{CH-UPD}]4{\xi \s \Gamma, Sb_x^a \chop \Phi \vdash \Psi_1 \chop \Psi_1', \ldots, \Psi_n \chop \Psi_n', \Delta}
    \end{prooftree}
  \end{align*}
  \begin{center}\vspace*{-2em}
  \end{center}
  \caption{Calculus rules for predicates and relations}
  \label{fig:Base}
\end{figure}

\paragraph{Rules for Predicates and Binary Relations (\Cref{fig:Base}).}

The rule \texttt{CUT} performs a case distinction on predicate $p$. In
contrast to trace formulas, first-order formulas are closed under
negation. Rule \texttt{PRED} infers information from the program state
in its first premise and adds it to the antecedent of its second
premise.

Axiom \texttt{REL} handles trace inclusion between binary relations. Observe that the current
program state $P_\Gamma$ further restricts relation $R$ in the antecedent, abbreviated as $R|_{P_\Gamma}$. Rule~\texttt{RVAR} characterizes trace inclusion between recursion
variables based on $\xi$, and needs to prove the corresponding
restricting predicate in its premise.

Rules \texttt{CH-ID} and \texttt{CH-UPD} handle the case where a
binary relation occurs at the beginning of the current chop sequence
in the antecedent. In both rules, the first $n$ premises ensure that
the leading relation of the antecedent infers the leading formulas of
corresponding chop operations in the succedent. The inference between
the remaining trace formula composites occurs in the final premise.
As the leading binary relation in the antecedent may change program
variables, the program state may need to be adapted to reflect those
changes. For this reason we restrict ourselves to relations $Id$ and
$Sb_x^a$ in the antecedent which is sufficient to define strongest
trace formulas (the rules can be easily extended to support other
binary relations in the antecedent).
The program state for the remaining trace is preserved when the
leading relation is $Id$. In case of $Sb_x^a$, however, the program
state $P_\Gamma$ needs to be updated to its strongest
postcondition~\cite{dijkstra1976discipline} relative to state update
$x := a$, indicated by $spc_{x := a}(P_\Gamma)$.

\begin{figure}[t]
  \begin{prooftree}
    \infer[left label=\texttt{REL}]0[\hspace{-1mm}$Sb_y^{y*x}|_{\mathbf{P_\Gamma^2}} \hspace{-0.5mm} \subseteq \hspace{-0.5mm} R_{inc}^y$]{P_\Gamma^2, Sb_y^{y*x} \vdash R_{inc}^y}
    \hypo{}
    \ellipsis{}{P_\Gamma^4 \vdash \bigwedge P_\Gamma^1}
    \infer[left label=\texttt{RVAR}]1{(X_{1}|_{\bigwedge P_\Gamma^1}, X_{2}) \s P_\Gamma^4, X_{1} \vdash X_{2}}
    \ellipsis{}{\hspace{-4mm}(X_{1}|_{\bigwedge P_\Gamma^1}, X_{2}) \s P_\Gamma^3, Sb_x^{x-1} \chop X_{1} \vdash R_{inc}^y \chop X_{2}}
    \infer[left label = \texttt{CH-UPD}]2{(X_{1}|_{\bigwedge P_\Gamma^1}, X_{2}) \s P_\Gamma^2, Sb_y^{y*x} \chop Sb_x^{x-1} \chop X_{1} \vdash R_{inc}^y \chop R_{inc}^y \chop X_{2}}
  \end{prooftree}
  \begin{center}\vspace*{-2em}
  \end{center}
  \caption{Demonstration of predicate and relation rules}
  \label{fig:dem1}
\end{figure}

\begin{example}\label{ex:states}
  Consider the following four state predicates
  $P_\Gamma^1 \equiv \{x \geq 1, y \geq 1\}$, $P_\Gamma^2 \equiv \{x > 1, y \geq 1\}$, $P_\Gamma^3 \equiv \{x > 1, y \geq x\}$ and $P_\Gamma^4 \equiv \{x \geq 1, y > x\}$, and define a new binary
  relation $R_{inc}^y := \{(s, s') \text{ | } s(y) \leq s'(y)\}$,
  expressing that program variable $y$ does not decrease. An example
  derivation is in \Cref{fig:dem1}. It proves that for the constraints
  on valuations expressed in $P_\Gamma^1,\,P_\Gamma^2,\,P_\Gamma^3,\,P_\Gamma^4$, the sequence
  of state updates $y:=y*x;\,x:=x-1$ can be approximated by
  non-decreasing predicates of program variable $y$. 
  
  % Keep in mind that the strongest postcondition of $P_\Gamma^1$ relative to state update $\mathit{y:=y*x}$ is (strictly speaking) \textit{not} $P_\Gamma^1$. $P_\Gamma^1$ is, however, a weaker version that is still strong enough to close the proof, chosen due to readability purposes. The strongest postcondition of $P_\Gamma^1$ relative to $x:=x-1$ is $P_\Gamma^2$.
\end{example}

\paragraph{Rules for Unfolding and Lengthening (\Cref{fig:Unf}).}

The rules \texttt{UNFL} and \texttt{UNFR} \emph{unfold} a fixed
point formula $\Phi$ in the antecedent and succedent,
respectively. This is sound, because $\mu X. \Phi$ is the least fixed
point, implying that an additional recursive application does not
change its semantic evaluation.

\begin{figure}[t]
  \begin{align*} 
    &\begin{prooftree}
      \hypo{\xi \s \Gamma, \Phi[\mu X. \Phi/X] \vdash \Delta}
      \infer[left label=\texttt{UNFL}]1{\xi \s \Gamma, \mu X. \Phi \vdash \Delta}
    \end{prooftree} \qquad \qquad \hspace{13mm} 
      \begin{prooftree}
        \hypo{\xi \s \Gamma \vdash \Psi[\mu X. \Psi/X], \Delta}
        \infer[left label=\texttt{UNFR}]1{\xi \s \Gamma \vdash \mu X. \Psi, \Delta}
      \end{prooftree} \\[7pt]
    &\begin{prooftree}
      \hypo{\xi \s \Gamma, \mu X. repeat_i(\Phi) \vdash \Delta}
      \infer[left label=\texttt{LENL}]1[$i \geq 1$]{\xi \s \Gamma, \mu X. \Phi \vdash \Delta}
    \end{prooftree} \qquad \qquad
      \begin{prooftree}
        \hypo{\xi \s \Gamma \vdash \mu X. repeat_i(\Psi), \Delta}
        \infer[left label=\texttt{LENR}]1[$i \geq 1$]{\xi \s \Gamma \vdash \mu X. \Psi, \Delta}
      \end{prooftree}
  \end{align*}
  \begin{center}\vspace*{-2em}
  \end{center}
  \caption{Calculus rules for unfoldings and lengthenings}
  \label{fig:Unf}
\end{figure}

Rules \texttt{LENL} and \texttt{LENR} \emph{lengthen} fixed point
formula $\Phi$ in the antecedent and succedent respectively. The
repetition of fixed point formulas can be defined as
$$
repeat_0(\Phi) := \Phi \text{ and } repeat_i(\Phi) :=
\Phi[repeat_{i-1}(\Phi)/X]) \text{ for } i \geq 1\enspace.
$$
The rules are sound, because for any recursive procedure $m$,
procedure $m$ with $n$ recursive calls inlined has the same least
fixed point as $m$ itself.

\begin{example}
  Let $\Phi \equiv \mu X. (R \lor R \chop X)$ be the fixed point
  formula modeling transitive closure of a binary relation $R$. Its unfolding is $R \lor R \chop \Phi$, while lengthening it once corresponds to $\mu X. (R \lor R \chop (R \lor R \chop X))$.
\end{example}

\begin{figure}[t]
  \begin{align*} 
    &\begin{prooftree}
      \hypo{\xi \s \Gamma \vdash \Psi, \Delta}
      \infer[left label=\texttt{ARB1}]1{\xi \s \Gamma \vdash true \chop \Psi, \Delta}
    \end{prooftree} \qquad \qquad
      \begin{prooftree}
        \hypo{\xi \s \Gamma, \Phi_1 \chop \Phi_2 \vdash \Phi_1 \chop true \chop \Psi, \Delta}
        \infer[left label=\texttt{ARB2}]1{\xi \s \Gamma, \Phi_1 \chop \Phi_2 \vdash true \chop \Psi, \Delta}
      \end{prooftree}
  \end{align*}
  \begin{center}\vspace*{-2em}
  \end{center}
  \caption{Calculus rules for arbitrary traces}
  \label{fig:Arb}
\end{figure}

\paragraph{Rules for Arbitrary Traces (\Cref{fig:Arb}).}
According to Figure~\ref{fig:semantics-formulas}, chop sequences
$true \chop \Psi$ indicate an arbitrary finite trace, represented by
$true$, eventually ending with a desired result $\Psi$.  This closely
resembles the \emph{eventually} operator of LTL.
Rule \texttt{ARB1} assumes the situation that $\Psi$ already holds in
the current state, while \texttt{ARB2} assumes $\Psi$ does not hold
yet, allowing us to skip the leading formula.

\paragraph{Additional Rules.}
\iftechreport%
  Rules deemed not necessary to understand the central
  concept behind the calculus can be found in
  Appendix~\ref{sec:additional-material}.%
\else%
  Supplementary rules deemed not necessary to understand the central
  concept behind the calculus can be found in~\cite{Heidler24TR}.
\fi

\subsection{Fixed Point Induction}
\label{sec:fixed point-induction}

When encountering a fixed point operation $\mu X. \Phi$ in the
antecedent, one possible derivation strategy is repeated usage of rule
\texttt{UNFL} until the recursion terminates based on the current
program state. However, not only does a high recursion bound blow up
the proof tree size, recursion with an unknown bound may not terminate
at all. This may cause the derivation strategy to be unusable,
motivating an alternative approach.

\begin{example}
  Trace formula $Sb_x^{10} \chop \Phi_{fac}$ can be handled by a
  derivation strategy with repeated unfolding. However, this does not
  work for just $\Phi_{fac}$, because $x$ then has an unknown value,
  causing the recursion to have an unknown bound.
\end{example}

In the remaining paper we assume a convention giving a unique name to
each recursion variable. 

\begin{theorem}[Fixed Point Induction]
  \label{fpindproof}
  For recursion variables $X_1,\,X_2$, a predicate $I$, a valuation
  $\mathbb{V}$, and trace formulas
  $\mu X_{1.} \Phi,\,\mu X_{2.}  \Psi$:
  $$
  \text{If } \llbracket I \land X_1\rrbracket_{\mathbb{V}} \subseteq
  \llbracket X_2\rrbracket_{\mathbb{V}} \text{ implies } \llbracket I
  \land \Phi\rrbracket_{\mathbb{V}} \subseteq
  \llbracket\Psi\rrbracket_{\mathbb{V}}
  \text{ then }
  \llbracket I \land \mu X_{1.} \Phi\rrbracket_{\mathbb{V}} \subseteq
  \llbracket\mu X_{2.} \Psi\rrbracket_{\mathbb{V}}\enspace.
  $$
\end{theorem}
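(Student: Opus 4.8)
The plan is to reduce the statement to Park's pre-fixed point induction principle, which is exactly the form in which the $\mu$-semantics of \Cref{fig:semantics-formulas} is phrased. Write $\iota := \llbracket I\rrbracket$, which is independent of $\mathbb{V}$ since $I$ is a state predicate, and let $F(\gamma) := \llbracket \Phi\rrbracket_{\mathbb{V}[X_1 \mapsto \gamma]}$ and $G(\delta) := \llbracket \Psi\rrbracket_{\mathbb{V}[X_2 \mapsto \delta]}$ be the two monotone trace operators, so that by definition $\mu_1 := \llbracket \mu X_1.\Phi\rrbracket_{\mathbb{V}} = \bigcap\{\gamma \mid F(\gamma) \subseteq \gamma\}$ and likewise $\mu_2 := \llbracket \mu X_2.\Psi\rrbracket_{\mathbb{V}}$. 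By the unique-naming convention $X_2$ does not occur in $\Phi$ and $X_1$ does not occur in $\Psi$, so $F$ and $G$ genuinely depend only on the indicated slot. In this notation the goal $\llbracket I \land \mu X_1.\Phi\rrbracket_{\mathbb{V}} \subseteq \llbracket \mu X_2.\Psi\rrbracket_{\mathbb{V}}$ is simply $\iota \cap \mu_1 \subseteq \mu_2$.

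The crucial step is to pick the right candidate pre-fixed point for $F$. I would take $S := \mu_2 \cup \overline{\iota}$, the set of traces that lie in $\mu_2$ whenever they satisfy $I$. Since $\mu_1$ is the intersection of all $\gamma$ with $F(\gamma) \subseteq \gamma$, establishing $F(S) \subseteq S$ immediately yields $\mu_1 \subseteq S$, and hence $\iota \cap \mu_1 \subseteq \iota \cap S = \iota \cap \mu_2 \subseteq \mu_2$, which is the desired conclusion. Everything therefore reduces to verifying that $S$ is a pre-fixed point, i.e.\ $\iota \cap F(S) \subseteq \mu_2$.

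To discharge $\iota \cap F(S) \subseteq \mu_2$ I would invoke the hypothesis, read (as its use in the induction rule requires) as ranging over the values assigned to $X_1$ and $X_2$: instantiate the variable slots with $\gamma = S$ and $\delta = \mu_2$, i.e.\ evaluate under $\mathbb{V}' := \mathbb{V}[X_1 \mapsto S, X_2 \mapsto \mu_2]$. Its antecedent $\llbracket I \land X_1\rrbracket_{\mathbb{V}'} \subseteq \llbracket X_2\rrbracket_{\mathbb{V}'}$ reads $\iota \cap S \subseteq \mu_2$, which holds because $\iota \cap S = \iota \cap (\mu_2 \cup \overline{\iota}) = \iota \cap \mu_2$. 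The hypothesis therefore delivers $\llbracket I \land \Phi\rrbracket_{\mathbb{V}'} \subseteq \llbracket \Psi\rrbracket_{\mathbb{V}'}$, that is $\iota \cap F(S) \subseteq G(\mu_2)$. Finally $\mu_2$ is a fixed point of $G$ by Knaster--Tarski, so $G(\mu_2) = \mu_2$, giving $\iota \cap F(S) \subseteq \mu_2$ as needed.

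I expect the main obstacle to be conceptual rather than computational: recognising that the invariant must be folded into the target set as $S = \mu_2 \cup \overline{\iota}$, so that the antecedent of the hypothesis becomes available, and being careful that the hypothesis is applied at the shifted valuation $\mathbb{V}'$ rather than at $\mathbb{V}$ itself. The underlying lattice manipulations (monotonicity of $F$ and $G$, $\mu_2$ being a fixed point, and the set identity $\iota \cap (\mu_2 \cup \overline{\iota}) = \iota \cap \mu_2$) are routine once $S$ has been chosen. A transfinite-approximant induction on $\mu_1 = \bigcup_{\alpha} F^{\alpha}(\emptyset)$ would also work, but the pre-fixed point argument is shorter and matches the $\mu$-semantics directly.
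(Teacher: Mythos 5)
Your proof is correct, but it takes a genuinely different route from the paper. The paper reads the hypothesis, just as you do, as universally quantified over the sets assigned to $X_1$ and $X_2$, but then proceeds by approximant chasing: it defines the Kleene sequences $\gamma_1^0 = \gamma_2^0 = \varnothing$, $\gamma_1^{i+1} = \llbracket\Phi\rrbracket_{\mathbb{V}[X_1\mapsto\gamma_1^i]}$, $\gamma_2^{i+1} = \llbracket\Psi\rrbracket_{\mathbb{V}[X_2\mapsto\gamma_2^i]}$, shows $\llbracket I\rrbracket_{\mathbb{V}}\cap\gamma_1^i\subseteq\gamma_2^i$ by natural induction on $i$, and then appeals to both sequences reaching their least fixed points ``after possibly infinitely many steps.'' You instead apply Park induction directly, exhibiting $S=\mu_2\cup\overline{\iota}$ as a pre-fixed point of the antecedent operator; note that $S$ need not be formula-definable (the logic lacks negation), but that is irrelevant since the $\mu$-semantics quantifies over arbitrary subsets of $State^+$. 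Your argument buys two things: it matches the intersection-of-pre-fixed-points definition of $\llbracket\mu X.\Phi\rrbracket$ verbatim, and it sidesteps the point the paper leaves informal, namely why the $\omega$-indexed iteration sequences actually converge to the least fixed points (which for merely monotone operators requires either a continuity argument or transfinite iteration, neither of which the paper supplies). The paper's approach, in turn, mirrors the operational intuition of synchronized unfoldings of the two recursions, which is closer to how the \texttt{FPI} rule is motivated. Both proofs rely on the same two tacit conventions — unique naming so that $X_2$ does not occur in $\Phi$ nor $X_1$ in $\Psi$, and the reading of the premise as holding for all reassignments of the recursion variables — and you are commendably explicit about both where the paper is not.
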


\begin{proof}
  Let recursion variables $X_1,\,X_2$, predicate $I$, valuation
  $\mathbb{V}$ and trace formulas $\mu X_{1.} \Phi$, $\mu X_{2.} \Psi$
  be arbitrary, but fixed. Since $\llbracket I \land X_1\rrbracket_\mathbb{V} = \llbracket I\rrbracket_{\mathbb{V}} \cap \mathbb{V}(X_1)$:
  \begin{align*} 
    &\llbracket I \land X_1\rrbracket_\mathbb{V} \subseteq \llbracket X_2\rrbracket_\mathbb{V} \text{ implies } \llbracket I \land \Phi\rrbracket_\mathbb{V} \subseteq \llbracket\Psi\rrbracket_\mathbb{V} 
    \\
    \Longleftrightarrow & \hspace{1mm} \forall \gamma_1, \gamma_2. \hspace{1mm} \llbracket I\rrbracket_{\mathbb{V}} \cap \gamma_1 \subseteq \gamma_2 \text{ implies } \llbracket I\rrbracket_{\mathbb{V}} \cap \llbracket\Phi\rrbracket_{\mathbb{V}[X_1 \mapsto \gamma_1]} \subseteq \llbracket\Psi\rrbracket_{\mathbb{V}[X_2 \mapsto \gamma_2]}
  \end{align*}

  We define the following $\gamma$-sequences:
  $$
  (\gamma_1^i, \gamma_2^i)_{i \geq 0} \text{ with } (\gamma_1^0, \gamma_2^0) = (\varnothing, \varnothing),\, \gamma_1^{i+1} = \llbracket \Phi\rrbracket_{\mathbb{V}[X_1 \mapsto \gamma_1^i]},\, \gamma_2^{i+1} = \llbracket\Psi\rrbracket_{\mathbb{V}[X_2 \mapsto \gamma_2^i]}
  $$

  We prove by natural induction over $i$ that
  $\llbracket I\rrbracket_{\mathbb{V}} \cap \gamma_1^i \subseteq
  \gamma_2^i$ for every $i \geq 0$. In the case $i = 0$ we have
  $\llbracket I\rrbracket_{\mathbb{V}} \cap \gamma_1^0= \llbracket
  I\rrbracket_{\mathbb{V}} \cap \varnothing = \varnothing \subseteq
  \gamma_2^0$.
  
  Assume as the induction hypothesis that
  $\llbracket I\rrbracket_{\mathbb{V}} \cap \gamma_1^i \subseteq
  \gamma_2^i$ for a fixed $i \geq 0$. Using our premise, this implies
  $\llbracket I\rrbracket_{\mathbb{V}} \cap \llbracket
  \Phi\rrbracket_{\mathbb{V}[X_1 \mapsto \gamma_1^i]} \subseteq
  \llbracket\Psi\rrbracket_{\mathbb{V}[X_2 \mapsto \gamma_2^i]}$. Then
  also
  $$
  \llbracket I\rrbracket_{\mathbb{V}} \cap \gamma_1^{i+1} = \llbracket
  I\rrbracket_{\mathbb{V}} \cap \llbracket
  \Phi\rrbracket_{\mathbb{V}[X_1 \mapsto \gamma_1^i]} \subseteq
  \llbracket\Psi\rrbracket_{\mathbb{V}[X_2 \mapsto \gamma_2^i]} =
  \gamma_2^{i+1}.
  $$

  Both sequences must ---after possibly infinitely many steps--- reach
  their least fixed points. This means that
  $\llbracket I\rrbracket_\mathbb{V} \cap \llbracket\mu X_{1.}
  \Phi\rrbracket_\mathbb{V} \subseteq \llbracket \mu X_{2.}
  \Psi\rrbracket_\mathbb{V}$ must hold. This is equivalent to our
  proof obligation
  $\llbracket I \land \mu X_{1.} \Phi\rrbracket_\mathbb{V} \subseteq
  \llbracket \mu X_{2.} \Psi\rrbracket_\mathbb{V}$.
\end{proof}

\begin{figure}[t]
\begin{align*} 
\begin{prooftree}
  \hypo{P_\Gamma \vdash I}
  \hypo{\xi, (X_1|_{I}, X_2) \s I,\, \Phi \vdash \Psi}
  \infer[left label=\texttt{FPI}]2{\xi \s \Gamma,\, \mu X_{1.} \Phi \vdash \mu X_{2.} \Psi,\, \Delta}
\end{prooftree}
\end{align*}
\begin{center}\vspace*{-2em}
\end{center}
\caption{Fixed point induction rule}
\label{fig:FPI}
\end{figure}

\paragraph{Fixed Point Induction Rule (\Cref{fig:FPI}).}

Rule \texttt{FPI} makes use of the theorem above to infer trace
inclusion between fixed point formulas. Invariant $I$ allows us to
preserve program state information for the derivation of an arbitrary
recursive iteration.
The first premise establishes that the invariant holds initially. The
second premise then takes the shape of the fixed point induction
assumption as in Theorem~\ref{fpindproof}, representing an arbitrary
recursive iteration. Note that this premise also enforces the
invariant to be preserved, as the derivation between recursion
variables $X_1$, $X_2$ can only be proven if the invariant holds in
the program state before $X_1$ (see rule \texttt{RVAR}).
\iftechreport
  \noindent An alternative fixed point rule can be found in
  Appendix~\ref{sec:additional-material}.  \else An alternative fixed
  point rule can be found in~\cite{Heidler24TR}.\footnote{This fixed
    point rule is based on an extension of the definition of triples
    $(X, p, X') \in \xi$ with more general triples
    $(X, p, \Phi) \in \xi$, where $\Phi$ matches a trace formula.}
\fi

\begin{example}\label{ex:fac-deriv}
  A derivation using rule \texttt{FPI} is in \Cref{fig:dem2}: We prove
  that the factorial program $S_{fac}$ never decreases variable $y$
  after its initialization, or else $x$ is initialized with a negative
  value. For better readability, we use abbreviations:
  $$
  \Phi_{fac}' \equiv ((x = 1 \land Id \chop Id) \lor (x \neq 1 \land Id
  \chop Sb_y^{y*x} \chop Sb_x^{x-1} \chop Id \chop X_{fac}))
  $$
  $$
  \Phi_{inc}' \equiv R_{inc}^y \lor R_{inc}^y \chop X_{inc}
  $$

  Before usage of \texttt{FPI}, trace lengthening is needed to
  synchronize trace lengths and positions of recursion variable
  occurrences. Lengthening $\Phi_{inc}'$ by a factor of three yields
  $R_{inc}^y \chop R_{inc}^y \chop R_{inc}^y \chop R_{inc}^y \chop
  X_{inc}$ as its chop sequence, which synchronizes with the right
  disjunct in $\Phi_{fac}'$. The left disjunct also synchronizes due
  to the occurrence of $R_{inc}^y \chop R_{inc}^y$.
\end{example}

\begin{figure}[t]
  \begin{center}
    \begin{prooftree}
      \infer[left label=\texttt{CLOSE}]0{x \geq 0, y = 1 \vdash \bigwedge P_\Gamma^1}
          \hypo{}
    \ellipsis{}{P_\Gamma^4 \vdash \bigwedge P_\Gamma^1}
    \infer[left label=\texttt{RVAR}]1{(X_{fac}|_{\bigwedge P_\Gamma^1}, X_{inc}) \s P_\Gamma^4, X_{fac} \vdash X_{inc}}
      \ellipsis{}{(X_{fac}|_{\bigwedge P_\Gamma^1}, X_{inc}) \s P_\Gamma^1, \Phi_{fac}' \vdash repeat_3(\Phi_{inc}')}
      \infer[left label=\texttt{FPI}]2{x \geq 0, y = 1, \mu X_{fac.} \Phi_{fac}' \vdash \mu X_{inc.} repeat_3(\Phi_{inc}')}
      \infer[left label=\texttt{LENR}]1[$3 \geq 1$]{x \geq 0, y = 1, \mu X_{fac.} \Phi_{fac}' \vdash \mu X_{inc.} \Phi_{inc}'}
      \ellipsis{}{x \geq 0, Sb_y^1 \chop Id \chop \mu X_{fac.} \Phi_{fac}' \vdash Sb_y^1 \chop \mu X_{inc.} \Phi_{inc}'}
      \ellipsis{}{Sb_y^1 \chop Id \chop \mu X_{fac.} \Phi_{fac}' \vdash Sb_y^1 \chop \mu X_{inc.} \Phi_{inc}' \lor x < 0}
    \end{prooftree}
    \begin{center}\vspace*{-2em}
    \end{center}
    \caption{Demonstration of fixed point induction}
    \label{fig:dem2}
  \end{center}
\end{figure}

\begin{theorem}[Soundness]
  \label{th:s1}
  The calculus rules presented in this section are sound, implying
  that only valid sequents are derivable.
\end{theorem}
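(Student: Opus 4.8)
The plan is to prove soundness by induction on the height of the derivation tree, reducing the global claim to a \emph{local} soundness property of each rule: whenever all premises of a rule are valid sequents, so is its conclusion, and for the zero-premise axioms validity must be established outright. Before treating the genuinely new rules I would isolate three reusable facts that recur in nearly every case: (i) every trace in $\llbracket\bigwedge\Gamma\rrbracket_\mathbb{V}$ has a first state satisfying all predicates of $P_\Gamma$, since atomic predicate formulas constrain only the first state; (ii) a substitution lemma $\llbracket\Phi[\Psi/X]\rrbracket_\mathbb{V}=\llbracket\Phi\rrbracket_{\mathbb{V}[X\mapsto\llbracket\Psi\rrbracket_\mathbb{V}]}$; and (iii) correctness of the strongest postcondition, i.e.\ if $s\models P_\Gamma$ and $s'=s[x\mapsto\mathbb{A}\llbracket a\rrbracket(s)]$ then $s'\models spc_{x:=a}(P_\Gamma)$. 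The first-order rules, including \texttt{CUT}, act only on the propositional and first-order structure of a sequent, so their soundness is inherited from soundness of first-order logic together with the observation that every trace has a first state, placing a trace of $\llbracket\bigwedge\Gamma\rrbracket_\mathbb{V}$ into $\llbracket p\rrbracket_\mathbb{V}$ or into $\llbracket\overline{p}\rrbracket_\mathbb{V}$.

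Given fact (i), the predicate and relation rules are short. For \texttt{REL} the side condition $R|_{P_\Gamma}\subseteq R'$ directly places the two-state antecedent trace into $\llbracket R'\rrbracket_\mathbb{V}$. For \texttt{PRED} the first premise shows the first state satisfies $q$, so adjoining $q$ to the antecedent does not shrink its trace set and the second premise applies. For \texttt{RVAR}, a valuation respecting $(X_1|_p,X_2)$ satisfies $\llbracket X_1\land p\rrbracket_\mathbb{V}\subseteq\llbracket X_2\rrbracket_\mathbb{V}$, while the premise $P_\Gamma\vdash p$ guarantees the restricting predicate holds, closing the branch. The unfolding rules \texttt{UNFL} and \texttt{UNFR} follow from the fixed point equation $\llbracket\mu X.\Phi\rrbracket_\mathbb{V}=\llbracket\Phi[\mu X.\Phi/X]\rrbracket_\mathbb{V}$, a consequence of (ii) and Knaster--Tarski, since they merely replace a formula by a semantically equal one. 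The arbitrary-trace rules \texttt{ARB1} and \texttt{ARB2} reduce to the inclusions $\llbracket\Psi\rrbracket_\mathbb{V}\subseteq\llbracket true\chop\Psi\rrbracket_\mathbb{V}$ and $\llbracket\Phi_1\chop true\chop\Psi\rrbracket_\mathbb{V}\subseteq\llbracket true\chop\Psi\rrbracket_\mathbb{V}$, both immediate from $\llbracket true\rrbracket_\mathbb{V}=State^+$ and $true\chop true\equiv true$.

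The chop rules \texttt{CH-ID} and \texttt{CH-UPD} carry the main technical weight. Here I would take a trace $\tau\in\llbracket\bigwedge\Gamma\land(Id\chop\Phi)\rrbracket_\mathbb{V}$ (resp.\ with $Sb_x^a$) and split it along the chop semantics into a leading two-state part matching the relation and a tail in $\llbracket\Phi\rrbracket_\mathbb{V}$; the first $n$ premises place the leading part into every $\llbracket\Psi_i\rrbracket_\mathbb{V}$, while the final premise places the tail into some $\llbracket\Psi_j'\rrbracket_\mathbb{V}$, so gluing at the shared state yields $\tau\in\llbracket\Psi_j\chop\Psi_j'\rrbracket_\mathbb{V}$. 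The delicate point, and the one I expect to be the main obstacle, is the state bookkeeping in \texttt{CH-UPD}: the tail begins in the post-update state, so fact (iii) is exactly what justifies that the final premise reasons under $spc_{x:=a}(P_\Gamma)$ rather than $P_\Gamma$; getting this right together with the interaction between the gluing state and the several succedent disjuncts is the crux. The lengthening rules \texttt{LENL} and \texttt{LENR} need the separate equality $\llbracket\mu X.\Phi\rrbracket_\mathbb{V}=\llbracket\mu X.repeat_i(\Phi)\rrbracket_\mathbb{V}$, which I would obtain by iterating the substitution lemma to identify $repeat_i(\Phi)$ with the $(i{+}1)$-fold composition of the body functional and then comparing the finite fixed point approximants of the two bodies.

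Finally, \texttt{FPI} is almost immediate from the already-established Theorem~\ref{fpindproof}. Fixing a valuation $\mathbb{V}$ respecting $\xi$, validity of the second premise instantiated at $\mathbb{V}$ yields precisely the implication hypothesis of the theorem, whose conclusion gives $\llbracket I\land\mu X_1.\Phi\rrbracket_\mathbb{V}\subseteq\llbracket\mu X_2.\Psi\rrbracket_\mathbb{V}$; the first premise $P_\Gamma\vdash I$ together with fact (i) then gives $\llbracket\bigwedge\Gamma\rrbracket_\mathbb{V}\subseteq\llbracket I\rrbracket_\mathbb{V}$, so that $\llbracket\bigwedge\Gamma\land\mu X_1.\Phi\rrbracket_\mathbb{V}\subseteq\llbracket\mu X_2.\Psi\lor\bigvee\Delta\rrbracket_\mathbb{V}$, as required. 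The supplementary rules of the appendix follow by analogous arguments.
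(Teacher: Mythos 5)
Your proposal is correct and follows essentially the same route as the paper's proof: reduce global soundness to local soundness of each rule, then discharge the cases using the same auxiliary facts (first-state semantics of predicates, the fixed point unfolding equation, the $repeat_i$ equivalence for lengthening established by comparing fixed point approximants, strongest postconditions for \texttt{CH-UPD}, and Theorem~\ref{fpindproof} for \texttt{FPI}). The case analyses you sketch, including the gluing argument for \texttt{CH-ID}/\texttt{CH-UPD} with the leading part in every $\Psi_i$ and the tail in some $\Psi_j'$, match the paper's set-theoretic derivations.
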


\iftechreport
\noindent Due to its length, the soundness proof has been moved to Appendix~\ref{sec:additional-material}.
\else
\noindent The proof of this theorem is in \cite{Heidler24TR} due to its length.
\fi

%%% Local Variables:
%%% mode: latex
%%% TeX-master: "main"
%%% End:

\section{Calculus Extensions}
\label{ch5}

% In this section, we work out problems in our calculus and present an
% approach for their solution via calculus extensions.

\subsection{Contracts}
\label{sec:contracts}

The base rules of the calculus we established so far expose a major
source of incompleteness: If in an antecedent the fixed point
operation or the recursion variable occurs \emph{non-tail
  recursively}, such as in $X \chop \Phi$ or
$(\mu X. \Psi) \chop \Phi$, then there is no rule to continue a
derivation. The root cause is that the effect that a fixed point or a
recursion variable has on the execution state is unknown. For this
reason, all the rules dealing with fixed points so far permit only a
single formula in the antecedent.
% This stems from the fact that the program state modification during
% such a step is unknown, although crucial for restricting its
% antecedent traces. We will integrate procedure contracts into our
% calculus to remedy this issue.
The standard solution in deductive verification to deal with such a
situation are \emph{contracts} \cite{Hähnle2019} that summarize the
execution state after a complex statement.

% \begin{definition}[Restricted Trace Inclusion]
%   For any state $s$, let $s|_v$ be $s$ projected onto program
%   variable set $v \subseteq PVar$. Similarly, this projection can be
%   lifted to traces~$\sigma$ and sets of traces $tr$. We introduce a
%   \textit{restricted trace inclusion relation}~$\subseteq_{v}$, s.t.
% $$ tr \subseteq_{v} tr' \text{ iff } tr|_{v} \subseteq tr'|_{v}$$
% We will lift this restricted trace inclusion from $\subseteq_{v}$ to
% $\models_v$.
%\end{definition}

%\noindent Note that the trace inclusion $tr \subseteq tr'$ trivially
% implies the restricted trace inclusion $tr \subseteq_{v} tr'$ for
% any $v \subseteq PVar$. Furthermore, it is interesting to realize
% that any trace consequence $\Phi \models \Psi$ holds iff
% $\Phi \models_{v} \Psi$ for
% $pvar(\Phi) \cup pvar(\Psi) \subseteq v$.

\begin{definition}[Procedure Contract]
  A state-based \emph{procedure contract} for a given trace formula
  $\Phi$ is a pair $(pre, post)$ of \textit{precondition}
  $pre \in Pred$ and \textit{postcondition} $post \in
  Pred$. Postconditions may contain fresh program variables $x_{old}$
  containing the value of variables $x$ in $\Phi$ in the execution
  state before $\Phi$ is evaluated.
\end{definition}

While contracts may approximate \emph{any} kind of trace formula, we
kept the attribute ``procedure'', because the trace formula of a
contract can be thought of as the body of a procedure declaration and
this is also how we use contracts.
Intuitively, a procedure contract $(pre, post)$ is \emph{valid} for a
trace formula $\Phi$, if the postcondition is satisfied in the
execution state after evaluation of $\Phi$, assuming the precondition
is satisfied in the execution state before evaluation of $\Phi$.

\begin{example}\label{ex:contract}
  A valid procedure contract for trace formula $\Phi_{m}$ in
  \Cref{ex:fac} is
  $$(x \geq 1,\, y = y_{old} * x_{old}! \land x = 1)\enspace.$$
\end{example}

We \emph{encode} the intuitive validity of a procedure contract
formally as trace inclusion.

\begin{definition}[Contract Encoding] Let $(v^i)_{1 \leq i \leq n}$ be
  all program variables occurring in $\Phi$ and
  $(v_{old}^i)_{1 \leq i \leq n}$ \emph{fresh} program variables. A procedure contract
  $(pre, post)$ is \emph{valid} for $\Phi$ in $\mathbb{V}$ iff
  $$
  \llbracket\underbrace{\bigwedge v_{old}^i = v^i \land pre \land \Phi \chop true}_{\langle pre(\Phi)\rangle}\rrbracket_{\mathbb{V}} \subseteq \llbracket \underbrace{\Phi \chop post}_{\langle post(\Phi)\rangle}\rrbracket_{\mathbb{V}}\enspace.
  $$
\end{definition}

In the following, we use abbreviations $\langle pre(\Phi)\rangle$,
$\langle post(\Phi)\rangle$ for the encoding of the pre- and
postcondition, respectively, as indicated above.
The encoding expresses: Assuming precondition $pre$ holds and the
information about the execution state \emph{before the evaluation}
of $\Phi$ is memorized using fresh variables $v_{old}^i$, then after
evaluating $\Phi$ we reach a state in the antecedent that implies
$post$ in the succedent.  Observe that to model this as a trace
inclusion formula, we have to copy the formula $\Phi$ into the
succedent to ensure that the traces match.
% until the evaluation of $\Phi$ is completed.

\begin{theorem}[Fixed Point Induction on Contracts] \label{th:1} For
  any recursion variable $X$, trace formula $\Phi$, valuation
  $\mathbb{V}$, and procedure contract $(pre, post)$, if the validity
  of $(pre, post)$ for $X$ in $\mathbb{V}$ implies its validity for
  $\Phi$ in $\mathbb{V}$, then it must also be valid for $\mu X. \Phi$
  in $\mathbb{V}$.
\end{theorem}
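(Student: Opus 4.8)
The plan is to reduce the statement to a fixed point induction of exactly the same shape as \Cref{fpindproof}, with the single-premise contract property in place of the invariant inclusion. First I would unfold the definition of contract validity. Fixing once and for all the state predicate $p \equiv \bigwedge v_{old}^i = v^i \land pre$ (with $v^i$ the program variables of $\mu X.\Phi$) and using it uniformly in the encodings for $X$, $\Phi$ and $\mu X.\Phi$, and extending $\chop$ to sets of traces by $A \chop B = \{\sigma \cdot s \cdot \sigma' \mid \sigma \cdot s \in A,\ s \cdot \sigma' \in B\}$ so that $\llbracket \Theta_1 \chop \Theta_2\rrbracket = \llbracket \Theta_1\rrbracket \chop \llbracket \Theta_2\rrbracket$, validity of $(pre,post)$ for a formula $\Theta$ becomes the inclusion $\llbracket p\rrbracket \cap (\llbracket \Theta\rrbracket \chop State^+) \subseteq \llbracket \Theta\rrbracket \chop \llbracket post\rrbracket$. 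Since the only thing depending on $\Theta$ here is its semantic value, I would abstract this into a predicate on sets of traces, $C(\gamma) :\Leftrightarrow \llbracket p\rrbracket \cap (\gamma \chop State^+) \subseteq \gamma \chop \llbracket post\rrbracket$.

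Next, exactly as in the biconditional step of \Cref{fpindproof}, I would re-read the hypothesis at the valuations $\mathbb{V}[X \mapsto \gamma]$. Because $X$ is the sole route through which $\gamma$ enters, validity for $X$ in $\mathbb{V}[X\mapsto\gamma]$ is precisely $C(\gamma)$, while validity for $\Phi$ in $\mathbb{V}[X\mapsto\gamma]$ is precisely $C(F(\gamma))$, where $F(\gamma) := \llbracket \Phi\rrbracket_{\mathbb{V}[X\mapsto\gamma]}$ is the monotone operator whose least fixed point is $\llbracket \mu X.\Phi\rrbracket_{\mathbb{V}}$. Thus, following the same reasoning as in \Cref{fpindproof}, the hypothesis yields the one-step preservation property: for every $\gamma$, $C(\gamma)$ implies $C(F(\gamma))$.

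I would then iterate $F$ transfinitely from $\gamma^0 = \varnothing$ with $\gamma^{i+1} = F(\gamma^i)$ and unions at limit stages, whose supremum is $\llbracket \mu X.\Phi\rrbracket_{\mathbb{V}}$ (by monotonicity of $F$ and Knaster--Tarski, as already used for the semantics of $\mu$ and as in \Cref{fpindproof}). A transfinite induction establishes $C(\gamma^i)$ for all $i$: the base case $C(\varnothing)$ is immediate since $\varnothing \chop State^+ = \varnothing$ (and, note, needs no hypothesis); the successor case is the one-step property above; and the limit case needs $C$ to pass through unions. Concluding $C$ of the supremum then gives $C(\llbracket \mu X.\Phi\rrbracket_{\mathbb{V}})$, i.e.\ validity of $(pre,post)$ for $\mu X.\Phi$, which is the claim.

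The main obstacle I expect is the limit step. Unlike the invariant in \Cref{fpindproof}, which was only intersected with $\llbracket I\rrbracket$ and hence passed through unions trivially, here the iterate $\gamma$ occurs on both sides of the inclusion defining $C$ and, more delicately, inside a chop. The key fact I would isolate is that chop distributes over unions in its left argument, $(\bigcup_i \gamma_i)\chop B = \bigcup_i (\gamma_i \chop B)$, which is immediate from the trace-wise definition of $\chop$. With it, $\llbracket p\rrbracket \cap ((\bigcup_i\gamma_i)\chop State^+) = \bigcup_i (\llbracket p\rrbracket \cap (\gamma_i\chop State^+))$; by the induction hypothesis each summand lies in $\gamma_i \chop \llbracket post\rrbracket \subseteq (\bigcup_j\gamma_j)\chop\llbracket post\rrbracket$ (using monotonicity of $\chop$), so the whole union stays below the target and $C(\bigcup_i\gamma_i)$ follows. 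The base and successor cases are then routine.
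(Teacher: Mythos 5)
Your proposal follows essentially the same route as the paper's proof: it abbreviates the precondition encoding into a single predicate, re-reads the hypothesis as a one-step preservation property $C(\gamma) \Rightarrow C(F(\gamma))$ for arbitrary $\gamma$ (with $F(\gamma) = \llbracket\Phi\rrbracket_{\mathbb{V}[X\mapsto\gamma]}$), and iterates $F$ from $\varnothing$ up to the least fixed point, exactly as the paper's $\gamma$-sequence argument does. The one place you go beyond the paper is the explicit limit step, justified by left-distributivity of $\chop$ over unions --- the paper's proof hides this behind the phrase ``after possibly infinitely many steps, reach its least fixed point'', so your treatment is a tightening of the same argument rather than a different one.
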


\iftechreport
\noindent \noindent The proof for this theorem is in Appendix~\ref{app:contract}.
\else
\noindent The proof for this theorem is in~\cite{Heidler24TR}.
\fi

To integrate contracts into the calculus rules presented in
Section~\ref{ch4}, we need to remodel sequents so they include
information about procedure contracts.

\begin{definition}[Sequent with Contract]
  A \emph{procedure contract table} is a partial function
  $\mathbb{C} : ProcName \rightharpoonup Pred \times Pred$, assigning
  each procedure of a program $P$ a possible contract. $\mathbb{C}$ is called
  \emph{valid} in $\mathbb{V}$ iff for all $m \in dom(\mathbb{C})$,
  $\mathbb{C}(m)$ is valid for $\mu X_{m.} \Phi$ in~$\mathbb{V}$, where $\mu X_{m.} \Phi$ is the subformula of $\Gamma$ corresponding to procedure $m$.
  A \emph{sequent (with contract)} has the form
  $\xi \s \Gamma \vdash_{\mathbb{C}} \Delta$, where a procedure
  contract table $\mathbb{C}$ is added as an index to $\vdash$.
\end{definition}

Note that procedure contracts in our sequents are only available for fixed point formulas $\mu X_{m.} \Phi$ generated by procedures $m$ via $\mathit{stf(P)}$, which is sufficient for proving sequents of the form $\mathit{stf}(P) \vdash_{\mathbb{C}} \Psi$. % An extension of method contracts to encompass arbitrary fixpoint operations is considered future work.

\begin{definition}[Validity of Sequent with Contract]
  A sequent $\xi \s \Gamma \vdash_{\mathbb{C}} \Delta$ is
  \emph{valid}, if for all valuations $\mathbb{V}$, contract table
  $\mathbb{C}$ valid in $\mathbb{V}$, and
  $\llbracket X \land p\rrbracket_{\mathbb{V}} \subseteq \llbracket
  X'\rrbracket_{\mathbb{V}}$ holding for all $(X|p, X') \in \xi$
  implies
  $\llbracket \bigwedge\Gamma\rrbracket_{\mathbb{V}} \subseteq
  \llbracket\bigvee\Delta\rrbracket_{\mathbb{V}}$.
\end{definition}

The contract table $\mathbb{C}$ is always empty in a top-level sequent
of a derivation.  Procedure contracts are added to $\mathbb{C}$ on
demand by the calculus rules during a derivation. The rules ensure
that all added contracts are proven valid.

\begin{example}
  Continuing Example~\ref{ex:contract}, let
  $\mathbb{C}(fac) \equiv (x \geq 1, y = y_{old} * x_{old}! \land x =
  1)$.
  $$
  P_\Gamma^2,\, \Phi_{m} \chop Sb_x^{x-1} \vdash_{\mathbb{C}} true \chop
  x = 0
  $$
  is a valid sequent, because the postcondition guarantees that $fac$
  terminates with $x = 1$ before eventually being reduced to $x = 0$.
\end{example}

\begin{figure}[t]
  \begin{align*} 
    & \begin{prooftree}
      \hypo{v_{old}^i \in fresh(Var)}
      \hypo{\mathbb{C}' = \mathbb{C}[m \mapsto (pre, post)]}
      \hypo{\xi \s \langle pre(\Phi)\rangle \vdash_{\mathbb{C}'} \langle post(\Phi)\rangle}
      \hypo{\xi \s \Gamma, \mu X_{m.} \Phi \vdash_{\mathbb{C}'} \Delta}
      \infer[left label=\texttt{MC}]{2, 2}{\hspace{22mm}\xi \s \Gamma, \mu X_{m.} \Phi \vdash_{\mathbb{C}} \Delta \hspace{22mm}}
    \end{prooftree} \\[7pt]
    & \begin{prooftree}
      \hypo{v_{old}^i \in fresh(Var)}
      \hypo{\mathbb{C}' = \mathbb{C}[m \mapsto (pre, post)]}
      \hypo{\xi \s \langle pre(\Phi_1)\rangle \vdash_{\mathbb{C}'} \langle post(\Phi_1)\rangle}
      \hypo{\xi \s \Gamma, (\mu X_{m.} \Phi_1) \chop \Phi_2 \vdash_{\mathbb{C}'} \Delta}
      \infer[left label=\texttt{CH-MC}]{2, 2}{\hspace{23mm} \xi \s \Gamma, (\mu X_{m.} \Phi_1) \chop \Phi_2 \vdash_{\mathbb{C}} \Delta \hspace{23mm}}
    \end{prooftree}
  \end{align*}
  \begin{center}
    \vspace*{-2em}
  \end{center}
  \caption{Calculus rules for procedure contract validity}
  \label{fig:mce}
\end{figure}

\paragraph{Procedure Contract Validity Rules (\Cref{fig:mce}).}

Rules \texttt{MC} and \texttt{CH-MC} prove the validity of a procedure
contract for the leading fixed point formula and add it to the
procedure contract table $\mathbb{C}$, as can be seen in the right
premise.
The left premise assumes the procedure contract holds for the
internal recursion variable $X_m$ and proves that it hence must also
be valid for $\Phi$, $\Phi_1$. \Cref{th:1} justifies the validity of
the contract for the whole fixed point formula $\mu X_{m.}  \Phi$.
The proof uses contract table $\mathbb{C}'$ that already assumes the
contract for $m$, because this contract may be assumed to handle
recursive calls to $m$ in $\Phi$, $\Phi_1$.

\begin{figure}[t]
  \begin{align*} 
    & \begin{prooftree}
      \hypo{\mathbb{C}(m) = (pre, post)}
      \hypo{P_\Gamma \vdash_{\mathbb{C}} p \land pre}
      \hypo{\xi \s P_\Gamma[v_{old}^i/v^i],\, post,\, \Phi \vdash_{\mathbb{C}} \Psi}
      \infer[left label=\texttt{CH-RVAR}]{1, 2}{\hspace{6mm} \xi,\, (X_{m}|_{p}, X) \s \Gamma,\, X_m \chop \Phi \vdash_{\mathbb{C}} X \chop \Psi,\, \Delta \hspace{6mm}}
    \end{prooftree} \\[7pt]
    & \begin{prooftree}
      \hypo{\mathbb{C}(m) = (pre, post)}
      \hypo{P_\Gamma \vdash_{\mathbb{C}} I \land pre}
      \hypo{\xi,\, (X_m|_{I}, X) \s I,\, \Phi_1 \vdash_{\mathbb{C}} \Psi_1}
      \hypo{\xi \s P_\Gamma[v_{old}^i/v^i],\, post,\, \Phi_2 \vdash_\mathbb{C} \Psi_2}
      \infer[left label=\texttt{CH-FPI}]{1, 3}{\hspace{26mm} \xi \s \Gamma, (\mu X_{m.} \Phi_1) \chop {\Phi_2} \vdash_{\mathbb{C}} (\mu X. \Psi_1) \chop \Psi_2, \Delta \hspace{26mm}}
    \end{prooftree}
  \end{align*}
  \begin{center}
    \vspace*{-2em}
  \end{center}
  \caption{Calculus rules for procedure contract application}
  \label{fig:mca}
\end{figure}

\paragraph{Procedure Contract Application Rules (\Cref{fig:mca}).}

Rule \texttt{CH-RVAR} handles the occurrence of a recursion variable
$X_m$ in a non-tail recursive setting. In addition to rule
\texttt{RVAR}, it looks up the procedure contract $(pre, post)$ of
$m$, as indicated by the side condition. Since the recursion variable of procedure $m$ is uniquely named as $X_m$, the correct procedure is used.
The left premise additionally proves the precondition $pre$. The
right premise takes the current program state, substitutes every
occurrence of variable $v^i$ with variable $v_{old}^i$, as determined
in the contract, and adds the postcondition $post$. This modified
program state is then used to continue the derivation of the remaining
trace. Rule \texttt{CH-FPI} behaves similarly, guaranteeing the
derivation of non-tail recursive fixed point formula occurrences.

It is future work to extend the calculus to support multiple contracts
for procedures by applying contracts in a hierarchical
fashion. This necessitates a modification of the contract table
definition and the calculus rules.

% Additional contract application rules can be found in Appendix B.1.

\begin{figure}[t]
\begin{center}
\begin{prooftree}
\hypo{}
\ellipsis{}{}
\hypo{}
\ellipsis{}{}
\infer[left label=\texttt{CLOSE}]0{y \geq 1, x = 0, y > x \vdash_{\mathbb{C}} y > x}
\infer[left label=\texttt{PRED}]1{y \geq 1, x = 0 \vdash_{\mathbb{C}} y > x}
\ellipsis{}{\hspace{-3mm} x_{old} \geq 1, y_{old} \geq 1, y = y_{old} * x_{old}! \land x = 1, Sb_x^{x-1} \vdash_{\mathbb{C}} Sb_x^{x-1} \chop y > x}
\infer[left label=\texttt{\hspace{-2mm}CH-FPI}]2{P_\Gamma^1, \mu X_{fac.} \Phi_{fac}' \chop Sb_x^{x-1} \vdash_{\mathbb{C}} \mu X_{inc.} \Phi_{inc}' \chop Sb_x^{x-1} \chop y > x}
\infer[left label=\texttt{CH-MC}]2{P_\Gamma^1, \mu X_{fac.} \Phi_{fac}' \chop Sb_x^{x-1} \vdash_{\varnothing} \mu X_{inc.} \Phi_{inc}' \chop Sb_x^{x-1} \chop y > x}
\end{prooftree}
\begin{center}\vspace*{-2em}
\end{center}
\caption{Demonstration of calculus with procedure contracts}
\label{fig:dem3}
\end{center}
\end{figure}

\begin{example}
  The calculus with procedure contracts is illustrated by an example
  in \Cref{fig:dem3}. We use the abbreviations from
  Example~\ref{ex:fac-deriv}, $\mathbb{C} := [fac \mapsto (pre, post)]$, $P_{\Gamma}^1 \equiv \{x \geq 1, y \geq 1\}$ and
  $(pre, post) \equiv (x \geq 1, y = y_{old} * x_{old}! \land x = 1)$. For readability, the derivation only follows the rightmost premises.

\end{example}

\begin{theorem}[Soundness of the Calculus with Procedure
  Contracts] \label{th4} The calculus rules presented in this section
  are sound, implying that only valid sequents are derivable.
\end{theorem}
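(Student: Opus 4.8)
The plan is to prove soundness by rule induction on the derivation, establishing \emph{local soundness} of each rule: assuming every premise is a valid sequent (with contract), the conclusion is valid as well. Together with the validity of the first-order axioms this yields that every derivable sequent is valid. Fix an arbitrary valuation $\mathbb{V}$ together with a contract table $\mathbb{C}$ that is valid in $\mathbb{V}$, and suppose the $\xi$-constraints $\llbracket X \land p\rrbracket_{\mathbb{V}} \subseteq \llbracket X'\rrbracket_{\mathbb{V}}$ hold for all $(X|p,X') \in \xi$; the goal in each case is the trace inclusion $\llbracket \bigwedge\Gamma\rrbracket_{\mathbb{V}} \subseteq \llbracket\bigvee\Delta\rrbracket_{\mathbb{V}}$ of the conclusion.

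First I would dispatch the base rules of \Cref{ch4}. Since none of \texttt{CUT}, \texttt{REL}, \texttt{PRED}, \texttt{RVAR}, \texttt{CH-ID}, \texttt{CH-UPD}, the unfolding and lengthening rules, the arbitrary-trace rules, or \texttt{FPI} inspects or alters $\mathbb{C}$, their soundness arguments from \Cref{th:s1} transfer essentially verbatim: the extra hypothesis ``$\mathbb{C}$ valid in $\mathbb{V}$'' is carried unchanged from conclusion to premises, so the inclusions established there still hold. This reduces the proof to the four genuinely new rules.

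For the contract-\emph{validity} rules \texttt{MC} and \texttt{CH-MC}, the key step is to upgrade validity of $\mathbb{C}$ to validity of $\mathbb{C}' = \mathbb{C}[m \mapsto (pre,post)]$ in $\mathbb{V}$, after which the rightmost premise (now indexed by $\mathbb{C}'$) yields the conclusion directly. The only new obligation is that $(pre,post)$ is valid for $\mu X_{m.}\Phi$ in $\mathbb{V}$. I would obtain this from the left premise $\xi \s \langle pre(\Phi)\rangle \vdash_{\mathbb{C}'} \langle post(\Phi)\rangle$ via \Cref{th:1}: read under $\mathbb{V}$, this premise says precisely that validity of the contract for the recursion variable $X_m$ (as recorded in $\mathbb{C}'$) entails its validity for the body $\Phi$, which is exactly the hypothesis of fixed-point-induction-on-contracts and hence delivers validity for $\mu X_{m.}\Phi$. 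The freshness side condition $v_{old}^i \in fresh(Var)$ guarantees that the encodings $\langle pre(\cdot)\rangle$ and $\langle post(\cdot)\rangle$ faithfully record the pre-state values without clashing with the variables occurring in $\Phi$, so the semantic reading of the premise matches the encoded notion of contract validity.

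For the contract-\emph{application} rules \texttt{CH-RVAR} and \texttt{CH-FPI}, I would decompose a trace of the antecedent along the leading chop. For \texttt{CH-RVAR} with antecedent $X_m \chop \Phi$, the $\xi$-constraint $(X_m|_p, X)$ handles the first chop component, mapping $X_m$ into $X$ whenever $p$ holds (established in the premise $P_\Gamma \vdash_{\mathbb{C}} p \land pre$). For the residual trace I would invoke the contract $\mathbb{C}(m) = (pre,post)$, which---being valid in $\mathbb{V}$---guarantees that after the segment matched by $X_m$ (entered in a state satisfying $pre$) the shared continuation state satisfies $post$ with the pre-state values renamed to $v_{old}^i$; this justifies replacing the unknown effect of $X_m$ by the antecedent $P_\Gamma[v_{old}^i/v^i], post$ of the final premise, which then derives $\Psi$. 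Rule \texttt{CH-FPI} is analogous, with the leading $\mu X_{m.}\Phi_1$ component handled by an inner fixed point induction whose two premises mirror \texttt{FPI}, and the residual $\Phi_2$ handled by the same postcondition summary. The main obstacle I anticipate is the apparent circularity in \texttt{MC} and \texttt{CH-MC}: the right premise needs $\mathbb{C}'$-validity, hence the very contract for $\mu X_{m.}\Phi$ we are trying to justify, while the left premise is itself stated under $\vdash_{\mathbb{C}'}$. Resolving this cleanly---showing that the $\mathbb{C}'$-indexed left premise is exactly the inductive hypothesis of \Cref{th:1} rather than a question-begging assumption, and that the freshness and substitution bookkeeping makes the encoded inclusions line up with the semantic contract-validity conditions---is where the real care is required.
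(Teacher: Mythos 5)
Your proposal is correct and follows essentially the same route as the paper: local soundness of each new rule, with \texttt{MC}/\texttt{CH-MC} discharged via \Cref{th:1} (including the same resolution of the apparent circularity, namely that validity of $\mathbb{C}'$ only amounts to assuming the contract for $X_m$, which is exactly the hypothesis of fixed-point induction on contracts), and \texttt{CH-RVAR}/\texttt{CH-FPI} handled by decomposing along the leading chop and transporting $pre$, $post$ and the $v_{old}^i$-renamed state across the shared state. The only difference is presentational: the paper factors your chop-decomposition argument into an explicit auxiliary result (\Cref{mcappl}) whose step-by-step set inclusions carry out the freshness and substitution bookkeeping you correctly flag as the delicate part.
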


\iftechreport
\noindent \noindent Due to its length, the soundness proof has been moved to Appendix~\ref{app:contract}.
\else
\noindent The proof of this theorem is in \cite{Heidler24TR} due to its length.
\fi

\subsection{Synchronization}
\label{sec:synchronization}

To successfully perform a fixed point induction, the trace lengths and
positions of the recursion variable occurrences must align in
antecedent and succedent. This is not always the case, and it
motivates the following synchronization rules.

\begin{figure}[t]
  \begin{center}
    \begin{prooftree}
      \hypo{\textit{not derivable}}
      \ellipsis{}{P_\Gamma^1, Sb_y^{y*x} \vdash X_{inc}}
      \hypo{\textit{not derivable}}
      \ellipsis{}{(X_{fac}|_{\bigwedge P_\Gamma^1}, X_{inc}) \s P_\Gamma^4, X_{fac} \vdash R_{inc}^y}
      \ellipsis{}{(X_{fac}|_{\bigwedge P_\Gamma^1}, X_{inc}) \s P_\Gamma^3, Sb_x^{x-1} \chop X_{fac} \vdash R_{inc}^y \chop R_{inc}^y}
      \infer[left label = \texttt{CH-UPD}]2{(X_{fac}|_{\bigwedge P_\Gamma^1}, X_{inc}) \s P_\Gamma^2, Sb_y^{y*x} \chop Sb_x^{x-1} \chop X_{fac} \vdash X_{inc} \chop R_{inc}^y \chop R_{inc}^y}
    \end{prooftree}
    \begin{center}\vspace*{-2em}
    \end{center}
    \caption{Demonstration of recursion variable synchronization problem}
    \label{fig:dem4}
  \end{center}
\end{figure}

\begin{example}
  In fixed point formula
  $\Phi_{inc}:=\mu X_{inc.} (R_{inc}^y \lor X_{inc} \chop R_{inc}^y)$,
  the recursion variable $X_{inc}$ does not occur tail recursively. So
  any synchronizing formula must have its recursion variable as a
  leading formula in its chop sequence. This issue is demonstrated in
  \Cref{fig:dem4}: The second disjunct in $\Phi_{inc}$ is expanded to
  $X_{inc} \chop R_{inc}^y \chop R_{inc}^y$, so that in the initial sequent
  of \Cref{fig:dem4} the positions of recursion variables $X_{fac}$, $X_{inc}$
  misalign.
\end{example}

\begin{definition}[Chop Formula]
  Let relation $R$ and recursion variable $X$ be
  fixed. \emph{Primitive chop formulas} are a subclass of trace
  formulas consisting of chop sequences containing exclusively $R$ or
  $X$, specified by the grammar
  $$
  \Psi_{(R, X)} ::= R {\normalfont\text{ | }} X {\normalfont\text{ |
    }} \Psi_{(R, X)} \chop \Psi_{(R, X)}\enspace.
  $$
  The \emph{chop formulas} $CF_{(R, X)}$ with fixed $R$ and $X$ are
  defined as disjunctions over primitive chop formulas, specified by
  the grammar
  $$
  \Phi_{(R, X)} ::= \Psi_{(R, X)} {\normalfont\text{ | }} \Psi_{(R, X)} \lor \Phi_{(R, X)}\enspace.
  $$
  All recursion variables $X$ occurring in a chop formula are
  \emph{not bound}.
\end{definition}

\begin{example}\label{ex:gr}
  %The formula
  $
  \Phi_{sub} \equiv \mathit{Id} \lor \mathit{Id} \chop X \chop \mathit{Id} \chop X \lor \mathit{Id} \chop \mathit{Id} \chop \mathit{Id}
  $
  is a chop formula, i.e.\ $\Phi_{sub} \in CF_{(\mathit{Id}, X)}$. The
  subformula $\mathit{Id} \chop X \chop \mathit{Id} \chop X$ is a primitive chop
  formula.
\end{example}

Let $\Phi \in CF_{(R, X)}$ be a chop formula. Then there exists a
natural mapping
$ gr{\,:\,}CF_{(R, X)} \rightarrow G(\{X\}, \{R\}, \delta, X) $ from
$\Phi = \bigvee_{1 \leq i \leq n} \varphi_i$ to a context-free grammar
with \emph{non-terminal} $X$, \emph{terminal} $R$, \emph{production
  rules} $\delta$ and \emph{initial non-terminal} $X$, where
production rules $\delta$ are defined as
$X \rightarrow grammatize(\varphi_i)$ for $ 1 \leq i \leq n$.
The function $grammatize$ maps each primitive chop formula to a
sequence over terminal $R$ and non-terminal $X$. It is defined by
$$
grammatize(S_1 \chop S_2 \chop \ldots \chop S_n) := S_1 S_2 \cdots S_n
\text{ for } S_i\in\{R,\,X\}\enspace.
$$

This construction ensures that every $\Phi \in CF_{(R, X)}$ has a
\emph{unique grammar representation} $gr(\Phi)$.  There is exactly one
terminal symbol in $gr(\Phi)$, so we may use Parikh's theorem
\cite{parikh} to deduce that its specified language is regular.
% There also exist approaches to transform this kind of context-free
% grammar into an NFA accepting the same language~\cite{Esparza_2011}.

\begin{definition}%[Trace Language]
  The \emph{regular trace language} of a chop formula $\Phi$ is
  $L(gr(\Phi))$.
\end{definition}

\begin{example}
  The context-free grammar $gr(\Phi_{sub})$ of the formula from
  Example~\ref{ex:gr} is:
  $ X \rightarrow \mathit{Id} \text{ | } \mathit{Id} \hspace{0.5mm} X \hspace{0.5mm} \mathit{Id}
  \hspace{0.5mm} X \text{ | } \mathit{Id} \hspace{0.5mm} \mathit{Id} \hspace{0.5mm} \mathit{Id}
  $.  Now consider the chop formula
  $\Phi_{sub}' \equiv \mathit{Id} \lor \mathit{Id} \chop \mathit{Id} \chop X \chop X \lor \mathit{Id}
  \chop Id \chop Id$.  Its context-free grammar $gr(\Phi_{sub}')$ has
  the production rules:
  $
  X \rightarrow \mathit{Id} \text{ | } \mathit{Id} \hspace{0.5mm} \mathit{Id} \hspace{0.5mm} X \hspace{0.5mm} X \text{ | } \mathit{Id} \hspace{0.5mm} \mathit{Id} \hspace{0.5mm} \mathit{Id}
  $.
  The induced regular trace languages are identical, i.e.\
  $L(\Phi_{sub}) = L(\Phi_{sub}')$, implying that both chop formulas
  generate the exact same traces.
\end{example}

\begin{figure}[t]
  \begin{align*} 
    & \begin{prooftree}
      \hypo{\xi \s \Gamma \vdash \mu X. \Psi',\, \Delta}
      \infer[left label=\texttt{SYNC}]1[$L(gr(\Psi')) \subseteq L(gr(\Psi))$]{\xi \s \Gamma \vdash \mu X. \Psi,\, \Delta}
    \end{prooftree} 
  \end{align*}
  \begin{center}\vspace*{-2em}
  \end{center}
  \caption{Calculus rule for $\mu$-formula synchronization}
  \label{fig:sync}
\end{figure}

\paragraph{Synchronization Rule (\Cref{fig:sync}).}

Rule \texttt{SYNC} permits to realign problematic fixed point formulas
to synchronize with the antecedent. This requires the trace language
of the premise to be smaller than or equal to the trace language of
the conclusion.  We cannot apply the synchronization rule when the
fixed point formula in the premise is not a chop formula (for example,
in the case of nested fixed point formulas), which is a limitation to
completeness.

\begin{figure}[t]
  \begin{center}
    \begin{prooftree}
      \hypo{\textit{See \Cref{fig:dem1} (cf.\ \Cref{fig:dem4})}}
      \ellipsis{}{(X_{fac}|_{\bigwedge P_\Gamma^1}, X_{inc}) \s P_\Gamma^2, Sb_y^{y*x} \chop Sb_x^{x-1} \chop X_{fac} \vdash R_{inc}^y \chop R_{inc}^y \chop X_{inc}}
      \ellipsis{}{\Phi_m \vdash \mu X_{inc.} (R_{inc}^y \lor R_{inc}^y \chop X_{inc})}
      \infer[left label = \texttt{SYNC}]1{\Phi_m \vdash \mu X_{inc.} (R_{inc}^y \lor X_{inc} \chop R_{inc}^y)}
    \end{prooftree}
    \begin{center}\vspace*{-2em}
    \end{center}
    \caption{Demonstration of $\mu$-formula synchronization}
    \label{fig:dem5}
  \end{center}
\end{figure}

% An additional synchronization rule can be found in Appendix C.1.

\begin{example}
  A derivation with $\mu$-formula synchronization is in
  \Cref{fig:dem5}.
\end{example}

\begin{theorem}[Soundness of the Calculus with Synchronization]
  \label{t2}
  The \emph{\texttt{SYNC}} rule is sound, implying that only valid sequents
  are derivable.
\end{theorem}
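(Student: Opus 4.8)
The plan is to reduce soundness of \texttt{SYNC} to a single semantic inclusion between the two fixed point formulas, namely $\llbracket \mu X. \Psi'\rrbracket_{\mathbb{V}} \subseteq \llbracket \mu X. \Psi\rrbracket_{\mathbb{V}}$, and then combine this with validity of the premise. First I would note that since $\Psi,\Psi' \in CF_{(R, X)}$ are chop formulas, the formula $\mu X. \Psi$ has no free recursion variable and mentions only the single relation $R$, so its semantics does not depend on $\mathbb{V}$. Abbreviating $A_n := \llbracket \underbrace{R \chop \cdots \chop R}_{n}\rrbracket$ for the set of $R$-paths built from $n$ consecutive $R$-steps, the elementary observation driving everything is that chop adds path lengths, $A_m \chop A_n = A_{m+n}$, which follows directly from the chop semantics by gluing two $R$-paths at their shared middle state.

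The heart of the argument is the characterization $\llbracket \mu X. \Psi\rrbracket = \bigcup_{n \in \Lambda(\Psi)} A_n$, where $\Lambda(\Psi) := \{\,|w| : w \in L(gr(\Psi))\,\}$ is the set of word lengths of the trace language. To establish it I would compute the fixed point by Kleene iteration, $\llbracket \mu X. \Psi\rrbracket = \bigcup_{i \geq 0} F^i(\varnothing)$ with $F(\gamma) = \llbracket \Psi\rrbracket_{\mathbb{V}[X \mapsto \gamma]}$, and prove by induction on $i$ that $F^i(\varnothing) = \bigcup_{n \in \Lambda_i} A_n$, where $\Lambda_i$ collects the lengths of words derivable in $gr(\Psi)$ by derivation trees of height at most $i$. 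For $i=0$ both sides are empty. In the step, for a primitive chop formula $\psi_j$ containing $k$ occurrences of $R$ and $b$ occurrences of $X$, substituting $X \mapsto \bigcup_{n \in \Lambda_i} A_n$ and using $A_m \chop A_n = A_{m+n}$ together with distributivity of chop over union yields exactly the union of all $A_{k + n_1 + \cdots + n_b}$ over $n_1, \ldots, n_b \in \Lambda_i$, which is precisely one application of the production $X \to grammatize(\psi_j)$ read on lengths. Taking the union over $i$ gives $\Lambda(\Psi) = \bigcup_i \Lambda_i$ and hence the characterization.

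With the characterization in hand, the \texttt{SYNC} side condition $L(gr(\Psi')) \subseteq L(gr(\Psi))$ immediately yields $\Lambda(\Psi') \subseteq \Lambda(\Psi)$ (the languages live over the single terminal $R$, so each word is determined by its length), whence $\llbracket \mu X. \Psi'\rrbracket \subseteq \llbracket \mu X. \Psi\rrbracket$. Soundness then follows routinely: fixing any $\mathbb{V}$ satisfying the constraints in $\xi$, validity of the premise gives $\llbracket \bigwedge\Gamma\rrbracket_{\mathbb{V}} \subseteq \llbracket \mu X. \Psi' \lor \bigvee\Delta\rrbracket_{\mathbb{V}}$, and monotonicity of $\lor$ (union) together with the semantic inclusion upgrades the right-hand side to $\llbracket \mu X. \Psi \lor \bigvee\Delta\rrbracket_{\mathbb{V}}$, which is exactly validity of the conclusion. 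Note that regularity of the trace languages (via Parikh's theorem) is needed only to make the side condition decidable, not for soundness.

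I expect the main obstacle to be the inductive characterization step, specifically pinning down the correspondence between the Kleene iterates of $F$ and derivation trees of $gr(\Psi)$ and handling the length bookkeeping cleanly (in particular verifying that no production can contribute length $0$, so that $A_0$ never arises). Everything else---the chop length-additivity identity, distributivity of chop over union, the reduction of language inclusion to length-set inclusion, and the final monotonicity argument---is routine once this correspondence is in place.
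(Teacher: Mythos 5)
Your proposal is correct and rests on the same two pillars as the paper's proof: a correspondence between the Kleene iterates of $\mu X.\Psi$ and the derivations of the grammar $gr(\Psi)$, and the observation that with a single terminal $R$ a word of $L(gr(\Psi))$ is determined by its length, so that language inclusion transfers to trace-set inclusion; the final step (monotonicity of $\lor$ under the premise's validity) is identical. The decomposition differs, though. The paper interposes a syntactic object: it defines sets $C^i$ of primitive chop formulas by iterated substitution, proves $\gamma^i = \llbracket C^i\rrbracket_{\mathbb{V}}$ (its Lemma on fixed point representations), proves $\bigcup_i grammarize(C^i) = L(gr(\Psi))$ (derivability lemma), and then combines these into a per-trace equivalence ``$\sigma\in\llbracket\mu X.\Psi\rrbracket$ of length $l$ iff $R\chop\cdots\chop R \in L(gr(\Psi))$'' before deriving the inclusion. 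You instead prove the single closed-form characterization $\llbracket\mu X.\Psi\rrbracket = \bigcup_{n\in\Lambda(\Psi)} A_n$ directly by induction on derivation-tree height, using $A_m\chop A_n = A_{m+n}$ and distributivity of chop over union. Your version is tidier: it makes explicit that each Kleene iterate is a \emph{union of full length-blocks} $A_n$ (which the paper uses implicitly and states somewhat loosely as $\llbracket c_\sigma\rrbracket_{\mathbb{V}} = \{\sigma\}$, a singleton claim that is not literally true since $\llbracket R\chop\cdots\chop R\rrbracket$ contains all $R$-chains of that length), and it correctly isolates the two side issues one must check, namely that no production yields the empty word and that regularity/Parikh is irrelevant to soundness. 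The paper's route, in exchange, produces reusable intermediate lemmas about the $C^i$ sets. Both arguments go through.
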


\iftechreport
\noindent \noindent Due to its length, the soundness proof has been moved to Appendix~\ref{app:sync}.
\else
\noindent The proof of this theorem is in \cite{Heidler24TR} due to its length.
\fi

%%% Local Variables:
%%% mode: latex
%%% TeX-master: "main"
%%% End:

\section{Related Work}
\label{ch6}

Lange et al. \cite{RW1} analyze the model checking problem over finite
transition systems using a modal $\mu$-calculus logic enriched with a
chop operator. They focus on providing a model checker for this
extended logic and prove its soundness and completeness.  The paper
presents a tableau calculus that lets one verify whether a transition
system $T$ satisfies a corresponding formula $\Phi$. Formula
consequence is \emph{not} addressed.

Walukiewicz \cite{RW2} extends propositional modal logic with fixpoint
operations, resulting in the common $\mu$-calculus. An axiomatization
is provided to syntactically infer sequents $\Gamma \vdash \Delta$
that semantically correspond to the implication between $\mu$-calculus
formulas. The presented calculus is proven to be \emph{sound} and
\emph{complete}.  In contrast to the present paper, the logic syntax
contains modal connectives, but neither relations nor the chop
operator.

Müller-Olm \cite{RW3} extends the classical modal $\mu$-calculus with
chop, which is semantically interpreted using \emph{predicate
  transformers}. The paper focuses on proving that any context-free
process has a characteristic formula up to bisimulation or simulation.
The paper further analyzes decidability and expressiveness of this
logic, but reasoning about formula consequence is \emph{not}
discussed.

%%% Local Variables:
%%% mode: latex
%%% TeX-master: "main"
%%% End:

\section{Conclusion}
\label{ch7}

We designed a sound calculus to prove formula consequence in a trace
logic with smallest fixed points, chop, and binary relations. The
significance of the logic derives from the fact that it can
characterize the behavior of imperative programs with recursive
procedures. To prove the judgment $S{\,:\,}\Phi$ that a program~$S$
conforms to a trace formula specification $\Phi$, it is necessary to
infer consequence relations $\Phi \models \Psi$ of trace
formulas~\cite{GurovHaehnle24}.

As usual for a logic with smallest fixed point operator, the calculus
presented here has fixed point induction as its central inference
rule, but in its standard form this turns out not to be very
useful. The reason is the presence of the chop operator which
(i)~necessitates to approximate the state \emph{after} evaluation of
the first constituent in a chop formula and (ii)~may cause
misalignment among the bodies of smallest fixed point formulas. We
added \emph{contracts} for fixed point formulas and grammar-based
realignment, respectively, to mitigate these issues. We have not seen
such mechanisms in the literature on proof systems related to
$\mu$-calculus and believe these ideas constitute an interesting and
viable approach to make such calculi more complete.

At the same time, both presented solutions are clearly incomplete:
Regarding~(i), consequence between fixed points with unbounded
iterations and a formula like $true \chop \Phi$ cannot be proven: This
requires to track state changes \emph{during} the fixed point
evaluation, between iterations. Related to~(ii), $\mu$-formula
synchronization was defined for a specific subclass of trace
formulas. Direct generalization of grammar-based alignment leads to
the inclusion problem of context-free grammars which is undecidable.

In the future we want to investigate how the novel
concepts ---contracts and grammar-based alignment--- can be generalized
towards completeness and how they can be employed in automated proof
search.
% An implementation would be helpful to explore such calculi and to
% provide more complex examples.  In order to solve these problems, we
% envision further extensions of this calculus with even more
% sophisticated concepts.
It is also interesting to analyze the practicality of an integration
of this calculus with related calculi relying on trace-based judgments
\cite{hoho,HSK23a}.

%%% Local Variables:
%%% mode: latex
%%% TeX-master: "main"
%%% End:

\bibliographystyle{splncs04}
\bibliography{main}

\iftechreport
\appendix
\section{Additional Examples}
\label{appendix}

In addition to the running example used throughout the paper, we
succeeded to prove several non-trivial, interesting properties of
programs.  The  proofs  are  executed  in  the  calculus  for  judgments $S{\,:\,}\Phi$  in  \cite{GurovHaehnle24},  while  necessary  weakening steps  were  proven  in  the  calculus  presented  here.    The  derivations  can be  found  in  \cite{Heidler24}.

\begin{enumerate}
\item Let program $S_{down}$ be a program that decreases a variable
  $x$ by $2$ until $x$ reaches the value $0$. Afterwards, it further
  decreases  variable  $x$  by  $1$.  Whether  the  recursion  is  entered depends  on  the  initial  value  of  $x$.
 	\begin{align*}
 	        S_{down}  \equiv\,  &  down()  \text{  with  }\\	                                &  down  \{\text{\textbf{if}  }  x  =  0  \text{  \textbf{then}  }  x  :=  x  -  1  \text{  \textbf{else}  }  x  :=  x  -  2;  down()\}
 	    \end{align*}
  The following properties of this program were proven:
  \begin{enumerate}
  \item Variable $x$ never increases through the program execution:
    $$
    \mu X_{dec.} R_{dec}^x \lor R_{dec}^x \chop X_{dec}
    $$
    with
    $R_{dec}^x := \{(s, s') \in State \times State \text{ | }
    \mathbb{A}\llbracket x\rrbracket(s) \geq \mathbb{A}\llbracket
    x\rrbracket(s')\}$.
  \item If $x$ is \textit{even} and \textit{non-negative}, then $x$
    will eventually reach value $0$. Afterwards, $x$ will eventually
    reach value $-1$:
    $$
    \overline{even(x)} \lor x < 0 \lor true \chop x = 0 \chop x = -1
    $$
  \end{enumerate}

\item Let Program $S_{fac}$ compute the factorial of $10$ and store
  the result of the computation in variable $y$.
  \begin{align*}
S_{fac}  \equiv  \hspace{1mm}  &  x  :=  10;  y  :=  1;  factorial()  \text{  with  }  \\
                            & factorial \{\text{\textbf{if} } x = 1 \text{ \textbf{then} } skip \text{ \textbf{else} } y := y * x; x := x - 1; factorial()\}
  \end{align*}

  The following property of this program was proven:\smallskip

  Variable $y$ will eventually map to $10!$: \quad $true \chop y = 10!$

\item Let program $S_{pow}$ compute the power $y^x$ and store the
  result in variable $z$. This  is  a  program  with  mutually  recursive  procedures.
  \begin{align*}
    S_{pow}  \equiv  \hspace{1mm}  &  z  :=  1;  pow()  \text{  with}  \\
                            & pow \{\text{\textbf{if} } x = 1 \text{ \textbf{then} } skip \text{ \textbf{else} } z := z * y; subtract()\} \text{ and } \\
                            & subtract \{ x := x - 1; pow() \}
  \end{align*}

  The following property of this program was proven:\smallskip

  Either variable $z$ never changes after its initialization or
  variable $x$ will eventually change:
  \begin{align*}
    & (Sb_z^1 \chop \mu X_{zstat.} (R_{stat}^z \lor R_{stat}^z \chop X_{zstat})) \hspace{1mm} \lor \\
    & (\mu X_{xstat.} R_{stat}^x \lor R_{stat}^x \chop X_{xstat}) \chop R_{change}^x \chop true
  \end{align*}
  $$
  \text{with } R_{stat}^x := \{(s, s') \text{ | } s(x) = s'(x)\}, R_{change}^x := \{(s, s') \text{ | } s(x) \neq s'(x)\}.
  $$
  
\item Let $S_{contract}$ be a program behaving as follows. If $x$ is
  $0$, the program terminates. If $x > 0$, then $x$ is decreased by
  $1$, before the method is called recursively and $ev$ is set to
  $0$. If $x < 0$, then $x$ is increased by $1$, before the method is
  called recursively and $ev$ is set to $1$.
  This  is  an  example  of  a  non-linear,  non-tail  recursive  program  with unbounded  behavior.	    \begin{align*}
 S_{contract}  \equiv  \hspace{1mm}  &  main()  \text{  with}  \\ 	                                                                  &  main  \{\text{\textbf{if}  }  x  =  0  \text{  \textbf{then}  }  skip  \text{  \textbf{else}}  \\	                                                                  &  \hspace{15mm}  \text{\textbf{  if}  }  x  >  0  \\  &  \hspace{20mm}  \text{  \textbf{then}  }  x  :=  x  -  1;  main();  ev  :=  0  \\	                                                                  &  \hspace{20mm}  \text{  \textbf{else}  }  x  :=  x  +  1;  main();  ev  :=  1  \}
  \end{align*}

  The following property of this program was proven:\smallskip

  At some point a state is reached where $ev$ is $0$ or $ev$ is 1 and
  $x$ is $0$ assuming $x$ is initialized with $x \neq 0$:
  $$x = 0 \lor true \chop (ev = 0 \lor ev = 1) \land x = 0$$
\end{enumerate}

%%% Local Variables:
%%% mode: latex
%%% TeX-master: "main"
%%% End:

\section{Additional Material Relating to Section~\ref{ch4}}
\label{sec:additional-material}

\subsection{Additional Base Rules}
\label{sec:additional-rules}

\begin{figure}[h]
  \begin{align*} 
    &\begin{prooftree}
      \hypo{\xi \s \Gamma, p, p \chop \Phi \vdash \Delta}
      \infer[left label=\texttt{CH-PREDL}]1{\xi \s \Gamma, p \chop \Phi \vdash \Delta}
    \end{prooftree} 
    &&\begin{prooftree}
      \hypo{\xi \s \Gamma, q \vdash true \chop \Psi, \Delta}
      \infer[left label=\texttt{CH-PREDR}]1{\xi \s \Gamma, q \vdash q \chop \Psi, \Delta}
    \end{prooftree} \\[7pt]
    &\begin{prooftree}
      \hypo{P_\Gamma, Id \vdash \Psi_1}
      \hypo{P_\Gamma \vdash \Psi_2}
      \infer[left label=\texttt{END-ID}]2{\xi \s \Gamma, Id \vdash \Psi_1 \chop \Psi_2}
    \end{prooftree}
    &&\begin{prooftree}
      \hypo{P_\Gamma, Sb_x^a \vdash \Psi_1}
      \hypo{spc_{x := a}(P_\Gamma) \vdash \Psi_2}
      \infer[left label=\texttt{END-UPD}]2{\xi \s \Gamma, Sb_x^a \vdash \Psi_1 \chop \Psi_2}
    \end{prooftree}\\[7pt]
    &\begin{prooftree}
      \hypo{\xi \s \Gamma, \Phi_1 \chop \Phi_3 \vdash \Delta}
      \hypo{\xi \s \Gamma, \Phi_2 \chop \Phi_3 \vdash \Delta}
      \infer[left label=\texttt{CH-$\lor$L}]2{\xi \s \Gamma, (\Phi_1 \lor \Phi_2) \chop \Phi_3 \vdash \Delta}
    \end{prooftree}
    &&\begin{prooftree}
      \hypo{\xi \s \Gamma, \Phi_1 \chop \Phi_3, \Phi_2 \chop \Phi_3 \vdash \Delta}
      \infer[left label=\texttt{CH-$\land$L}]1{\xi \s \Gamma, (\Phi_1 \land \Phi_2) \chop \Phi_3 \vdash \Delta}
    \end{prooftree} \\[7pt]
&
    &&\begin{prooftree}
      \hypo{\xi \s \Gamma \vdash \Psi_1 \chop \Psi_3, \Psi_2 \chop \Psi_3, \Delta}
      \infer[left label=\texttt{CH-$\lor$R}]1{\xi \s \Gamma \vdash (\Psi_1 \lor \Psi_2) \chop \Psi_3, \Delta}
    \end{prooftree} \\[7pt]
    &\begin{prooftree}
      \hypo{\xi \s \Gamma, (\Phi[\mu X. \Phi/X]) \chop \Phi' \vdash \Delta}
      \infer[left label=\texttt{CH-UNFL}]1{\xi \s \Gamma, (\mu X. \Phi) \chop \Phi' \vdash \Delta}
    \end{prooftree} 
    &&\begin{prooftree}
      \hypo{\xi \s \Gamma \vdash (\Psi[\mu X. \Psi/X]) \chop \Psi', \Delta}
      \infer[left label=\texttt{CH-UNFR}]1{\xi \s \Gamma \vdash (\mu X. \Psi) \chop \Psi', \Delta}
    \end{prooftree} \\[7pt]
    &\begin{prooftree}
      \hypo{\xi \s \Gamma, (\mu X. repeat_i(\Phi)) \chop \Phi' \vdash \Delta}
      \infer[left label=\texttt{CH-LENL}]1[$i \geq 1$]{\xi \s \Gamma, (\mu X. \Phi) \chop \Phi' \vdash \Delta}
    \end{prooftree} 
    &&\begin{prooftree}
      \hypo{\xi \s \Gamma \vdash (\mu X. repeat_i(\Psi)) \chop \Psi', \Delta}
      \infer[left label=\texttt{CH-LENR}]1[$i \geq 1$]{\xi \s \Gamma \vdash (\mu X. \Psi) \chop \Psi', \Delta}
    \end{prooftree}
  \end{align*}
  \begin{center}\vspace*{-2em}
  \end{center}
  \caption{Additional base rules}
  \label{fig:Additional}
\end{figure}

\noindent The following rule is very deceptive---even though it seems correct, its unsoundness has been formally proven in the theorem prover \textit{Isabelle/HOL}.
\begin{figure}
\begin{center}
\begin{prooftree}
  \hypo{\xi \s \Gamma \vdash \Psi_1 \chop \Psi_3, \Delta}
  \hypo{\xi \s \Gamma \vdash \Psi_2 \chop \Psi_3, \Delta}
  \infer[left label=\texttt{CH-$\land$R}]2{\xi \s \Gamma \vdash (\Psi_1 \land \Psi_2) \chop \Psi_3, \Delta}
\end{prooftree} 
  \caption{Unsound Rule}
  \label{fig:Unsound}
\end{center}
\end{figure}
\vspace{-6mm}
\begin{theorem}
The given rule in \Cref{fig:Unsound} is unsound.
\end{theorem}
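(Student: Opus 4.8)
The plan is to refute the rule by exhibiting a single concrete instance in which both premises are valid but the conclusion is not; by the definition of sequent validity this reduces to finding formulas for which the trace inclusions asserted by the premises hold while the inclusion asserted by the conclusion fails. The conceptual reason the rule is wrong is that a trace may belong to $\llbracket \Psi_1 \chop \Psi_3\rrbracket$ and to $\llbracket \Psi_2 \chop \Psi_3\rrbracket$ by being split at \emph{different} positions, whereas membership in $\llbracket(\Psi_1 \land \Psi_2)\chop \Psi_3\rrbracket$ demands a single common split point whose prefix lies in $\llbracket\Psi_1\rrbracket \cap \llbracket\Psi_2\rrbracket$. I would therefore engineer $\Psi_1$ and $\Psi_2$ so that they can only ever contribute prefixes of \emph{incompatible lengths}, ruling out any common split.

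Concretely, I would take $\xi = \emptyset$ and an empty $\Delta$ (so that $\mathbb{V}$ is irrelevant), set $\Psi_3 := true$, and choose $\Psi_1 := Id$ and $\Psi_2 := Id \chop Id$, whose semantics are the length-two traces $\{s \cdot s\}$ and the length-three traces $\{s \cdot s \cdot s\}$, respectively. For the antecedent I would use $\Gamma := \{\,Id \chop Id\,\}$, so that $\llbracket\bigwedge\Gamma\rrbracket = \{\,s\cdot s\cdot s \mid s \in State\,\}$ is nonempty. The resulting rule instance has conclusion $Id \chop Id \vdash (Id \land (Id\chop Id)) \chop true$ and premises $Id \chop Id \vdash Id \chop true$ and $Id\chop Id \vdash (Id\chop Id)\chop true$.

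The verification then splits into three routine checks. For the second premise I would invoke the general fact $\llbracket\Phi\rrbracket \subseteq \llbracket\Phi \chop true\rrbracket$ (take the empty continuation $\sigma'$, so the last state of a trace in $\Phi$ serves as the shared split state), which immediately gives $\llbracket Id\chop Id\rrbracket \subseteq \llbracket(Id\chop Id)\chop true\rrbracket$. For the first premise I would unfold $\llbracket Id \chop true\rrbracket = \{\,s\cdot s\cdot\sigma' \mid s\in State,\ \sigma'\in State^*\,\}$ and observe that every constant trace $s\cdot s\cdot s$ has equal first two states, hence lies in this set. For the conclusion I would note that $\llbracket Id\rrbracket$ contains only length-two traces while $\llbracket Id\chop Id\rrbracket$ contains only length-three traces, so $\llbracket Id \land (Id\chop Id)\rrbracket = \varnothing$ and therefore $\llbracket(Id\land(Id\chop Id))\chop true\rrbracket = \varnothing$; since $\llbracket Id\chop Id\rrbracket \neq \varnothing$, the conclusion's inclusion fails. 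The only point requiring care---and the one I would state explicitly---is the semantics of the predicate $true$ and of $\chop$: $\llbracket true\rrbracket$ is the set of \emph{all nonempty} traces, so $\Phi \chop true$ pins the whole trace to begin with a $\Phi$-prefix, and all denoted traces are nonempty, which is exactly what makes the chosen antecedent a genuine witness of unsoundness.
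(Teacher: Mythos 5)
Your proposal is correct, and it rests on the same semantic observation as the paper's proof---a trace can witness $\Psi_1 \chop \Psi_3$ and $\Psi_2 \chop \Psi_3$ via \emph{different} split points, while $(\Psi_1 \land \Psi_2)\chop\Psi_3$ demands a common one---but the instantiation is genuinely different. The paper leaves $\Psi_1,\Psi_2,\Psi_3$ and $\Gamma$ as free recursion variables $X_1,X_2,X_3,X_0$ and exhibits one hand-crafted valuation $\mathbb{V}$ over four distinct states under which the premise inclusions hold but the conclusion inclusion fails; this refutes the pointwise-in-$\mathbb{V}$ local soundness property that the appendix establishes for all the other rules, though the premise sequents of that instance are not themselves valid (they fail under other valuations). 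Your counterexample instead uses the closed formulas $\Psi_1 := Id$, $\Psi_2 := Id \chop Id$, $\Psi_3 := true$, $\Gamma := \{Id \chop Id\}$, so no valuation is involved: the premises are genuinely valid sequents and the conclusion is genuinely invalid (its succedent denotes $\varnothing$ because $\llbracket Id\rrbracket$ and $\llbracket Id \chop Id\rrbracket$ contain only traces of lengths two and three, respectively, while the antecedent denotes the nonempty set of constant length-three traces). This buys a slightly stronger and more self-contained refutation---it directly contradicts the statement ``only valid sequents are derivable'' without any appeal to free recursion variables---at the cost of relying on the length bookkeeping of $Id$ and the fact that $\llbracket true\rrbracket = State^+$, both of which you verify correctly. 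Your three checks (the general fact $\llbracket\Phi\rrbracket \subseteq \llbracket\Phi\chop true\rrbracket$ via the empty continuation, the membership of $s\cdot s\cdot s$ in $\llbracket Id \chop true\rrbracket$, and the emptiness of the conjunction) are all sound.
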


\begin{proof}
  Consider recursive variables $(X_i)_{0 \leq i \leq 3}$ and $\Gamma := X_0$, $\Psi_1 := X_1$, $\Psi_2 := X_2$ and $\Psi_3 := X_3$. Assume the following valuation $\mathbb{V}$ with
  \begin{align*}
  &\mathbb{V}(X_0) := \{[\sigma_1, \sigma_2, \sigma_3, \sigma_4]\} &&
  \mathbb{V}(X_1) :=\{[\sigma_1], [\sigma_1, \sigma_2]\} \\
  &\mathbb{V}(X_2) := \{[\sigma_1, \sigma_2], [\sigma_1, \sigma_2, \sigma_3]\}
  &&\mathbb{V}(X_3) := \{[\sigma_1, \sigma_2, \sigma_3, \sigma_4], [\sigma_3, \sigma_4]\}
  \end{align*}
  where $(\sigma_i)_{1 \leq i \leq 4}$ are distinct states. This instantiation satisfies both premises, but not the conclusion of the rule, as $\llbracket X_1 \cap X_2\rrbracket_{\mathbb{V}} = \{[\sigma_1, \sigma_2]\}$, which results in the empty trace set when chopped together with $\mathbb{V}(X_3)$.
\end{proof}

\subsection{Alternative Fixed Point Induction Rule}
\label{sec:altern-fixp-induct}

\begin{figure}[h]
  \begin{center}
    \begin{align*} 
      &\begin{prooftree}
        \hypo{\xi \s P_\Gamma \vdash I}
        \hypo{\xi, (X|_{I}, \Psi) \s I, \Phi \vdash \Psi}
        \infer[left label=\texttt{FPI-ALT}]2{\xi \s \Gamma, \mu X. \Phi \vdash \Psi, \Delta}
      \end{prooftree}
    \end{align*}
  \end{center}
  \begin{center}\vspace*{-2em}
  \end{center}
  \caption{Alternative fixed point induction rule}
  \label{fig:fpadd}
\end{figure}

This rule requires $\xi$ to accept not only recursion variables, but
arbitrary fixed point formulas as its third triple composite. This
makes the calculus even more general, covering a wider range of
derivable sequents. A exemplary sequent that is derivable with
\texttt{FPI-ALT}, but \textit{not} with \texttt{FPI} could be
$$
P_\Gamma^2, \Phi_m \vdash true \chop x = 1
$$

\subsection{Theorems Needed in the Proof of Theorem~\ref{th:s1}}

\begin{theorem}[Partial Distributivity of Chop]
  \label{assocDC}
  For any trace formulas $\Phi_1, \Phi_2$ and $\Phi_3$ and any
  valuation $\mathbb{V}$, it holds~that
  $$\llbracket(\Phi_1 \lor \Phi_2) \chop \Phi_3\rrbracket_{\mathbb{V}}
  = \llbracket\Phi_1 \chop \Phi_3 \lor \Phi_2 \chop
  \Phi_3\rrbracket_{\mathbb{V}}$$ \text{\noindent and
  }
  $$\llbracket(\Phi_1 \land \Phi_2) \chop
  \Phi_3\rrbracket_{\mathbb{V}} \subseteq \llbracket\Phi_1 \chop \Phi_3 \land
  \Phi_2 \chop \Phi_3\rrbracket_{\mathbb{V}}$$
\end{theorem}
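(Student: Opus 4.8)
The plan is to reduce both statements directly to the defining equations for chop, disjunction, and conjunction in \Cref{fig:semantics-formulas}, reasoning pointwise about traces. For the first (the equality), I would start by unfolding the left-hand side:
\begin{align*}
  \llbracket(\Phi_1 \lor \Phi_2) \chop \Phi_3\rrbracket_{\mathbb{V}}
  = \{\sigma \cdot s \cdot \sigma' \mid \sigma \cdot s \in \llbracket\Phi_1\rrbracket_{\mathbb{V}} \cup \llbracket\Phi_2\rrbracket_{\mathbb{V}} \land s \cdot \sigma' \in \llbracket\Phi_3\rrbracket_{\mathbb{V}}\}\enspace.
\end{align*}
The only nontrivial step is to push the condition $s \cdot \sigma' \in \llbracket\Phi_3\rrbracket_{\mathbb{V}}$ across the union in the first conjunct. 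Since set membership in a union is a disjunction, and conjunction distributes over disjunction in the meta-logic, the defining comprehension splits into a union of two comprehensions, which are by definition exactly $\llbracket\Phi_1 \chop \Phi_3\rrbracket_{\mathbb{V}}$ and $\llbracket\Phi_2 \chop \Phi_3\rrbracket_{\mathbb{V}}$. Their union is $\llbracket\Phi_1 \chop \Phi_3 \lor \Phi_2 \chop \Phi_3\rrbracket_{\mathbb{V}}$, giving the desired equality. The chain of equalities is reversible, so both inclusions come for free.

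For the second statement I would take an arbitrary trace $\tau$ in the left-hand side and unfold it to $\tau = \sigma \cdot s \cdot \sigma'$ with $\sigma \cdot s \in \llbracket\Phi_1\rrbracket_{\mathbb{V}} \cap \llbracket\Phi_2\rrbracket_{\mathbb{V}}$ and $s \cdot \sigma' \in \llbracket\Phi_3\rrbracket_{\mathbb{V}}$. The crucial observation is that this is a \emph{single} decomposition witnessing membership in both $\Phi_i \chop \Phi_3$ simultaneously: from $\sigma \cdot s \in \llbracket\Phi_1\rrbracket_{\mathbb{V}}$ together with $s \cdot \sigma' \in \llbracket\Phi_3\rrbracket_{\mathbb{V}}$ we get $\tau \in \llbracket\Phi_1 \chop \Phi_3\rrbracket_{\mathbb{V}}$, and symmetrically $\tau \in \llbracket\Phi_2 \chop \Phi_3\rrbracket_{\mathbb{V}}$. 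Hence $\tau$ lies in the intersection, which is $\llbracket\Phi_1 \chop \Phi_3 \land \Phi_2 \chop \Phi_3\rrbracket_{\mathbb{V}}$, establishing the inclusion.

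I expect the main subtlety to be not in proving the inclusion but in understanding \emph{why} it cannot be strengthened to an equality, which is also what justifies the asymmetric statement of the theorem. A trace in the right-hand side of the second claim need only admit \emph{some} split point for $\Phi_1 \chop \Phi_3$ and \emph{some} (possibly different) split point for $\Phi_2 \chop \Phi_3$; there is no guarantee of a common split state $s$ at which $\Phi_1$ and $\Phi_2$ both hold on the shared prefix. The left-hand side, by contrast, forces exactly such a synchronized split. I would therefore present the equality proof as a clean pointwise double-inclusion and the conjunctive case as a one-directional pointwise chase, taking care to name the split state $s$ explicitly so that the single-witness argument is transparent; no appeal to monotonicity, fixed points, or induction is needed, since everything follows from the base semantic clauses alone.
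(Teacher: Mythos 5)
Your proof is correct and follows essentially the same route as the paper: unfold the chop semantics into the set comprehension, distribute the union for the equality, and observe that the single shared split state witnesses membership in both chops for the one-directional inclusion. Your added remark on why the conjunctive case cannot be an equality (no guarantee of a common split point from the right-hand side) is a nice observation that the paper does not spell out, but the core argument is identical.
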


\begin{proof}
  Let us assume trace formulas $\Phi_1, \Phi_2$ and $\Phi_3$ are
  arbitrary, but fixed. Let us also assume valuation $\mathbb{V}$ is
  arbitrary, but fixed. Then also

  \begin{align*}
    \begin{split}
      & \llbracket(\Phi_1 \lor \Phi_2) \chop \Phi_3\rrbracket_{\mathbb{V}} \\
      & = \{ \sigma \cdot s \cdot \sigma' \text{ | } \sigma \cdot s \in \llbracket\Phi_1\rrbracket_{\mathbb{V}} \cup \llbracket\Phi_2\rrbracket_{\mathbb{V}} \land s \cdot \sigma' \in \llbracket \Phi_3\rrbracket_{\mathbb{V}} \} \\
      & = \{ \sigma \cdot s \cdot \sigma' \text{ | } \sigma \cdot s \in \llbracket\Phi_1\rrbracket_{\mathbb{V}} \land s \cdot \sigma' \in \llbracket \Phi_3\rrbracket_{\mathbb{V}} \} \\
      & \hspace{5mm} \cup \{ \sigma \cdot s \cdot \sigma' \text{ | } \sigma \cdot s \in \llbracket\Phi_2\rrbracket_{\mathbb{V}} \land s \cdot \sigma' \in \llbracket \Phi_3\rrbracket_{\mathbb{V}} \} \\
      & = \llbracket\Phi_1 \chop \Phi_3 \lor \Phi_2 \chop \Phi_3\rrbracket_{\mathbb{V}}
    \end{split}
  \end{align*}

  \begin{align*}
    \begin{split}
      & \llbracket(\Phi_1 \land \Phi_2) \chop \Phi_3\rrbracket_{\mathbb{V}} \\
      & = \{ \sigma \cdot s \cdot \sigma' \text{ | } \sigma \cdot s \in \llbracket\Phi_1\rrbracket_{\mathbb{V}} \cap \llbracket\Phi_2\rrbracket_{\mathbb{V}} \land s \cdot \sigma' \in \llbracket \Phi_3\rrbracket_{\mathbb{V}} \} \\
      & \subseteq \{ \sigma \cdot s \cdot \sigma' \text{ | } \sigma \cdot s \in \llbracket\Phi_1\rrbracket_{\mathbb{V}} \land s \cdot \sigma' \in \llbracket \Phi_3\rrbracket_{\mathbb{V}} \}  \\
      & \hspace{5mm} \cap \{ \sigma \cdot s \cdot \sigma' \text{ | } \sigma \cdot s \in \llbracket\Phi_2\rrbracket_{\mathbb{V}} \land s \cdot \sigma' \in \llbracket \Phi_3\rrbracket_{\mathbb{V}} \} \\
      & = \llbracket\Phi_1 \chop \Phi_3 \land \Phi_2 \chop \Phi_3\rrbracket_{\mathbb{V}}
    \end{split}
  \end{align*}
\end{proof}

\begin{theorem}[Equivalence of Repetitions inside Fixed Point]
  \label{repeat}
  For every recursion variable $X$, trace formula $\Phi$, valuation
  $\mathbb{V}$ and every positive natural number $n \geq 1$, it holds
  that
  $\llbracket\mu X. \Phi\rrbracket_{\mathbb{V}} = \llbracket\mu
  X. repeat_n(\Phi)\rrbracket_{\mathbb{V}}$.
\end{theorem}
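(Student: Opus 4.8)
The plan is to reduce the claim to a purely order-theoretic fact about least fixed points of iterated monotone maps. Throughout, fix $X$, $\Phi$, $\mathbb{V}$ and $n \geq 1$, and write $f(\gamma) := \llbracket\Phi\rrbracket_{\mathbb{V}[X \mapsto \gamma]}$ for the unfolding operator on the complete lattice $(P(State^+), \subseteq)$. As already noted after \Cref{fig:semantics-formulas}, $f$ is monotone, so by Knaster--Tarski both $\llbracket\mu X. \Phi\rrbracket_{\mathbb{V}} = \mathrm{lfp}(f)$ and the corresponding least fixed point for $repeat_n(\Phi)$ exist and can be characterized as least pre-fixed points.

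First I would establish, using the standard substitution lemma $\llbracket\Psi[\Theta/X]\rrbracket_{\mathbb{V}} = \llbracket\Psi\rrbracket_{\mathbb{V}[X \mapsto \llbracket\Theta\rrbracket_{\mathbb{V}}]}$ (provable by structural induction on $\Psi$), that the operator attached to $repeat_n(\Phi)$ is exactly the $(n+1)$-fold iterate of $f$. Setting $r_i(\gamma) := \llbracket repeat_i(\Phi)\rrbracket_{\mathbb{V}[X \mapsto \gamma]}$, the base case $r_0 = f$ is immediate from $repeat_0(\Phi) = \Phi$, and for the step the definition $repeat_i(\Phi) = \Phi[repeat_{i-1}(\Phi)/X]$ together with the substitution lemma gives $r_i(\gamma) = \llbracket\Phi\rrbracket_{\mathbb{V}[X \mapsto r_{i-1}(\gamma)]} = f(r_{i-1}(\gamma))$. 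By induction $r_n = f^{n+1}$, so the theorem reduces to the identity $\mathrm{lfp}(f) = \mathrm{lfp}(f^{n+1})$.

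It then remains to prove the order-theoretic lemma that $\mathrm{lfp}(f) = \mathrm{lfp}(f^k)$ for every monotone $f$ and every $k \geq 1$ (here $k = n+1$). Writing $\mu := \mathrm{lfp}(f)$ and $\nu := \mathrm{lfp}(f^k)$, one inclusion is easy: $f(\mu) = \mu$ yields $f^k(\mu) = \mu$, so $\mu$ is a fixed point of $f^k$ and hence $\nu \subseteq \mu$ by minimality of $\nu$. For the converse I would observe that $f(\nu)$ is again a fixed point of $f^k$, since $f^k(f(\nu)) = f(f^k(\nu)) = f(\nu)$; minimality of $\nu$ then forces $\nu \subseteq f(\nu)$, and monotonicity yields the chain $\nu \subseteq f(\nu) \subseteq \cdots \subseteq f^k(\nu) = \nu$, which must collapse to a chain of equalities. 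In particular $f(\nu) = \nu$, so $\nu$ is a fixed point of $f$, giving $\mu \subseteq \nu$. Combining the two inclusions gives $\mu = \nu$, which is the desired statement.

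The routine part is the substitution lemma; the only place demanding care is its capture condition, since in $repeat_i(\Phi)$ the recursion variable $X$ deliberately remains free in the innermost copy so that it is later bound by the outer $\mu X$. Here I would rely on the paper's unique-naming convention for recursion variables to guarantee that the nested substitutions introduce no unintended binding. The genuine conceptual crux, however, is the collapsing-chain observation that $\mathrm{lfp}(f^k)$ is automatically a fixed point of $f$ itself; once that is seen, the equality of the two least fixed points is immediate and the rest is bookkeeping.
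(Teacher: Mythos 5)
Your proof is correct, but it takes a genuinely different route from the paper's. The paper argues from below, Kleene-style: it defines the two approximation sequences $\gamma_1^0=\varnothing$, $\gamma_1^{i+1}=\llbracket\Phi\rrbracket_{\mathbb{V}[X\mapsto\gamma_1^i]}$ and $\gamma_2^0=\varnothing$, $\gamma_2^{i+1}=\llbracket repeat_n(\Phi)\rrbracket_{\mathbb{V}[X\mapsto\gamma_2^i]}$, shows by induction that the second sequence is a subsequence of the first, and concludes that both converge to the same least fixed point ``after possibly infinitely many steps.'' You instead reduce everything to the purely order-theoretic identity $\mathrm{lfp}(f)=\mathrm{lfp}(f^k)$ for a monotone $f$ on a complete lattice, established directly from Knaster--Tarski minimality via the collapsing chain $\nu\subseteq f(\nu)\subseteq\cdots\subseteq f^k(\nu)=\nu$. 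Your version buys two things: it sidesteps the paper's informal appeal to (possibly transfinite) convergence of the iteration sequences, which strictly speaking needs either ordinal-indexed iteration or a continuity argument that the paper never supplies; and it isolates the semantic content in a single substitution lemma, after which the fixed-point reasoning is valuation-free. It also gets the arithmetic right where the paper is slightly off: since $repeat_n(\Phi)=\Phi[repeat_{n-1}(\Phi)/X]$ with $repeat_0(\Phi)=\Phi$, the induced operator is $f^{n+1}$, whereas the paper's induction claims $\gamma_1^{n*i}=\gamma_2^i$ (rather than $\gamma_1^{(n+1)*i}=\gamma_2^i$), an off-by-one that does not affect the conclusion but is a genuine indexing slip. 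What the paper's approach buys in exchange is uniformity: the same $\gamma$-sequence technique is reused verbatim in its proofs of Theorems~\ref{fpindproof} and~\ref{th:1}, where the two operators being compared are no longer iterates of one another and your lattice-theoretic shortcut would not apply.
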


\begin{proof}
  Let recursion variable $X$, trace formula $\Phi$, valuation
  $\mathbb{V}$ and $n \geq 1$ be arbitrary, but fixed. We define the
  following $\gamma$-sequences:
  $$
  (\gamma_1^i, \gamma_2^i)_{i \geq 0} \text{ s.t. } (\gamma_1^0, \gamma_2^0) = (\varnothing, \varnothing) \land \gamma_1^{i+1} = \llbracket\Phi\rrbracket_{\mathbb{V}[X \mapsto \gamma_1^i]} \land \gamma_2^{i+1} = \llbracket repeat_n(\Phi)\rrbracket_{\mathbb{V}[X \mapsto \gamma_2^i]}
  $$

  We will now prove $\gamma_1^{n*i} = \gamma_2^{i}$ for every
  $i \geq 0$ via natural induction over $i$. First, let $i = 0$. Then
  trivially $\gamma_1^0 = \varnothing = \gamma_2^0$. For the induction
  step, we assume $\gamma_1^{n*i} = \gamma_2^{i}$ for some fixed
  $i \geq 0$. Then
  \begin{align*}
    \begin{split}
      &\gamma_1^{n*(i+1)} = \gamma_1^{n*i+n} = \llbracket\Phi\rrbracket_{\mathbb{V}[X \mapsto \gamma_1^{n*i+(n-1)}]} = \llbracket\Phi\rrbracket_{\mathbb{V}[X \mapsto \llbracket\Phi\rrbracket_{\dots\mathbb{V}[X \mapsto \gamma_1^{n*i}]}]}
      \\ &= \llbracket repeat_n(\Phi)\rrbracket_{\mathbb{V}[X \mapsto \gamma_1^{n*i}]} = \llbracket repeat_n(\Phi)\rrbracket_{\mathbb{V}[X \mapsto \gamma_2^{i}]} = \gamma_2^{i+1}
    \end{split}
  \end{align*}
  Due to this result and the monotonicity of the function, we know
  that both sequences must, after possibly infinitely many steps, at
  some point have reached their least fixed points. Hence,
  $\llbracket\mu X. \Phi\rrbracket_{\mathbb{V}} = \llbracket\mu
  X. repeat_n(\Phi)\rrbracket_{\mathbb{V}}$, which is what needed to
  be shown in the first~place.
\end{proof}

\subsection{Proof of \Cref{th:s1} (Soundness of the Base Calculus)}
\label{sec:soundness-base}

\begin{proof}
  To prove that only valid sequents are derivable, we establish that
  all calculus rules are \textit{locally sound}. A calculus rule is
  called \textit{locally sound} if the conclusion is a valid sequent
  assuming all premises are valid sequents.  \smallskip

  \par{(\texttt{CASE}).} Let us assume
  $\llbracket \bigwedge \Gamma \land p\rrbracket_{\mathbb{V}}
  \subseteq \llbracket \bigvee \Delta\rrbracket_{\mathbb{V}} \text{
    and } \llbracket \bigwedge \Gamma \land
  \overline{p}\rrbracket_{\mathbb{V}} \subseteq \llbracket \bigvee
  \Delta\rrbracket_{\mathbb{V}}$.  To prove
  $\llbracket \bigwedge \Gamma\rrbracket_{\mathbb{V}} \subseteq
  \llbracket \bigvee \Delta\rrbracket_{\mathbb{V}}$, we perform a case
  distinction over predicate $p$. If we assume that $p$ is satisfied
  in the antecedent, then the first premise trivially concludes the
  case. In the case that the complement $\overline{p}$ is satisfied in
  the antecedent, the second premise trivially infers
  the~conclusion. \vspace{1em}

  \par{(\texttt{PRED}).} Let us assume
  $\llbracket \bigwedge P_\Gamma\rrbracket_{\mathbb{V}} \subseteq
  \llbracket q\rrbracket_{\mathbb{V}} \text{ and } \llbracket
  \bigwedge \Gamma \land q\rrbracket_{\mathbb{V}} \subseteq \llbracket
  \bigvee \Delta\rrbracket_{\mathbb{V}}$. Then also
  $\llbracket \bigwedge \Gamma\rrbracket_{\mathbb{V}} = \llbracket
  \bigwedge \Gamma \land \bigwedge P_\Gamma\rrbracket_{\mathbb{V}}
  \subseteq \llbracket \bigwedge \Gamma \land q\rrbracket_{\mathbb{V}}
  \subseteq \llbracket\bigvee
  \Delta\rrbracket_{\mathbb{V}}$. \vspace{1em}

  \par{(\texttt{CH-PREDL}).} Let us assume
  $\llbracket \bigwedge \Gamma \land p \land p \chop
  \Phi\rrbracket_{\mathbb{V}} \subseteq \llbracket\bigvee
  \Delta\rrbracket_{\mathbb{V}}$. Then also
  \begin{align*}
    \begin{split}
      & \llbracket \bigwedge \Gamma \land p \chop \Phi\rrbracket_{\mathbb{V}} = \llbracket \bigwedge \Gamma\rrbracket_{\mathbb{V}} \cap \{ \sigma \cdot s \cdot \sigma' \text{ | } \sigma \cdot s \models p \land s \cdot \sigma' \in \llbracket\Phi\rrbracket_{\mathbb{V}}\} \\
      & = \llbracket \bigwedge \Gamma\rrbracket_{\mathbb{V}} \cap \{s' \cdot \sigma \text{ | } s' \models p\} \cap \{ \sigma \cdot s \cdot \sigma' \text{ | } \sigma \cdot s \models p \land s \cdot \sigma' \in \llbracket\Phi\rrbracket_{\mathbb{V}}\} \\
      & = \llbracket \bigwedge \Gamma \land p \land p \chop \Phi\rrbracket_{\mathbb{V}} \subseteq \llbracket\bigvee \Delta\rrbracket_{\mathbb{V}}
    \end{split}
  \end{align*} 

  \par{(\texttt{CH-PREDR}).} Let us assume
  $\llbracket \bigwedge \Gamma \land q\rrbracket_{\mathbb{V}}
  \subseteq \llbracket true \chop \Psi \lor \bigvee
  \Delta\rrbracket_{\mathbb{V}}$. Then also
  \begin{align*}
    \begin{split}
      & \llbracket \bigwedge \Gamma \land q\rrbracket_{\mathbb{V}} = \llbracket \bigwedge \Gamma \land q\rrbracket_{\mathbb{V}} \cap \llbracket q\rrbracket_{\mathbb{V}} \subseteq  (\llbracket true \chop \Psi\rrbracket_{\mathbb{V}} \cup \llbracket\bigvee \Delta\rrbracket_{\mathbb{V}}) \cap \llbracket q\rrbracket_{\mathbb{V}} \\
      & \subseteq (\llbracket true \chop \Psi\rrbracket_{\mathbb{V}} \cap \llbracket q\rrbracket_{\mathbb{V}}) \cup \llbracket\bigvee \Delta\rrbracket_{\mathbb{V}} \\
      & = (\{s \cdot \sigma \text{ | } s \cdot \sigma \in \llbracket true \chop \Psi\rrbracket_{\mathbb{V}}\} \cap \{s \cdot \sigma \text{ | } s \models q\}) \cup \llbracket\bigvee \Delta\rrbracket_{\mathbb{V}} \\
      & \subseteq \llbracket q \chop \Psi \lor \bigvee \Delta\rrbracket_{\mathbb{V}}
    \end{split}
  \end{align*} 

  \par{(\texttt{REL}).} Let us assume the side condition
  $R|_{P(\Gamma)} \subseteq R'$ holds, implying that
  $\llbracket R\rrbracket_\mathbb{V} \cap \llbracket \bigwedge
  P_\Gamma\rrbracket_{\mathbb{V}} \subseteq \llbracket R'
  \rrbracket_\mathbb{V}$. Based on this, we
  conclude
  \begin{align*}\llbracket \bigwedge \Gamma \land
    R\rrbracket_\mathbb{V} \subseteq \llbracket R\rrbracket_\mathbb{V}
    \cap \llbracket \bigwedge P(\Gamma)\rrbracket_{\mathbb{V}}
    \subseteq \llbracket R' \rrbracket_\mathbb{V} \subseteq \llbracket
    R' \lor \bigvee \Delta\rrbracket_\mathbb{V}
  \end{align*}

  \par{(\texttt{RVAR}).} Let $\xi$ be arbitrary, but fixed, such that
  $(X_1|_{I}, X_2) \in \xi$. As such,
  $\llbracket X_1 \land I\rrbracket_{\mathbb{V}} \subseteq \llbracket
  X_2\rrbracket_{\mathbb{V}}$. Let us assume
  $\llbracket \bigwedge P_\Gamma\rrbracket_{\mathbb{V}} \subseteq
  \llbracket I\rrbracket_{\mathbb{V}}$.  Then
  also
  \begin{align*}\llbracket \bigwedge \Gamma \land
    X_1\rrbracket_\mathbb{V} \subseteq \llbracket \bigwedge P_\Gamma
    \land X_1\rrbracket_\mathbb{V} \subseteq \llbracket I \land
    X_1\rrbracket_\mathbb{V} \subseteq \llbracket
    X_2\rrbracket_\mathbb{V} \subseteq \llbracket X_2 \lor \bigvee
    \Delta\rrbracket_\mathbb{V}
  \end{align*}

  \par{(\texttt{CH-ID}).} Let us assume
  $\llbracket \bigwedge P_\Gamma \land Id\rrbracket_\mathbb{V}
  \subseteq \llbracket\Psi_i\rrbracket_\mathbb{V}$ for all $i$ with
  $1 \leq i \leq n$ and
  $\llbracket\bigwedge P_\Gamma \land \Phi_2\rrbracket_\mathbb{V}
  \subseteq \llbracket \bigvee_{1 \leq i \leq n}
  \Psi_i'\rrbracket_\mathbb{V}$. We trivially know that for any
  $(s, s') \in Id$, it must hold that $s = s'$. As such,
  \begin{align*}
    \begin{split}
      & \llbracket\bigwedge \Gamma \land Id \chop \Phi_2\rrbracket_\mathbb{V} \subseteq \llbracket\bigwedge P_\Gamma\rrbracket_\mathbb{V} \cap (\llbracket Id\rrbracket_\mathbb{V} \chop \llbracket\Phi_2\rrbracket_\mathbb{V}) \\
      & = \{ s \cdot s' \cdot \sigma \hspace{1mm} | \hspace{1mm} s \cdot s' \in \llbracket \bigwedge P_\Gamma \land Id\rrbracket_\mathbb{V} \land s' \cdot \sigma \in \llbracket\Phi_2\rrbracket_\mathbb{V} \} 
      \\
      & = \{ s \cdot s' \cdot \sigma \hspace{1mm} | \hspace{1mm} s \cdot s' \in \llbracket \bigwedge P_\Gamma \land Id\rrbracket_\mathbb{V} \land s' \cdot \sigma \in \llbracket\bigwedge P_\Gamma \land \Phi_2\rrbracket_\mathbb{V} \} \\
      & 
        \subseteq \bigcap_{1 \leq i \leq n} \{ s \cdot s' \cdot \sigma \hspace{1mm} | \hspace{1mm} s \cdot s' \in \llbracket\Psi_i\rrbracket_\mathbb{V} \land s' \cdot \sigma \in \llbracket \bigvee_{1 \leq i \leq n} \Psi_i'\rrbracket_\mathbb{V} \} \\
      &  
        = \bigcup_{1 \leq i \leq n} \{ s \cdot s' \cdot \sigma \hspace{1mm} | \hspace{1mm} s \cdot s' \in \llbracket\bigwedge_{1 \leq i \leq n} \Psi_i\rrbracket_\mathbb{V} \land s' \cdot \sigma \in \llbracket \Psi_i'\rrbracket_\mathbb{V} \} \\
      &
\subseteq \bigcup_{1 \leq i \leq n}\{ s \cdot s' \cdot \sigma \hspace{1mm} | \hspace{1mm} s \cdot s' \in \llbracket\Psi_i\rrbracket_\mathbb{V} \land s' \cdot \sigma \in \llbracket\Psi_i'\rrbracket_\mathbb{V} \} 
        \subseteq \llbracket \bigvee_{1 \leq i \leq n} \Psi_i \chop \Psi_i' \lor \bigvee \Delta\rrbracket_\mathbb{V}
    \end{split}
  \end{align*} 

  \par{(\texttt{CH-UPD}).} Let us assume
  $\llbracket \bigwedge P_\Gamma \land Sb_x^a\rrbracket_\mathbb{V}
  \subseteq \llbracket\Psi_i\rrbracket_\mathbb{V}$ for all i with
  $1 \leq i \leq n$ and
  $\llbracket\bigwedge spc_{x := a}(P_\Gamma) \land
  \Phi_2\rrbracket_\mathbb{V} \subseteq \llbracket\bigvee_{1 \leq i
    \leq n} \Psi_i'\rrbracket_\mathbb{V}$. We know that for any
  $(s, s') \in Sb_x^a$ with $s \models P_\Gamma$ for some predicate
  set $P_\Gamma$, it is guaranteed that
  $s' \models spc_{x := a}(P_\Gamma)$, which is based on the principle
  of strongest postconditions \cite{strongpost}. As such,
  \begin{align*}
    \begin{split}
      & \llbracket\bigwedge \Gamma \land Sb_x^a \chop \Phi_2\rrbracket_\mathbb{V} \subseteq \llbracket\bigwedge P_\Gamma\rrbracket_\mathbb{V} \cap (\llbracket Sb_x^a\rrbracket_\mathbb{V} \chop \llbracket\Phi_2\rrbracket_\mathbb{V}) \\
      & = \{ s \cdot s' \cdot \sigma \hspace{1mm} | \hspace{1mm} s \cdot s' \in \llbracket \bigwedge P_\Gamma \land Sb_x^a\rrbracket_\mathbb{V} \land s' \cdot \sigma \in \llbracket\Phi_2\rrbracket_\mathbb{V} \} \\
      & = \{ s \cdot s' \cdot \sigma \hspace{1mm} | \hspace{1mm} s \cdot s' \in \llbracket \bigwedge P_\Gamma \land Sb_x^a\rrbracket_\mathbb{V} \land s' \cdot \sigma \in \llbracket\bigwedge spc_{x := a}(P_\Gamma) \land \Phi_2\rrbracket_\mathbb{V} \} \\
      & \subseteq \bigcap_{1 \leq i \leq n} \{ s \cdot s' \cdot \sigma \hspace{1mm} | \hspace{1mm} s \cdot s' \in \llbracket\Psi_i\rrbracket_\mathbb{V} \land s' \cdot \sigma \in \llbracket \bigvee_{1 \leq i \leq n} \Psi_i'\rrbracket_\mathbb{V} \} \\
      & 
        = \bigcup_{1 \leq i \leq n} \{ s \cdot s' \cdot \sigma \hspace{1mm} | \hspace{1mm} s \cdot s' \in \llbracket\bigwedge_{1 \leq i \leq n} \Psi_i\rrbracket_\mathbb{V} \land s' \cdot \sigma \in \llbracket \Psi_i'\rrbracket_\mathbb{V} \} \\
      &
\subseteq \bigcup_{1 \leq i \leq n}\{ s \cdot s' \cdot \sigma \hspace{1mm} | \hspace{1mm} s \cdot s' \in \llbracket\Psi_i\rrbracket_\mathbb{V} \land s' \cdot \sigma \in \llbracket\Psi_i'\rrbracket_\mathbb{V} \} 
        \subseteq \llbracket\bigvee_{1 \leq i \leq n} \Psi_i \chop \Psi_i' \lor \bigvee \Delta\rrbracket_\mathbb{V}
    \end{split}
  \end{align*} 

  \par{(\texttt{END-ID}).} Let us assume
  $\llbracket \bigwedge P_\Gamma \land Id\rrbracket_\mathbb{V}
  \subseteq \llbracket\Psi_1\rrbracket_\mathbb{V} \text{ and }
  \llbracket\bigwedge P_\Gamma\rrbracket_\mathbb{V} \subseteq
  \llbracket \Psi_2\rrbracket_\mathbb{V}$. We trivially know that for
  any $(s, s') \in Id$, it must hold that $s = s'$. As such,
  \begin{align*}
    \begin{split}
      & \llbracket\bigwedge \Gamma \land Id\rrbracket_\mathbb{V} \subseteq \llbracket\bigwedge P_\Gamma \land Id\rrbracket_\mathbb{V} = \{ s \cdot s' \text{ | } s \cdot s' \in \llbracket Id\rrbracket_{\mathbb{V}} \land s \models \bigwedge P_\Gamma\} \\
      & = \{ s \cdot s' \text{ | } s \cdot s' \in \llbracket Id\rrbracket_{\mathbb{V}} \land s \models \bigwedge P_\Gamma \land s' \models \bigwedge P_\Gamma\} \\
      & \subseteq \{ s \cdot s' \text{ | } s \cdot s' \in \llbracket \Psi_1\rrbracket_\mathbb{V} \land s' \models \bigwedge P_\Gamma\} \subseteq \llbracket \Psi_1 \chop \Psi_2 \rrbracket_\mathbb{V}
    \end{split}
  \end{align*}

  \par{(\texttt{END-UPD}).} We assume
  $\llbracket \bigwedge P_\Gamma \land Sb_x^a\rrbracket_\mathbb{V}
  \subseteq \llbracket\Psi_1\rrbracket_\mathbb{V} \text{ and }
  \llbracket\bigwedge spc_{x := a}(P_\Gamma)\rrbracket_\mathbb{V}
  \subseteq \llbracket \Psi_2\rrbracket_\mathbb{V}$. We know that for
  any $(s, s') \in Sb_x^a$ with $s \models P_\Gamma$ for some
  predicate set $P_\Gamma$, it is guaranteed that
  $s' \models spc_{x := a}(P_\Gamma)$, which is based on the principle
  of strongest postconditions \cite{strongpost}. As such,
  \begin{align*}
    \begin{split}
      & \llbracket\bigwedge \Gamma \land Sb_x^a\rrbracket_\mathbb{V} \subseteq \llbracket\bigwedge P_\Gamma \land Sb_x^a\rrbracket_\mathbb{V} = \{ s \cdot s' \text{ | } s \cdot s' \in \llbracket Sb_x^a\rrbracket_{\mathbb{V}} \land s \models \bigwedge P_\Gamma\} \\
      & = \{ s \cdot s' \text{ | } s \cdot s' \in \llbracket Sb_x^a\rrbracket_{\mathbb{V}} \land s \models \bigwedge P_\Gamma \land s' \models \bigwedge spc_{x := a}(P_\Gamma)\} \\
      & \subseteq \{ s \cdot s' \text{ | } s \cdot s' \in \llbracket \Psi_1\rrbracket_\mathbb{V} \land s' \models \bigwedge spc_{x := a}(P_\Gamma)\} \subseteq \llbracket \Psi_1 \chop \Psi_2 \rrbracket_\mathbb{V}
    \end{split}
  \end{align*}

  \par{(\texttt{CH-$\lor$L}).} Assume
  $\llbracket \bigwedge \Gamma \land \Phi_1 \chop
  \Phi_3\rrbracket_\mathbb{V} \subseteq \llbracket\bigvee
  \Delta\rrbracket_\mathbb{V}$ and
  $\llbracket \bigwedge \Gamma \land \Phi_2 \chop
  \Phi_3\rrbracket_\mathbb{V} \subseteq \llbracket\bigvee
  \Delta\rrbracket_\mathbb{V}$. Using \Cref{assocDC} where marked with
  $^*$, we then infer
  \begin{align*}
    \begin{split}
      & \llbracket
      \bigwedge \Gamma \land (\Phi_1 \lor \Phi_2) \chop
      \Phi_3\rrbracket_\mathbb{V} \stackrel{*}{=} \llbracket \bigwedge
      \Gamma \land (\Phi_1 \chop \Phi_3) \lor (\Phi_2 \chop
        \Phi_3))\rrbracket_\mathbb{V} \\
      & = \llbracket \bigwedge
      \Gamma\rrbracket_\mathbb{V} \cap (\llbracket(\Phi_1 \chop
      \Phi_3)\rrbracket_\mathbb{V} \cup \llbracket(\Phi_2 \chop
        \Phi_3)\rrbracket_\mathbb{V}) \\
      & = (\llbracket \bigwedge
      \Gamma\rrbracket_\mathbb{V} \cap \llbracket\Phi_1 \chop
      \Phi_3\rrbracket_\mathbb{V}) \cup (\llbracket \bigwedge
      \Gamma\rrbracket_\mathbb{V} \cap \llbracket \Phi_2 \chop
        \Phi_3\rrbracket_\mathbb{V}) \\
      & = (\llbracket \bigwedge \Gamma
      \land (\Phi_1 \chop \Phi_3)\rrbracket_\mathbb{V}) \cup
      (\llbracket \bigwedge \Gamma \land (\Phi_2 \chop
      \Phi_3)\rrbracket_\mathbb{V}) \subseteq \llbracket\bigvee
        \Delta\rrbracket_\mathbb{V}
    \end{split}
  \end{align*}

  \par{(\texttt{CH-$\land$L}).} Let us assume
  $\llbracket \bigwedge \Gamma \land \Phi_1 \chop \Phi_3 \land \Phi_2
  \chop \Phi_3\rrbracket_\mathbb{V} \subseteq \llbracket\bigvee
  \Delta\rrbracket_\mathbb{V}$. Using \Cref{assocDC}, we then
  infer
  \begin{align*}\llbracket \bigwedge \Gamma \land (\Phi_1 \land
    \Phi_2) \chop \Phi_3\rrbracket_\mathbb{V} \stackrel{*}{\subseteq}
    \llbracket \bigwedge \Gamma \land \Phi_1 \chop \Phi_3 \land \Phi_2
    \chop \Phi_3\rrbracket_\mathbb{V} \subseteq \llbracket\bigvee
    \Delta\rrbracket_\mathbb{V}
  \end{align*}

  \par{(\texttt{CH-$\lor$R}).} Let us assume
  $\llbracket \bigwedge \Gamma\rrbracket_\mathbb{V} \subseteq
  \llbracket(\Psi_1 \chop \Psi_3) \lor (\Psi_2 \chop \Psi_3) \lor
  \bigvee \Delta\rrbracket_\mathbb{V}$. Using \Cref{assocDC}, we then
  infer
  \begin{align*}\llbracket \bigwedge \Gamma\rrbracket_\mathbb{V}
    \subseteq \llbracket(\Psi_1 \chop \Psi_3) \lor (\Psi_2 \chop
    \Psi_3) \lor \bigvee \Delta\rrbracket_\mathbb{V} \stackrel{*}{=}
    \llbracket(\Psi_1 \lor \Psi_2)\chop \Psi_3 \lor \bigvee
    \Delta\rrbracket_\mathbb{V}
  \end{align*}

  \par{(\texttt{ARB1}).} Let us assume
  $\llbracket \bigwedge \Gamma\rrbracket_{\mathbb{V}} \subseteq
  \llbracket \Psi \lor \bigvee \Delta\rrbracket_{\mathbb{V}}$. We then
  conclude
  \begin{align*}
    \begin{split}
      & \llbracket \bigwedge \Gamma\rrbracket_{\mathbb{V}} \subseteq \llbracket \Psi \lor \bigvee \Delta\rrbracket_{\mathbb{V}} = \{ s \cdot \sigma' \text{ | } s \cdot \sigma' \in \llbracket \Psi\rrbracket_{\mathbb{V}}\} \cup \llbracket\bigvee \Delta\rrbracket_{\mathbb{V}} \\
      & \subseteq \{ \sigma \cdot s \cdot \sigma' \text{ | } \sigma \cdot s \in \llbracket true\rrbracket_{\mathbb{V}} \land s \cdot \sigma' \in \llbracket \Psi\rrbracket_{\mathbb{V}}\} \cup \llbracket\bigvee \Delta\rrbracket_{\mathbb{V}} \\
      & = \llbracket true \chop \Psi \lor \bigvee \Delta\rrbracket_{\mathbb{V}}
    \end{split}
  \end{align*}

  \par{(\texttt{ARB2}).} Let us assume
  $\llbracket \bigwedge \Gamma \land \Phi_1 \chop
  \Phi_2\rrbracket_{\mathbb{V}} \subseteq \llbracket \Phi_1 \chop true
  \chop \Psi \lor \bigvee \Delta\rrbracket_{\mathbb{V}}$. Then
  \begin{align*}
    \begin{split}
      & \llbracket \bigwedge \Gamma \land \Phi_1 \chop \Phi_2\rrbracket_{\mathbb{V}} \subseteq \llbracket \Phi_1 \chop true \chop \Psi \lor \bigvee \Delta\rrbracket_{\mathbb{V}} \\
      & = \{ \sigma \cdot s \cdot \sigma' \text{ | } \sigma \cdot s \in \llbracket \Phi_1\rrbracket_{\mathbb{V}} \land s \cdot \sigma' \in \llbracket true \chop \Psi\rrbracket_{\mathbb{V}}\} \cup \llbracket\bigvee \Delta\rrbracket_{\mathbb{V}} \\
      & \subseteq \{ \sigma \cdot s \cdot \sigma' \text{ | } \sigma \cdot s \in \llbracket true\rrbracket_{\mathbb{V}} \land s \cdot \sigma' \in \llbracket true \chop \Psi\rrbracket_{\mathbb{V}}\} \cup \llbracket\bigvee \Delta\rrbracket_{\mathbb{V}} \\
      & = \{ \sigma \cdot s \cdot \sigma' \text{ | } \sigma \cdot s \in \llbracket true\rrbracket_{\mathbb{V}} \land s \cdot \sigma' \in \llbracket \Psi\rrbracket_{\mathbb{V}}\} \cup \llbracket\bigvee \Delta\rrbracket_{\mathbb{V}} \\
      & = \llbracket true \chop \Psi \lor \bigvee \Delta\rrbracket_{\mathbb{V}}
    \end{split}
  \end{align*}

  \par{(\texttt{UNFL}).} Let us assume
  $\llbracket\bigwedge \Gamma \land \Phi[\mu
  X. \Phi/X]\rrbracket_\mathbb{V} \subseteq \llbracket\bigvee
  \Delta\rrbracket_\mathbb{V}$. Due to fixed point unfolding, we
  trivially also know that
  $\llbracket\Phi[\mu X. \Phi/X]\rrbracket_\mathbb{V} = \llbracket\mu
  X. \Phi\rrbracket_\mathbb{V}$. As such,
  $\llbracket\bigwedge \Gamma \land \mu X. \Phi\rrbracket_\mathbb{V} =
  \llbracket\bigwedge \Gamma \land \Phi[\mu
  X. \Phi/X]\rrbracket_\mathbb{V} \subseteq \llbracket\bigvee
  \Delta\rrbracket_\mathbb{V}$. \vspace{1em}

  \par{(\texttt{UNFR}).} Let us assume
  $\llbracket\bigwedge \Gamma\rrbracket_\mathbb{V} \subseteq
  \llbracket\Psi[\mu X. \Psi/X] \lor \bigvee
  \Delta\rrbracket_\mathbb{V}$. Due to fixed point unfolding, we
  trivially also know that
  $\llbracket\Psi[\mu X. \Psi/X]\rrbracket_\mathbb{V} = \llbracket\mu
  X. \Psi\rrbracket_\mathbb{V}$. As such,
  $\llbracket\bigwedge \Gamma\rrbracket_\mathbb{V} \subseteq
  \llbracket\Psi[\mu X. \Psi/X] \lor \bigvee
  \Delta\rrbracket_\mathbb{V} = \llbracket\mu X. \Psi \lor\bigvee
  \Delta\rrbracket_\mathbb{V}$. \vspace{1em}

  \par{(\texttt{LENL}).} Let us assume
  $\llbracket \bigwedge \Gamma \land \mu
  X. repeat_i(\Phi)\rrbracket_{\mathbb{V}} \subseteq \llbracket
  \bigvee \Delta\rrbracket_{\mathbb{V}}$. Using \Cref{repeat} (marked
  with $^\dag$, we now conclude that
  $\llbracket \bigwedge \Gamma \land \mu
  X. \Phi\rrbracket_{\mathbb{V}} \stackrel{\dag}{=} \llbracket
  \bigwedge \Gamma \land \mu X. repeat_i(\Phi)\rrbracket_{\mathbb{V}}
  \subseteq \llbracket \bigvee \Delta\rrbracket_{\mathbb{V}}$ for all
  $i \geq 1$. \vspace{1em}

  \par{(\texttt{LENR}).} Let us assume
  $\llbracket \bigwedge \Gamma\rrbracket_{\mathbb{V}} \subseteq
  \llbracket \mu X. repeat_i(\Psi) \lor \bigvee
  \Delta\rrbracket_{\mathbb{V}}$. Using \Cref{repeat}, we now conclude
  that
  $\llbracket \bigwedge \Gamma\rrbracket_{\mathbb{V}} \subseteq
  \llbracket \mu X. repeat_i(\Psi) \lor \bigvee
  \Delta\rrbracket_{\mathbb{V}} \stackrel{\dag}{=} \llbracket \mu
  X. \Psi \lor \bigvee \Delta\rrbracket_{\mathbb{V}}$ for all
  $i \geq 1$. \vspace{1em}

  \par{(\texttt{CH-UNFL}).} Let us assume
  $\llbracket\bigwedge \Gamma \land (\Phi[\mu X. \Phi/X]) \chop
  \Phi'\rrbracket_\mathbb{V} \subseteq \llbracket\bigvee
  \Delta\rrbracket_\mathbb{V}$. Due to fixed point unfolding, we
  trivially also know that
  $\llbracket\Phi[\mu X. \Phi/X]\rrbracket_\mathbb{V} = \llbracket\mu
  X. \Phi\rrbracket_\mathbb{V}$. As such, we can also conclude
  that
  \begin{align*}
    \begin{split}
      & \llbracket\bigwedge \Gamma \land
        (\mu X. \Phi) \chop \Phi'\rrbracket_\mathbb{V} \\
      & =
      \llbracket\bigwedge \Gamma\rrbracket_\mathbb{V} \cap \{ \sigma
      \cdot s \cdot \sigma' \text{ | } \sigma \cdot s \in \llbracket
      \mu X. \Phi\rrbracket_{\mathbb{V}} \land s \cdot \sigma' \in
        \llbracket \Phi'\rrbracket_\mathbb{V}\} \\
      & =
      \llbracket\bigwedge \Gamma\rrbracket_\mathbb{V} \cap \{ \sigma
      \cdot s \cdot \sigma' \text{ | } \sigma \cdot s \in \llbracket
      \Phi[\mu X. \Phi/X]\rrbracket_{\mathbb{V}} \land s \cdot \sigma'
        \in \llbracket \Phi'\rrbracket_\mathbb{V}\} \\
      & =
      \llbracket\bigwedge \Gamma \land (\Phi[\mu X. \Phi/X]) \chop
      \Phi'\rrbracket_\mathbb{V} \subseteq \llbracket\bigvee
        \Delta\rrbracket_\mathbb{V}
    \end{split}
  \end{align*}

  \par{(\texttt{CH-UNFR}).} Let us assume
  $\llbracket\bigwedge \Gamma\rrbracket_\mathbb{V} \subseteq
  \llbracket(\Psi[\mu X. \Psi/X]) \chop \Psi' \lor \bigvee
  \Delta\rrbracket_\mathbb{V}$. Due to fixed point unfolding, we
  trivially also know that
  $\llbracket\Psi[\mu X. \Psi/X]\rrbracket_\mathbb{V} = \llbracket\mu
  X. \Psi\rrbracket_\mathbb{V}$. As such, we can also conclude
  that
  \begin{align*}
    \begin{split}
      & \llbracket\bigwedge
      \Gamma\rrbracket_\mathbb{V} \subseteq \llbracket(\Psi[\mu
      X. \Psi/X]) \chop \Psi' \lor \bigvee \Delta\rrbracket_\mathbb{V}
      \\
      & = \{ \sigma \cdot s \cdot \sigma' \text{ | } \sigma \cdot s
      \in \llbracket\Psi[\mu X. \Psi/X]\rrbracket_\mathbb{V} \land s
      \cdot \sigma' \in \llbracket \Psi'\rrbracket_\mathbb{V}\} \cup
        \llbracket\bigvee \Delta\rrbracket_\mathbb{V} \\
      & = \{ \sigma
      \cdot s \cdot \sigma' \text{ | } \sigma \cdot s \in
      \llbracket\mu X. \Psi\rrbracket_\mathbb{V} \land s \cdot \sigma'
      \in \llbracket \Psi'\rrbracket_\mathbb{V}\} \cup
        \llbracket\bigvee \Delta\rrbracket_\mathbb{V} \\
      & =
      \llbracket(\mu X. \Psi) \chop \Psi' \lor\bigvee
        \Delta\rrbracket_\mathbb{V}
    \end{split}
  \end{align*}

  \par{(\texttt{CH-LENL}).} Let us assume
  $\llbracket \bigwedge \Gamma \land (\mu X. repeat_i(\Phi)) \chop
  \Phi'\rrbracket_{\mathbb{V}} \subseteq \llbracket \bigvee
  \Delta\rrbracket_{\mathbb{V}}$. Using \Cref{repeat}, we can now
  conclude, that for any $i \geq 1$
  \begin{align*}
    \begin{split}
      & \llbracket\bigwedge \Gamma \land (\mu X. \Phi) \chop
        \Phi'\rrbracket_\mathbb{V} \\
      & = \llbracket\bigwedge
      \Gamma\rrbracket_\mathbb{V} \cap \{ \sigma \cdot s \cdot \sigma'
      \text{ | } \sigma \cdot s \in \llbracket \mu
      X. \Phi\rrbracket_{\mathbb{V}} \land s \cdot \sigma' \in
        \llbracket \Phi'\rrbracket_\mathbb{V}\} \\
      & \stackrel{\dag}{=}
      \llbracket\bigwedge \Gamma\rrbracket_\mathbb{V} \cap \{ \sigma
      \cdot s \cdot \sigma' \text{ | } \sigma \cdot s \in \llbracket
      \mu X. repeat_i(\Phi) \rrbracket_{\mathbb{V}} \land s \cdot
        \sigma' \in \llbracket \Phi'\rrbracket_\mathbb{V}\} \\
      & =
      \llbracket\bigwedge \Gamma \land (\mu X. repeat_i(\Phi)) \chop
      \Phi'\rrbracket_\mathbb{V} \subseteq \llbracket\bigvee
        \Delta\rrbracket_\mathbb{V}
    \end{split}
  \end{align*}

  \par{(\texttt{CH-LENR}).} Let us assume
  $\llbracket \bigwedge \Gamma\rrbracket_{\mathbb{V}} \subseteq
  \llbracket (\mu X. repeat_i(\Psi)) \chop \Psi' \lor \bigvee
  \Delta\rrbracket_{\mathbb{V}}$. Using \Cref{repeat}, we can now
  conclude, that for any $i \geq 1$
  \begin{align*}
    \begin{split}
      & \llbracket\bigwedge \Gamma\rrbracket_\mathbb{V} \subseteq
        \llbracket(\mu X. repeat_i(\Psi)) \chop \Psi' \lor \bigvee
        \Delta\rrbracket_\mathbb{V} \\
      & = \{ \sigma \cdot s \cdot
        \sigma' \text{ | } \sigma \cdot s \in \llbracket\mu
        X. repeat_i(\Psi)\rrbracket_\mathbb{V} \land s \cdot \sigma' \in
        \llbracket \Psi'\rrbracket_\mathbb{V}\} \cup \llbracket\bigvee
        \Delta\rrbracket_\mathbb{V} \\
      & \stackrel{\dag}{=} \{ \sigma \cdot
        s \cdot \sigma' \text{ | } \sigma \cdot s \in \llbracket\mu
        X. \Psi\rrbracket_\mathbb{V} \land s \cdot \sigma' \in
        \llbracket \Psi'\rrbracket_\mathbb{V}\} \cup \llbracket\bigvee
        \Delta\rrbracket_\mathbb{V} \\
      & = \llbracket(\mu X. \Psi) \chop
        \Psi' \lor\bigvee
        \Delta\rrbracket_\mathbb{V}
    \end{split}
  \end{align*}

  \par{(\texttt{FPI}).} Let us then assume the premises are valid,
  i.e.
  \begin{align*}
    \begin{split}
      & (1) \hspace{1mm} \llbracket P_\Gamma\rrbracket_{\mathbb{V}} \subseteq \llbracket I\rrbracket_{\mathbb{V}} \\
      & (2) \text{ If } \llbracket I \land X_1\rrbracket_{\mathbb{V}}\subseteq \llbracket X_2\rrbracket_{\mathbb{V}} \text{, then also } \llbracket I \land \Phi\rrbracket_\mathbb{V} \subseteq \llbracket \Psi\rrbracket_{\mathbb{V}}
    \end{split}
  \end{align*}

  Using the second premise, as well as \Cref{fpindproof}, we can now
  infer the proposition
  $\llbracket I \land \mu X_{1.}\Phi\rrbracket_\mathbb{V} \subseteq
  \llbracket \mu X_{2.} \Psi\rrbracket_\mathbb{V}$.  Hence, we
  conclude that
  \begin{align*}
    \llbracket \bigwedge \Gamma \land \mu X_{1.} \Phi\rrbracket_\mathbb{V} \subseteq \llbracket \bigwedge P_\Gamma \land \mu X_{1.} \Phi\rrbracket_\mathbb{V} \subseteq \llbracket I \land \mu X_{1.} \Phi\rrbracket_\mathbb{V} \subseteq \llbracket\mu X_{2.} \Psi \lor \bigvee \Delta\rrbracket_{\mathbb{V}}
  \end{align*}

  \iffalse
  \par{(\texttt{FPI-ALT}).} Let us assume the premises are valid, i.e.
  \begin{align*}
    \begin{split}
      & (1) \hspace{1mm} \llbracket P_\Gamma\rrbracket_{\mathbb{V}} \subseteq \llbracket I\rrbracket_{\mathbb{V}} \\

      & (2) \text{ If } \llbracket I \land X\rrbracket_{\mathbb{V}}\subseteq \llbracket \Psi\rrbracket_{\mathbb{V}} \text{, then also } \llbracket I \land \Phi\rrbracket_\mathbb{V} \subseteq \llbracket \Psi\rrbracket_{\mathbb{V}}
    \end{split}
   \end{align*}

   Using the second premise, as well as \Cref{fixconstr3}, we now can
   also infer the proposition
   $\llbracket I \land \mu X.\Phi\rrbracket_\mathbb{V} \subseteq
   \llbracket \Psi\rrbracket_\mathbb{V}$.  Hence, we conclude that
   \begin{align*}
     \llbracket \bigwedge \Gamma \land \mu X. \Phi\rrbracket_\mathbb{V} \subseteq \llbracket \bigwedge P_\Gamma \land \mu X. \Phi\rrbracket_\mathbb{V} \subseteq \llbracket I \land \mu X. \Phi\rrbracket_\mathbb{V} \subseteq \llbracket\Psi \lor \bigvee \Delta\rrbracket_{\mathbb{V}}
   \end{align*}
   \fi
\end{proof}

%%% Local Variables:
%%% mode: latex
%%% TeX-master: "main"
%%% End:

\section{Proofs of Contract Rules}
\label{app:contract}

\subsection{Additional Contract Application Rules}

\begin{figure}[h]
  \begin{center}
    \begin{align*} 
      & \begin{prooftree}
        \hypo{\mathbb{C}(m) = (pre, post)}
        \hypo{P_\Gamma \vdash_{\mathbb{C}} pre}
        \hypo{\xi \s P_\Gamma[v_{old}^i/v^i], post, \Phi \vdash_{\mathbb{C}} \Psi}
        \infer[left label=\texttt{CH-RVAR-EQ}]{1, 2}{\hspace{10mm} \xi \s \Gamma, X_m \chop \Phi \vdash_{\mathbb{C}} X_m \chop \Psi, \Delta \hspace{10mm}}
      \end{prooftree} \\[7pt]
      & \begin{prooftree}
        \hypo{\mathbb{C}(m) = (pre, post)}
        \hypo{P_\Gamma \vdash_{\mathbb{C}} I \land pre}
        \hypo{\hspace{-2mm}\xi, (X_m|_{I}, \Psi_1) \s I, \Phi_1 \vdash_{\mathbb{C}} \Psi_1}
        \hypo{\hspace{-2mm}\xi \s P_\Gamma[v_{old}^i/v^i], post, \Phi_2 \vdash_\mathbb{C} \Psi_2}
        \infer[left label=\texttt{CH-FPI-ALT}]{1, 3}{\hspace{28mm}\xi \s \Gamma, (\mu X_{m.} \Phi_1) \chop \Phi_2 \vdash_{\mathbb{C}} \Psi_1 \chop \Psi_2, \Delta \hspace{28mm}}
      \end{prooftree}
    \end{align*}
  \end{center}
  \begin{center}\vspace*{-2em}
  \end{center}
  \caption{Additional contract application rules}
  \label{fig:contractapply2}
\end{figure}

\subsection{Proof of \Cref{th:1}}

\begin{proof}
  Let recursion variable $X$, trace formula $\Phi$, valuation
  $\mathbb{V}$, and procedure contract $(pre, post)$ be arbitrary, but
  fixed. Let us assume the validity of $(pre, post)$ for $X$ in
  $\mathbb{V}$ implies its validity for $\Phi$ in $\mathbb{V}$. This
  is equivalent to saying that
  $\llbracket \langle pre(X)\rangle\rrbracket_{\mathbb{V}} \subseteq
  \llbracket \langle post(X)\rangle\rrbracket_{\mathbb{V}}$ implies
  that
  $\llbracket \langle pre(\Phi)\rangle\rrbracket_{\mathbb{V}}
  \subseteq \llbracket \langle
  post(\Phi)\rangle\rrbracket_{\mathbb{V}}$. Let
  $$
  P \equiv \bigwedge v_{old}^i = v^i \land pre
  $$
  be the predicates of the precondition encoding. Using the
  information contained in our premise, since $X$ specifies an
  arbitrary trace $\gamma$, we can also say that for any trace
  $\gamma$
  \begin{align*}
    \begin{split}
      & \llbracket P\rrbracket_{\mathbb{V}} \cap \gamma \chop State^+
        \subseteq \gamma \chop \llbracket post\rrbracket_{\mathbb{V}} \\
      & \text{ implies } \\
      & \llbracket P\rrbracket_{\mathbb{V}} \cap
      \llbracket \Phi\rrbracket_{\mathbb{V}[X \mapsto \gamma]} \chop
        State^+ \subseteq \llbracket \Phi\rrbracket_{\mathbb{V}[X
        \mapsto \gamma]} \chop \llbracket
        post\rrbracket_{\mathbb{V}}
    \end{split}
  \end{align*}
  We can now construct the following $\gamma$-sequence:
  $$
  (\gamma^i)_{i \geq 0} \text{ with } \gamma^0 = \varnothing \land \gamma^{i+1} = \llbracket \Phi\rrbracket_{\mathbb{V}[X \mapsto \gamma^i]}
  $$

  We prove via natural induction over $i$ that for every $\gamma^i$
  with $i \geq 0$:
  $\llbracket P\rrbracket_{\mathbb{V}} \cap \gamma^i \chop State^+
  \subseteq \gamma^i \chop \llbracket
  post\rrbracket_{\mathbb{V}}$. Let $i = 0$. Then
  trivially
  \begin{align*}
    \llbracket P\rrbracket_{\mathbb{V}} \cap
    \gamma^0 \chop State^+ = \llbracket P\rrbracket_{\mathbb{V}} \cap
    \varnothing \chop State^+ = \varnothing \subseteq \gamma^0 \chop
    \llbracket post\rrbracket_{\mathbb{V}}
  \end{align*}

  For the induction hypothesis, let $i \geq 0$ be fixed, such that it
  is guaranteed that
  $\llbracket P\rrbracket_{\mathbb{V}} \cap \gamma^i \chop State^+
  \subseteq \gamma^i \chop \llbracket post\rrbracket_{\mathbb{V}}$
  holds. Using our earlier premise, this is equivalent to saying that
  \begin{align*}
    \llbracket P\rrbracket_{\mathbb{V}} \cap \llbracket \Phi\rrbracket_{\mathbb{V}[X \mapsto \gamma^i]} \chop State^+ \subseteq \llbracket \Phi\rrbracket_{\mathbb{V}[X \mapsto \gamma^i]} \chop \llbracket post\rrbracket_{\mathbb{V}}
  \end{align*}

  Using this information, we can now complete the induction step by
  inferring that
  \begin{align*}
    \begin{split}
      & \llbracket P\rrbracket_{\mathbb{V}} \cap \gamma^{i+1} \chop State^+ = \llbracket P\rrbracket_{\mathbb{V}} \cap \llbracket \Phi\rrbracket_{\mathbb{V}[X \mapsto \gamma^i]} \chop State^+ \\
      & \subseteq \llbracket \Phi\rrbracket_{\mathbb{V}[X \mapsto \gamma^i]} \chop \llbracket post\rrbracket_{\mathbb{V}} = \gamma^{i+1} \chop \llbracket post\rrbracket_{\mathbb{V}}
    \end{split}
  \end{align*}

  Due to the monotonicity of the function, we know the
  $\gamma$-sequence above must, after possibly infinitely many steps,
  reach its least fixed point. Hence, we can conclude that also
  \begin{align*}
    \llbracket P\rrbracket_{\mathbb{V}} \cap \llbracket \mu X. \Phi\rrbracket_{\mathbb{V}} \chop State^+ \subseteq \llbracket \mu X. \Phi\rrbracket_{\mathbb{V}} \chop \llbracket post\rrbracket_{\mathbb{V}}
  \end{align*}

  This is again equivalent to
  $\llbracket \langle pre(\mu X. \Phi)\rangle\rrbracket_{\mathbb{V}}
  \subseteq \llbracket \langle post(\mu
  X. \Phi)\rangle\rrbracket_{\mathbb{V}}$, which needed to be shown in
  the first~place.
\end{proof}

\subsection{Application of Procedure Contracts}

\begin{lemma}[Application of Procedure Contracts]
  \label{mcappl}
  For any trace formulas $\Phi, \Psi$, recursion variable $X$,
  precondition $pre$, postcondition $post$, predicate $P$ and
  valuation $\mathbb{V}$, assuming procedure contract $(pre, post)$
  holds for $\Phi$ in $\mathbb{V}$, it must also hold that
  \begin{align*}
    \begin{split}
      & \{ \sigma \cdot s \cdot \sigma' \text{ \normalfont | } \sigma \cdot s \in \llbracket P \land pre \land \Phi\rrbracket_{\mathbb{V}} \land s \cdot \sigma' \in \llbracket \Psi \rrbracket_{\mathbb{V}}\} \\
      & \subseteq \{ \sigma \cdot s \cdot \sigma' \text{ \normalfont | } \sigma \cdot s \in \llbracket \Phi\rrbracket_\mathbb{V} \land s \cdot \sigma' \in \llbracket P[v_{old}^i/v^i] \land post \land \Psi\rrbracket_\mathbb{V}\}
    \end{split}
  \end{align*}
\end{lemma}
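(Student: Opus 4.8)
The plan is to prove the stated set inclusion \emph{elementwise}, after unfolding the hypothesis that the contract $(pre,post)$ is valid for $\Phi$ in $\mathbb{V}$ into its semantic content. By the contract encoding, this hypothesis reads $\llbracket \bigwedge_i v_{old}^i = v^i \land pre \land \Phi \chop true\rrbracket_{\mathbb{V}} \subseteq \llbracket \Phi \chop post\rrbracket_{\mathbb{V}}$. Using the chop- and predicate-clauses of \Cref{fig:semantics-formulas}, I would first rewrite this as the pointwise statement: for every $\tau \cdot t \in \llbracket\Phi\rrbracket_{\mathbb{V}}$ whose first state satisfies $\bigwedge_i v_{old}^i = v^i \land pre$, the last state $t$ satisfies $post$. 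Here the trailing $true$ imposes no constraint beyond the existence of the shared state $t$, and the suffix $\tau'$ is arbitrary, so only $t \models post$ survives.

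Next I would take an arbitrary trace $\sigma \cdot s \cdot \sigma'$ in the left-hand set. Unfolding the definition, $\sigma \cdot s \in \llbracket\Phi\rrbracket_{\mathbb{V}}$, the first state $s_0$ of $\sigma \cdot s$ satisfies $P \land pre$, and $s \cdot \sigma' \in \llbracket\Psi\rrbracket_{\mathbb{V}}$. The memberships $\sigma \cdot s \in \llbracket\Phi\rrbracket_{\mathbb{V}}$ and $s \cdot \sigma' \in \llbracket\Psi\rrbracket_{\mathbb{V}}$ are precisely the $\Phi$- and $\Psi$-requirements of the right-hand set, so the only genuine obligation is that the join state $s$ satisfies $P[v_{old}^i/v^i] \land post$. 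Since the fresh variables $v_{old}^i$ occur in none of $\Phi$, $\Psi$, $pre$, $P$, neither the two trace memberships nor the predicates $P, pre$ constrain them; hence I may assume without loss of generality that along the trace each $v_{old}^i$ is frozen to the initial value $s_0(v^i)$, so that in particular $s(v_{old}^i)=s_0(v^i)$. With this freezing, $s_0$ satisfies the full precondition encoding $\bigwedge_i v_{old}^i = v^i \land pre$, so the pointwise form of the hypothesis yields $s \models post$; and evaluating $P[v_{old}^i/v^i]$ at $s$ coincides with evaluating $P$ at $s_0$, which holds by assumption, giving $s \models P[v_{old}^i/v^i]$. Thus $s \cdot \sigma' \in \llbracket P[v_{old}^i/v^i] \land post \land \Psi\rrbracket_{\mathbb{V}}$, and $\sigma \cdot s \cdot \sigma'$ lies in the right-hand set.

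The main obstacle is the careful bookkeeping of the fresh history variables $v_{old}^i$, which is the only non-routine ingredient. Two properties must be pinned down and used consistently: (i)~that $v_{old}^i$ occur in neither $\Phi$, $\Psi$, $pre$, nor $P$, so that the trace-membership predicates $\llbracket\Phi\rrbracket_{\mathbb{V}}$ and $\llbracket\Psi\rrbracket_{\mathbb{V}}$ are invariant under reassigning $v_{old}^i$; and (ii)~that, being unmodified by $\Phi$, they retain the pre-state values $s_0(v^i)$ throughout, so that at the join state $s$ the substitution $P[v_{old}^i/v^i]$ faithfully re-expresses the pre-state predicate $P$. Property~(i) is what legitimately lets us align the trace with the contract's precondition encoding $\bigwedge_i v_{old}^i = v^i \land pre$ without disturbing the $\Phi$-membership, and property~(ii) is what transports the pre-state information $P$ past the evaluation of $\Phi$. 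Once these are made precise, the rest is the same elementwise chop/predicate unfolding used throughout the soundness proofs, and the inclusion follows.
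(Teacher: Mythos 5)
Your proof is correct and follows essentially the same route as the paper's: the paper phrases it as a chain of set inclusions (first adjoining the conjuncts $v_{old}^i = v^i$ by freshness, then matching against $\langle pre(\Phi)\rangle$ to invoke contract validity, then transporting $P$ via the substitution $P[v_{old}^i/v^i]$), which is exactly your elementwise argument with the same two key observations about the history variables. Even the informal steps coincide, including the without-loss-of-generality freezing of the fresh variables and the identification of the chop split point of $\Phi \chop post$ with the join state $s$.
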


\begin{proof}
  Let us assume trace formulas $\Phi, \Psi$, recursion variable $X$,
  precondition $pre$, postcondition $post$, predicate $P$ and
  valuation $\mathbb{V}$ are arbitrary, but fixed, such that the
  procedure contract $(pre, post)$ holds for trace formula $\Phi$ in
  $\mathbb{V}$, i.e.
  $\llbracket \langle pre(\Phi)\rangle\rrbracket_{\mathbb{V}}
  \subseteq \llbracket \langle
  post(\Phi)\rangle\rrbracket_{\mathbb{V}}$. This encoding directly
  implies that
  \begin{align*}
    \llbracket \bigwedge v_{old}^i = v^i
    \land pre \land \Phi \chop true \rrbracket_{\mathbb{V}} \subseteq
    \llbracket \Phi \chop post \rrbracket_{\mathbb{V}}
  \end{align*}

  To infer the theorem, we first add the conjunctions
  $v_{old}^i = v^i$ to our left formula, which is allowed, as
  $v_{old}^i$ are assumed to be \textit{new program variables} not
  included in the formula yet.
  \begin{align*}
    \begin{split}
      & \{ \sigma \cdot s \cdot \sigma' \text{ | } \sigma \cdot s \in \llbracket P \land pre \land \Phi\rrbracket_{\mathbb{V}} \land s \cdot \sigma' \in \llbracket \Psi \rrbracket_{\mathbb{V}}\} \\
      & \subseteq \{ \sigma \cdot s \cdot \sigma' \text{ | } \sigma \cdot s \in \llbracket P \land \bigwedge v_{old}^i = v_i \land pre \land \Phi\rrbracket_{\mathbb{V}} \land s \cdot \sigma' \in \llbracket \Psi \rrbracket_{\mathbb{V}}\}
    \end{split}
  \end{align*}

  In the next step, we can then modify $\Phi$ to $\Phi \chop true$ in
  order to \textit{match} the formula with the encoding of the
  precondition $\langle pre(\Phi)\rangle$. Due to our matching
  encoding, we can then use the \textit{validity of the procedure
    contract}, as given in the premise, in order to add the encoding
  of the postcondition $\langle post(\Phi)\rangle$ to the
  formula. This is demonstrated as follows:
  \begin{align*}
    \begin{split}
      & \{ \sigma \cdot s \cdot \sigma' \text{ | } \sigma \cdot s \in \llbracket P \land \bigwedge v_{old}^i = v_i \land pre \land \Phi\rrbracket_{\mathbb{V}} \land s \cdot \sigma' \in \llbracket \Psi \rrbracket_{\mathbb{V}}\} \\
      & \subseteq \{ \sigma \cdot s \cdot \sigma' \text{ | } \sigma \cdot s \in \llbracket P \land \bigwedge v_{old}^i = v_i \land pre \land \Phi \chop true \rrbracket_{\mathbb{V}} \land s \cdot \sigma' \in \llbracket \Psi \rrbracket_{\mathbb{V}}\} \\
      & \subseteq \{ \sigma \cdot s \cdot \sigma' \text{ | } \sigma \cdot s \in \llbracket P \land \bigwedge v_{old}^i = v_i \land \Phi \chop post \rrbracket_{\mathbb{V}} \land s \cdot \sigma' \in \llbracket \Psi \rrbracket_{\mathbb{V}}\}
    \end{split}
  \end{align*}

  In the following step, we substitute every occurrence of $v^i$ in
  $P$ with $v_{old}^i$, which is possible, as we know that
  $v_{old}^i = v^i$ for all $i$. Considering that $post$ holds in the
  final state of $\sigma \cdot s$, we hence know that
  $s \models post$. As such, we can also add $post$ as a condition for
  the initial state of $s \cdot \sigma'$:
  \begin{align*}
    \begin{split}
      & \{ \sigma \cdot s \cdot \sigma' \text{ | } \sigma \cdot s \in \llbracket P \land \bigwedge v_{old}^i = v_i \land \Phi \chop post \rrbracket_{\mathbb{V}} \land s \cdot \sigma' \in \llbracket \Psi \rrbracket_{\mathbb{V}}\}\\
      & \subseteq \{ \sigma \cdot s \cdot \sigma' \text{ | } \sigma \cdot s \in \llbracket P[v_{old}^i/v^i] \land \Phi \chop post \rrbracket_{\mathbb{V}} \land s \cdot \sigma' \in \llbracket \Psi \rrbracket_{\mathbb{V}}\} \\
      & = \{ \sigma \cdot s \cdot \sigma' \text{ | } \sigma \cdot s \in \llbracket P[v_{old}^i/v^i] \land \Phi \rrbracket_{\mathbb{V}} \land s \cdot \sigma' \in \llbracket post \land \Psi \rrbracket_{\mathbb{V}}\}
    \end{split}
  \end{align*}

  Considering that $v_{old}^i$ are fresh program variables
  \underline{not} occurring in $\Phi$, we know that they stay
  unchanged during the execution of $\Phi$. Hence, all information
  about the old variables \underline{before} the execution of $\Phi$
  can simply be transferred intact until \underline{after} the
  execution of $\Phi$, which finally proves the lemma, as can be seen
  below:
  \begin{align*}
    \begin{split}
      & \{ \sigma \cdot s \cdot \sigma' \text{ | } \sigma \cdot s \in \llbracket P[v_{old}^i/v^i] \land \Phi \rrbracket_{\mathbb{V}} \land s \cdot \sigma' \in \llbracket post \land \Psi \rrbracket_{\mathbb{V}}\}\\
      & \subseteq \{ \sigma \cdot s \cdot \sigma' \text{ | } \sigma \cdot s \in \llbracket \Phi \rrbracket_{\mathbb{V}} \land s \cdot \sigma' \in \llbracket P[v_{old}^i/v^i] \land post \land \Psi \rrbracket_{\mathbb{V}}\}
    \end{split}
  \end{align*}
\end{proof}

\subsection{Proof of \Cref{th4}}

\begin{proof}
  We prove that each new rule is \textit{locally sound}. \vspace{1em}

  \par{(\texttt{MC}).} Let us assume the premises are valid,~i.e.
  \begin{align*}
    & (1) \hspace{1mm} \mathbb{C'} \text{ is valid in } \mathbb{V} \text{ implies } \llbracket \langle pre(\Phi)\rangle\rrbracket_{\mathbb{V}} \subseteq \llbracket \langle post(\Phi)\rangle\rrbracket_{\mathbb{V}} \\
    & (2) \hspace{1mm} \mathbb{C'} \text{ is valid in } \mathbb{V} \text{ implies } \llbracket \bigwedge \Gamma \land \mu X_{m.} \Phi\rrbracket_{\mathbb{V}} \subseteq \llbracket \bigvee \Delta\rrbracket_{\mathbb{V}}
  \end{align*}
  for $\mathbb{C'} = \mathbb{C}[m \mapsto (pre, post)]$ and
  $v_{old}^i \in fresh(Var)$.

  Using the first premise and \Cref{th:1}, we can infer that
  $(pre, post)$ is valid for $\mu X_{m.} \Phi$ in $\mathbb{V}$, i.e.
  \begin{align*}
    \llbracket \langle pre(\mu X_{m.} \Phi)\rangle\rrbracket_{\mathbb{V}} \subseteq \llbracket \langle post(\mu X_{m.} \Phi)\rangle\rrbracket_{\mathbb{V}}
  \end{align*}

  This only holds because $\mathbb{C'}$ being valid in $\mathbb{V}$ in
  this context means that $(pre, post)$ is valid for $X_m$, as no
  subformula $\mu X_{m.} \Psi$ can occur in $\Phi$. As such,
  $(pre, post)$ is valid for $\mu X_{m.} \Phi$ in $\mathbb{V}$. The
  second premise then tells us that
  \begin{align*}
    \llbracket \bigwedge \Gamma \land \mu X_{m.} \Phi\rrbracket_{\mathbb{V}} \subseteq \llbracket \bigvee \Delta\rrbracket_{\mathbb{V}}
  \end{align*}
  which is needed to be proven in the first place.
  \vspace{1em}

  \par{(\texttt{CH-MC}).} Let us assume the premises are valid,~i.e.
  \begin{align*}
    & (1) \hspace{1mm} \mathbb{C'} \text{ is valid in } \mathbb{V} \text{ implies } \llbracket \langle pre(\Phi_1)\rangle\rrbracket_{\mathbb{V}} \subseteq \llbracket \langle post(\Phi_1)\rangle\rrbracket_{\mathbb{V}} \\
    & (2) \hspace{1mm} \mathbb{C'} \text{ is valid in } \mathbb{V} \text{ implies } \llbracket \bigwedge \Gamma \land (\mu X_{m.} \Phi_1) \chop \Phi_2\rrbracket_{\mathbb{V}} \subseteq \llbracket \bigvee \Delta\rrbracket_{\mathbb{V}}
  \end{align*}
  for $\mathbb{C'} = \mathbb{C}[m \mapsto (pre, post)]$ and
  $v_{old}^i \in fresh(Var)$.

  Using the first premise and \Cref{th:1}, we can infer that
  $(pre, post)$ is valid for $\mu X_{m.} \Phi_1$ in $\mathbb{V}$, i.e.
  \begin{align*}
    \llbracket \langle pre(\mu X_{m.} \Phi_1)\rangle\rrbracket_{\mathbb{V}} \subseteq \llbracket \langle post(\mu X_{m.} \Phi_1)\rangle\rrbracket_{\mathbb{V}}
  \end{align*}

  This only holds because $\mathbb{C'}$ being valid in $\mathbb{V}$ in
  this context means that $(pre, post)$ is valid for $X_m$, as no
  subformula $\mu X_{m.} \Psi$ can occur in $\Phi_1$. As such,
  $(pre, post)$ is valid for $\mu X_{m.} \Phi_1$ in $\mathbb{V}$. The
  second premise then tells us that
  \begin{align*}
    \llbracket \bigwedge \Gamma \land (\mu X_{m.} \Phi_1) \chop \Phi_2\rrbracket_{\mathbb{V}} \subseteq \llbracket \bigvee \Delta\rrbracket_{\mathbb{V}}
  \end{align*}
  which needed to be proven. \vspace{1em}

  \par{(\texttt{CH-RVAR-EQ}).} Let us assume the premises are
  valid,~i.e.
  \begin{align*}
    & (1) \hspace{1mm} \llbracket P_\Gamma\rrbracket_{\mathbb{V}} \subseteq \llbracket pre\rrbracket_{\mathbb{V}} \\
    & (2) \hspace{1mm} \mathbb{C} \text{ being valid in } \mathbb{V} \text{ implies } \llbracket \bigwedge P_\Gamma[v_{old}^i/v^i] \land post \land \Phi\rrbracket_{\mathbb{V}} \subseteq \llbracket\Psi\rrbracket_{\mathbb{V}}
  \end{align*}
  for $\mathbb{C}(m) = (pre, post)$. Since
  $\mathbb{C}(m) = (pre, post)$, we can assume that $(pre, post)$
  holds for $X_m$ in $\mathbb{V}$. Hence, we can apply \Cref{mcappl}
  to conclude
  \begin{align*}
    \begin{split}
      & \llbracket \bigwedge \Gamma \land X_m \chop \Phi\rrbracket_\mathbb{V} \subseteq \llbracket \bigwedge P_\Gamma \land X_m \chop \Phi\rrbracket_\mathbb{V} \\
      & = \{ \sigma \cdot s \cdot \sigma' \text{ | } \sigma \cdot s \in \llbracket \bigwedge P_\Gamma \land X_m\rrbracket_{\mathbb{V}} \land s \cdot \sigma' \in \llbracket \Phi \rrbracket_{\mathbb{V}}\} \\
      & \subseteq \{ \sigma \cdot s \cdot \sigma' \text{ | } \sigma \cdot s \in \llbracket \bigwedge P_\Gamma \land pre \land X_m\rrbracket_{\mathbb{V}} \land s \cdot \sigma' \in \llbracket \Phi \rrbracket_{\mathbb{V}}\} \\
      & \subseteq \{ \sigma \cdot s \cdot \sigma' \text{ | } \sigma \cdot s \in \llbracket X_m\rrbracket_\mathbb{V} \land s \cdot \sigma' \in \llbracket \bigwedge P_\Gamma[v_{old}^i/v^i] \land post \land \Phi\rrbracket_\mathbb{V}\} \\
      & \subseteq \{ \sigma \cdot s \cdot \sigma' \text{ | } \sigma \cdot s \in \llbracket X_m\rrbracket_\mathbb{V} \land s \cdot \sigma' \in \llbracket \Psi\rrbracket_\mathbb{V}\} \subseteq \llbracket X_m \chop \Psi \lor \bigvee \Delta\rrbracket_\mathbb{V}
    \end{split}
  \end{align*}

  \par{(\texttt{CH-RVAR}).} Let $\xi$ be arbitrary, but fixed, such
  that $(X_m|_{p}, X) \in \xi$. As such,
  $\llbracket X_m \land p\rrbracket_{\mathbb{V}} \subseteq \llbracket
  X\rrbracket_{\mathbb{V}}$.  Let us assume the premises are
  valid,~i.e.
  \begin{align*}
    & (1) \hspace{1mm} \llbracket P_\Gamma\rrbracket_{\mathbb{V}} \subseteq \llbracket p \land pre\rrbracket_{\mathbb{V}} \\
    & (2) \hspace{1mm} \mathbb{C} \text{ being valid in } \mathbb{V} \text{ implies } \llbracket \bigwedge P_\Gamma[v_{old}^i/v^i] \land post \land \Phi\rrbracket_{\mathbb{V}} \subseteq \llbracket\Psi\rrbracket_{\mathbb{V}}
  \end{align*}
  for $\mathbb{C}(m) = (pre, post)$. Since
  $\mathbb{C}(m) = (pre, post)$, we can assume that $(pre, post)$
  holds for $X_m$ in $\mathbb{V}$. Hence, we can apply \Cref{mcappl}
  to conclude
  \begin{align*}
    \begin{split}
      & \llbracket \bigwedge \Gamma \land X_m \chop \Phi\rrbracket_\mathbb{V} \subseteq \llbracket \bigwedge P_\Gamma \land X_m \chop \Phi\rrbracket_\mathbb{V} \\
      & = \{ \sigma \cdot s \cdot \sigma' \text{ | } \sigma \cdot s \in \llbracket \bigwedge P_\Gamma \land X_m\rrbracket_{\mathbb{V}} \land s \cdot \sigma' \in \llbracket \Phi \rrbracket_{\mathbb{V}}\} \\
      & \subseteq \{ \sigma \cdot s \cdot \sigma' \text{ | } \sigma \cdot s \in \llbracket p \land \bigwedge P_\Gamma \land pre \land X_m\rrbracket_{\mathbb{V}} \land s \cdot \sigma' \in \llbracket \Phi \rrbracket_{\mathbb{V}}\} \\
      & \subseteq \{ \sigma \cdot s \cdot \sigma' \text{ | } \sigma \cdot s \in \llbracket p \land X_m\rrbracket_\mathbb{V} \land s \cdot \sigma' \in \llbracket \bigwedge P_\Gamma[v_{old}^i/v^i] \land post \land \Phi\rrbracket_\mathbb{V}\} \\
      & \subseteq \{ \sigma \cdot s \cdot \sigma' \text{ | } \sigma \cdot s \in \llbracket X\rrbracket_\mathbb{V} \land s \cdot \sigma' \in \llbracket \Psi\rrbracket_\mathbb{V}\}  \subseteq \llbracket X \chop \Psi \lor \bigvee \Delta\rrbracket_\mathbb{V}
    \end{split}
  \end{align*}

  \par{(\texttt{CH-FPI}).} Let us assume the premises are valid,~i.e.
  \begin{align*}
    & (1) \hspace{1mm} \llbracket P_\Gamma\rrbracket_{\mathbb{V}} \subseteq \llbracket I \land pre\rrbracket_{\mathbb{V}} \\
    & (2) \hspace{1mm} (X_m|_{I}, X) \in \xi \text{ implies } \llbracket I \land \Phi_1 \rrbracket_{\mathbb{V}} \subseteq \llbracket\Psi_1\rrbracket_{\mathbb{V}}. \\
    & (3) \hspace{1mm} \mathbb{C} \text{ being valid in } \mathbb{V} \text{ implies } \llbracket \bigwedge P_\Gamma[v_{old}^i/v^i] \land post \land \Phi_2\rrbracket_{\mathbb{V}} \subseteq \llbracket\Psi_2\rrbracket_{\mathbb{V}}
  \end{align*}
  for $\mathbb{C}(m) = (pre, post)$. Since
  $\mathbb{C}(m) = (pre, post)$, we can assume that $(pre, post)$
  holds for $\mu X_{m.} \Phi_1$ in $\mathbb{V}$. Using \Cref{mcappl},
  we conclude
  \begin{align*}
    \begin{split}
      & \llbracket \bigwedge \Gamma \land (\mu X_{m.} \Phi_1) \chop \Phi_2\rrbracket_\mathbb{V} \subseteq \llbracket \bigwedge P_\Gamma \land (\mu X_{m.} \Phi_1) \chop \Phi_2\rrbracket_\mathbb{V} \\
      & \subseteq \{ \sigma \cdot s \cdot \sigma' \text{ | } \sigma \cdot s \in \llbracket \bigwedge P_\Gamma \land \mu X_{m.} \Phi_1\rrbracket_\mathbb{V} \land s \cdot \sigma' \in \llbracket \Phi_2\rrbracket_\mathbb{V}\} \\
      & \subseteq \{ \sigma \cdot s \cdot \sigma' \text{ | } \sigma \cdot s \in \llbracket I \land \bigwedge P_\Gamma \land pre \land \mu X_{m.} \Phi_1\rrbracket_\mathbb{V} \land s \cdot \sigma' \in \llbracket \Phi_2\rrbracket_\mathbb{V}\} \\
      & \subseteq \{ \sigma \cdot s \cdot \sigma' \text{ | } \sigma \cdot s \in \llbracket I \land \mu X_{m.} \Phi_1\rrbracket_\mathbb{V} \land s \cdot \sigma' \in \llbracket P_\Gamma[v_{old}^i/v^i] \land post \land \Phi_2\rrbracket_\mathbb{V}\} \\
      & \subseteq \{ \sigma \cdot s \cdot \sigma' \text{ | } \sigma \cdot s \in \llbracket \mu X. \Psi_1\rrbracket_\mathbb{V} \land s \cdot \sigma' \in \llbracket \Psi_2\rrbracket_\mathbb{V}\} \subseteq \llbracket (\mu X. \Psi_1) \chop \Psi_2 \lor \bigvee \Delta\rrbracket
    \end{split}
  \end{align*}

  % Let us consider the following $\gamma$-sequence:
  % $$(\gamma^i)_{i \geq 0} \text{ with } \gamma^0 = \varnothing \land \gamma^{i+1} = \llbracket \Phi_1(X_1)\rrbracket_{\mathbb{V}[X_1 \mapsto \gamma^i]}$$
  % Since for any arbitrary $i \in \{1, ..., n\}$, it holds that $(pre^i, post^i)$ holds for $\mu X_{1.} \Phi_1$ - and considering that the $\gamma$-sequence will eventually (after possibly infinitely many steps) reach its least fixed point $\gamma^k = \llbracket \mu X_{1.} \Phi_1\rrbracket_{\mathbb{V}}$ - we know that $(pre^i, post^i)$ must hold for all $\bigcup_{i \geq 0} \gamma^i$. This implies that $(pre^i, post^i)$ holds for all $X_1$ in $\Phi_1(X_1)$ from bottom element $\gamma^0$ to its least fixed point $\gamma^k$. Combining this with the second premise and \textit{\color{blue}\Cref{fixconstr2}$^*$}, we can hence deduce that $\llbracket I \land \mu X_{1.}\Phi_1\rrbracket_\mathbb{V} \subseteq_{Var}^{{\color{blue}*}} \llbracket \mu X_{2.} \Psi_1\rrbracket_\mathbb{V}$.

  \par{(\texttt{CH-FPI-ALT}).} Let us assume the premises are
  valid,~i.e.
  \begin{align*}
    & (1) \hspace{1mm} \llbracket P_\Gamma\rrbracket_{\mathbb{V}} \subseteq \llbracket I \land pre\rrbracket_{\mathbb{V}} \\
    & (2) \hspace{1mm} (X_m|_{I}, \Psi_1) \in \xi \text{ implies } \llbracket I \land \Phi_1 \rrbracket_{\mathbb{V}} \subseteq \llbracket\Psi_1\rrbracket_{\mathbb{V}}. \\
    & (3) \hspace{1mm} \mathbb{C} \text{ being valid in } \mathbb{V} \text{ implies } \llbracket \bigwedge P_\Gamma[v_{old}^i/v^i] \land post \land \Phi_2\rrbracket_{\mathbb{V}} \subseteq \llbracket\Psi_2\rrbracket_{\mathbb{V}}
  \end{align*}
  for $\mathbb{C}(m) = (pre, post)$. Since
  $\mathbb{C}(m) = (pre, post)$, we can assume that $(pre, post)$
  holds for $\mu X_{m.} \Phi_1$ in $\mathbb{V}$. Using \Cref{mcappl},
  we conclude
  \begin{align*}
    \begin{split}
      & \llbracket \bigwedge \Gamma \land (\mu X_{m.} \Phi_1) \chop \Phi_2\rrbracket_\mathbb{V} \subseteq \llbracket \bigwedge P_\Gamma \land (\mu X_{m.} \Phi_1) \chop \Phi_2\rrbracket_\mathbb{V} \\
      & \subseteq \{ \sigma \cdot s \cdot \sigma' \text{ | } \sigma \cdot s \in \llbracket \bigwedge P_\Gamma \land \mu X_{m.} \Phi_1\rrbracket_\mathbb{V} \land s \cdot \sigma' \in \llbracket \Phi_2\rrbracket_\mathbb{V}\} \\
      & \subseteq \{ \sigma \cdot s \cdot \sigma' \text{ | } \sigma \cdot s \in \llbracket I \land \bigwedge P_\Gamma \land pre \land \mu X_{m.} \Phi_1\rrbracket_\mathbb{V} \land s \cdot \sigma' \in \llbracket \Phi_2\rrbracket_\mathbb{V}\} \\
      & \subseteq \{ \sigma \cdot s \cdot \sigma' \text{ | } \sigma \cdot s \in \llbracket I \land \mu X_{m.} \Phi_1\rrbracket_\mathbb{V} \land s \cdot \sigma' \in \llbracket P_\Gamma[v_{old}^i/v^i] \land post \land \Phi_2\rrbracket_\mathbb{V}\} \\
      & \subseteq \{ \sigma \cdot s \cdot \sigma' \text{ | } \sigma \cdot s \in \llbracket \Psi_1\rrbracket_\mathbb{V} \land s \cdot \sigma' \in \llbracket \Psi_2\rrbracket_\mathbb{V}\} \subseteq \llbracket \Psi_1 \chop \Psi_2 \lor \bigvee \Delta\rrbracket
    \end{split}
  \end{align*}
\end{proof}

%%% Local Variables:
%%% mode: latex
%%% TeX-master: "main"
%%% End:

\section{Proof of Synchronization Rule}
\label{app:sync}

% \setcounter{chapter}{3}
% \setcounter{section}{1}
% \setcounter{subsection}{0}
% \renewcommand{\thesection}{\Alph{chapter}}%

\iffalse
  
\subsection{Additional Synchronization Rule}

\begin{figure}[h]
  \begin{center}
    \begin{align*} 
      & \begin{prooftree}
        \hypo{\xi \s \Gamma \vdash (\mu X. \Psi_1') \chop \Psi_2, \Delta}
        \infer[left label=\texttt{CH-SYNC}]1[$L(gr(\Psi_1')) \subseteq L(gr(\Psi_1))$]{\xi \s \Gamma \vdash (\mu X. \Psi_1) \chop \Psi_2, \Delta}
      \end{prooftree}
    \end{align*}
  \end{center}
  \begin{center}\vspace*{-2em}
  \end{center}
  \caption{Additional synchronization Rule}
  \label{fig:addsync}
\end{figure}

\fi

\subsection{Additional Lemmas}

\begin{lemma}[Equivalence of Fixed Point Representations]
  \label{representation}
  For any fixed relation $R$, fixed recursion variable $X$, valuation
  $\mathbb{V}$ and chop formula $\Psi \in CF_{(R,X)}$ with
  $\Psi = \bigvee_{1 \leq j \leq n} \varphi_j$, let the following be a
  $\gamma$-sequence $(\gamma^i)_{i \geq 0}$ induced by fixed point
  operation $\mu X. \Psi$:
  $$
  (\gamma^i)_{i \geq 0} \text{ with } \gamma^0 = \varnothing \land \gamma^{i+1} = \llbracket\Psi\rrbracket_{\mathbb{V}[X \mapsto \gamma^i]}\enspace.
  $$

  Also let the following be a sequence of sets of primitive chop
  formulas $(C^i)_{i \geq 0}$ induced by chop formula $\Psi$:
  $$
  C^0 = \varnothing \text{ and } C^{i+1} = \bigcup_{1 \leq j \leq n} \{\varphi_j[c^1/X^{(1)}]\cdots[c^z/X^{(z)}]) \text{\normalfont{ | }} c^1, \ldots, c^z \in C^i\}
  $$
  where $X^{(i)}$ refers to the $i$-th occurrence of $X$ in a
  primitive chop formula $\varphi_j$. Then
  $\gamma^i = \llbracket C^i\rrbracket_{\mathbb{V}}$ for all
  $i \geq 0$.
\end{lemma}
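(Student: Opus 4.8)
The plan is to proceed by natural induction on $i$, mirroring the fact that $(\gamma^i)_{i\ge 0}$ is the Kleene iteration computing the least fixed point $\llbracket \mu X. \Psi\rrbracket_{\mathbb{V}}$ from below. The base case $i = 0$ is immediate: $\gamma^0 = \varnothing = \llbracket C^0\rrbracket_{\mathbb{V}}$, reading the semantics of a set of formulas as the union of the semantics of its members, so that the empty set denotes $\varnothing$.

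Before the inductive step I would record a preliminary observation: every $c \in C^i$ is \emph{$X$-free}. This holds because $C^1$ collects exactly those disjuncts $\varphi_j$ containing no occurrence of $X$ (the substitution condition $c^1,\ldots,c^z \in C^0 = \varnothing$ can only be met when $z = 0$), and substituting $X$-free formulas into the remaining $X$-positions of a $\varphi_j$ again produces an $X$-free formula. Consequently, for every such $c$ the syntactic substitution and the semantic update coincide, i.e.\ $\llbracket \varphi[c/X]\rrbracket_{\mathbb{V}} = \llbracket \varphi\rrbracket_{\mathbb{V}[X \mapsto \llbracket c\rrbracket_{\mathbb{V}}]}$, with no variable capture to worry about.

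The heart of the argument is a substitution-and-distribution lemma: for any primitive chop formula $\varphi$ containing exactly $z$ occurrences of $X$ and any set $C$ of $X$-free primitive chop formulas,
$$
\llbracket \varphi\rrbracket_{\mathbb{V}[X \mapsto \llbracket C\rrbracket_{\mathbb{V}}]} = \bigcup_{c^1,\ldots,c^z \in C} \llbracket \varphi[c^1/X^{(1)}]\cdots[c^z/X^{(z)}]\rrbracket_{\mathbb{V}}\enspace.
$$
I would prove this by induction on $z$ (equivalently, on the length of the chop sequence), using at each step that chop distributes over disjunction. The disjunction clause of \Cref{assocDC} supplies left-distributivity, $\llbracket (\Phi_1 \lor \Phi_2) \chop \Phi_3\rrbracket_{\mathbb{V}} = \llbracket \Phi_1 \chop \Phi_3 \lor \Phi_2 \chop \Phi_3\rrbracket_{\mathbb{V}}$, and the symmetric right-distributive identity follows in one line directly from the chop semantics. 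The point to stress is that each of the $z$ occurrences of $X$ is replaced \emph{independently}, so the union ranges over all tuples $(c^1,\ldots,c^z)$ rather than over a single common choice; this independence is exactly what repeated two-sided distributivity, together with associativity of chop (implicit in the unparenthesised chop sequences of the paper), delivers.

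With these pieces the inductive step is routine. Assuming $\gamma^i = \llbracket C^i\rrbracket_{\mathbb{V}}$, I would compute
$$
\gamma^{i+1} = \llbracket \Psi\rrbracket_{\mathbb{V}[X \mapsto \gamma^i]} = \bigcup_{1 \leq j \leq n} \llbracket \varphi_j\rrbracket_{\mathbb{V}[X \mapsto \llbracket C^i\rrbracket_{\mathbb{V}}]}\enspace,
$$
apply the substitution-and-distribution lemma to each disjunct $\varphi_j$, and recognise the resulting double union $\bigcup_{j} \bigcup_{c^1,\ldots,c^z \in C^i} \llbracket \varphi_j[c^1/X^{(1)}]\cdots[c^z/X^{(z)}]\rrbracket_{\mathbb{V}}$ as precisely $\llbracket C^{i+1}\rrbracket_{\mathbb{V}}$ by the defining equation for $C^{i+1}$. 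The main obstacle is getting the independent-choice distribution in the lemma exactly right: one must keep the substitutions at distinct $X$-occurrences separate and verify that interpreting $X$ as a union of trace sets is equivalent to taking the union over all ways of plugging one member of $C^i$ into each occurrence.
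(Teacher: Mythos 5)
Your proposal is correct and follows essentially the same route as the paper: natural induction on $i$, with the inductive step unfolding $\gamma^{i+1} = \bigcup_j \llbracket\varphi_j\rrbracket_{\mathbb{V}[X \mapsto \llbracket C^i\rrbracket_{\mathbb{V}}]}$ and rewriting it as the union over all substitution instances to obtain $\llbracket C^{i+1}\rrbracket_{\mathbb{V}}$. The only difference is that you isolate and prove (by induction on the number of $X$-occurrences, using distributivity of chop over union and the $X$-freeness of the members of $C^i$) the substitution-and-distribution identity that the paper simply asserts as one link in its equation chain, which makes your version slightly more rigorous at the step where the real content lies.
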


\begin{proof}
  Let us assume relation $R$, recursion variable $X$, valuation
  $\mathbb{V}$ and chop formula
  $\Psi = \bigvee_{1 \leq j \leq n} \varphi_j \in CF_{(R, X)}$ are
  arbitrary, but fixed. We apply natural induction on $i \geq 0$ to
  prove that $\gamma^i = \llbracket C^i\rrbracket_{\mathbb{V}}$. For
  that purpose, we first establish that
  $\gamma^0 = \varnothing = \llbracket C^0\rrbracket_{\mathbb{V}}$.
  For the induction hypothesis, let us assume that
  $\gamma^i = \llbracket C^i\rrbracket_{\mathbb{V}}$ for a fixed
  $i \geq 0$. Then we can infer
  \begin{align*}
    \begin{split}
      & \gamma^{i+1} = \llbracket \bigvee_{1 \leq j \leq n} \varphi_j\rrbracket_{\mathbb{V}[X \mapsto \gamma^i]} = \bigcup_{1 \leq j \leq n} \llbracket\varphi_j\rrbracket_{\mathbb{V}[X \mapsto \gamma^i]} = \bigcup_{1 \leq j \leq n} \llbracket\varphi_j\rrbracket_{\mathbb{V}[X \mapsto \llbracket C^i\rrbracket_{\mathbb{V}}]} 
      \\
      & = \bigcup_{1 \leq j \leq n} \{\llbracket\varphi_j[c^1/X^{(1)}]\cdots[c^z/X^{(z)}]\rrbracket_{\mathbb{V}} \text{ | } c^1,\ldots, c^z \in C^i\} \\
      & = \llbracket \bigcup_{1 \leq j \leq n} \{\varphi_j[c^1/X^{(1)}]\cdots[c^z/X^{(z)}] \text{ | } c^1,\ldots, c^z \in C^i\}\rrbracket_{\mathbb{V}} = \llbracket C^{i+1}\rrbracket_{\mathbb{V}}
    \end{split}
  \end{align*}

  We have established that
  $\gamma^i = \llbracket C^i\rrbracket_{\mathbb{V}}$ holds for all
  $i \geq 0$.
\end{proof}

\begin{lemma}[Derivability of Primitive Chop Formulas in Grammar]
  \label{word}
  For any fixed relation $R$, fixed recursion variable $X$, chop
  formula $\Psi \in CF_{(R,X)}$, assuming the sequence of sets of
  primitive chop formulas $(C^i)_{i \geq 0}$ with
  $$
  C^0 = \varnothing\text{ and }C^{i+1} = \bigcup_{1 \leq j \leq n} \{\varphi_j[c^1/X^{(1)}]\cdots[c^z/X^{(z)}]) \text{\normalfont{ | }} c^1,\ldots, c^z \in C^i\}
  $$
  then also $\bigcup_{i \geq 0} grammarize(C^i) = L(gr(\Psi))$.
\end{lemma}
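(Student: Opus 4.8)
The plan is to recognize the sequence $\mathit{grammatize}(C^i)$ as the Kleene approximants of the least-fixed-point characterization of the context-free language $L(gr(\Psi))$, and then to establish the classical coincidence of this fixed-point semantics with the derivational semantics of the grammar. Throughout I extend $\mathit{grammatize}$ to sets elementwise, writing $\mathit{grammatize}(C) := \{\mathit{grammatize}(c) \mid c \in C\}$, so that the claim reads $\bigcup_{i \geq 0} \mathit{grammatize}(C^i) = L(gr(\Psi))$. Note first that, because $C^0 = \varnothing$, only those $\varphi_j$ without any occurrence of $X$ contribute to $C^1$, and by induction every element of $C^i$ with $i \geq 1$ is $X$-free; hence $\mathit{grammatize}$ really does produce terminal strings over $\{R\}$.

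First I would record the homomorphism property of $\mathit{grammatize}$: since it maps chop to concatenation, substituting a primitive chop formula $c$ for an occurrence of $X$ commutes with $\mathit{grammatize}$, i.e.\ $\mathit{grammatize}(\varphi_j[c^1/X^{(1)}]\cdots[c^z/X^{(z)}])$ is the string $\mathit{grammatize}(\varphi_j)$ with its $\ell$-th occurrence of the symbol $X$ rewritten as $\mathit{grammatize}(c^\ell)$. Applying this to the recurrence defining $C^{i+1}$ shows that $\mathit{grammatize}(C^{i+1})$ is obtained from $\mathit{grammatize}(C^i)$ by one step of the grammar's substitution operator $\mathit{Subst}$: for each production $X \to \mathit{grammatize}(\varphi_j)$ of $gr(\Psi)$, replace every occurrence of $X$ in the right-hand side, independently, by some string already present in $\mathit{grammatize}(C^i)$. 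With $\mathit{grammatize}(C^0) = \varnothing$, the sets $L^i := \mathit{grammatize}(C^i)$ are therefore exactly the iterates $L^0 = \varnothing$, $L^{i+1} = \mathit{Subst}(L^i)$ of the monotone operator associated with the single-non-terminal grammar $gr(\Psi)$; monotonicity gives $L^i \subseteq L^{i+1}$.

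It then remains to prove $\bigcup_{i \geq 0} L^i = L(gr(\Psi))$, which I would do by double inclusion argued along derivation-tree height. For ``$\subseteq$'', an induction on $i$ shows each $w \in L^i$ is the yield of some derivation tree rooted at $X$: one production application at the root together with the inductively guaranteed subderivations supplying the substituted strings. For ``$\supseteq$'', given $w \in L(gr(\Psi))$ witnessed by a derivation tree of finite height $h$, an induction on $h$ places $w \in L^h$, since each maximal proper subtree has height $<h$, its yield lies in $L^{h-1}$ by the induction hypothesis together with $L^{j}\subseteq L^{h-1}$ for $j \leq h-1$, and the root production assembles $w$ in one further application of $\mathit{Subst}$.

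The main obstacle is purely bookkeeping: the recurrence for $C^{i+1}$ substitutes a separately chosen $c^\ell$ into each of the $z$ occurrences of $X$ in $\varphi_j$, and one must verify that this multi-position, position-wise-independent substitution matches the branching of a context-free derivation tree, where distinct occurrences of the non-terminal $X$ may be expanded by different productions. Making this precise---tracking the occurrence indices $X^{(1)},\ldots,X^{(z)}$ through $\mathit{grammatize}$ and aligning them with the children of the root node---is the one place where care is genuinely required; the fixed-point half of the argument is then routine given the monotonicity and the finiteness of derivations already built into the grammar construction.
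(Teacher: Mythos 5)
Your proposal is correct and follows essentially the same route as the paper's proof: both directions are established by the same pair of inductions, on the approximation index $i$ for $\bigcup_i grammarize(C^i) \subseteq L(gr(\Psi))$ and on derivation depth for the converse, with the root production $X \rightarrow grammarize(\varphi_j)$ aligned occurrence-by-occurrence with the substitutions $[c^\ell/X^{(\ell)}]$. Your explicit appeal to monotonicity $L^j \subseteq L^{h-1}$ for $j \leq h-1$ to handle subtrees of differing heights is a point the paper's backward direction glosses over, so if anything your write-up is slightly more careful.
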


\begin{proof}
  Let us assume relation $R$, recursion variable $X$ and corresponding
  chop formula
  $\Psi = \bigvee_{1 \leq j \leq n} \varphi_j \in CF_{(R, X)}$ are
  arbitrary, but fixed. We now have to deduce that
  $\bigcup_{i \geq 0} grammarize(C^i) = L(gr(\Psi))$. We split the
  proof of the equality into a forward- and backward-direction.

  $\Rightarrow:$ First show that
  $\bigcup_{i \geq 0} grammarize(C^i) \subseteq L(gr(\Psi))$ via
  induction over $i$. The induction base
  $$grammarize(C^0) = \varnothing \subseteq L(gr(\Psi))$$ trivially
  holds.
  % Let us thus assume $w_1 \in grammarize(C^1)$ is arbitrary, but
  % fixed. Then there must exist a $c^1 \in C^1$ with
  % $grammarize(c^1) = w_1$. This $c_1$, by construction, must match
  % $\varphi_j$ for some $j$. We can then derive $w_1$ with the
  % derivation rule $X \rightarrow \gamma$ with
  % $\gamma = grammarize(\varphi_j)$.
  For the induction step, we fix $i$ and assume, as the induction
  hypothesis, that $grammarize(C^i)$ can be derived in $gr(\Psi)$. We
  will now show that the words in $grammarize(C^{i+1})$ can also be
  derived in $gr(\Psi)$. Let us assume
  $w_{i+1} \in grammarize(C^{i+1})$ is arbitrary, but fixed. Then
  there exists a $c^{i+1} \in C^{i+1}$ with
  $grammarize(c^{i+1}) = w_{i+1}$. This means that there exists a
  $\varphi_j$ for some $j$ and $c^1,\ldots, c^z \in C^i$, such that
  $c^{i+1} = \varphi_j[c^1/X^{(1)}]\cdots[c^z/X^{(z)}]$. We can now
  derive $w_{i+1}$ by applying the derivation rule
  $X \rightarrow \gamma$ with $\gamma = grammarize(\varphi_j)$, where
  each occurrence $X^{(m)}$ inside $\gamma$ is again derived by
  applying the derivation of $grammarize(c^m)$. This derivation must
  already exist, because $c^m \in C^i$, and as such
  $grammarize(c^m) \in grammarize(C^i)$, which lies in the domain of
  our induction hypothesis.

  $\Leftarrow:$ We need to prove that
  $L(gr(\Psi)) \subseteq \bigcup_{i \geq 0} grammarize(C^i)$. To this
  end, let $w \in L(gr(\Psi))$ be arbitrary, but fixed, and have a
  derivation depth $k$. We prove via induction over derivation depth
  $k$, that also $w \in grammarize(C^k)$.  Let us first assume that
  $w$ has depth $1$. Then there exists a derivation rule
  $X \rightarrow grammarize(\varphi_j) \text{ for some } j \text{ with
  } 1 \leq j \leq n$, such that $grammarize(\varphi_j) = w$. Since
  $\varphi_j$ can only contain relation $R$, this also implies that
  $\varphi_j \in C^1$, hence $w \in grammarize(C^1)$.

  For the induction step, we fix $k$ and assume, as the induction
  hypothesis, that any word with a derivation depth of $k$ is included
  in $grammarize(C^k)$. Then, let us assume that word $w$ has depth
  $k+1$. Hence, there must exist a derivation rule
  $X \rightarrow grammarize(\varphi_j) \text{ for some } j \text{ with
  } 1 \leq j \leq n$, ensuring that any derivation of its internal
  non-terminals $X$ must have a derivation depth of $k$, such that
  word $w$ can be derived. Using the induction hypothesis, we hence
  know that all derived words $w^1,\ldots, w^z$ of the internal
  non-terminals $X$ must be included in $grammarize(C^k)$. Since
  $C^{k+1}$ includes all $\varphi_j$, where all occurrences of its
  recursion variables $X$ have been replaced by elements of $C^k$, our
  word $w$ must also be included in $grammarize(C^{k+1})$, i.e.
  $w \in grammarize(C^{k+1})$.\end{proof}

\begin{lemma}[Fixed Point Trace Representation in Language]
  \label{gr1}
  For any fixed relation $R$, recursion variable $X$, valuation
  $\mathbb{V}$, chop formula $\Psi \in CF_{(R,X)}$, let
  $$
  c_{\sigma} := \overbrace{R \chop \ldots \chop R}^{l-times}
  $$ be a
  primitive chop formula of length $l$. Then the following two
  statements must hold at the same time:
  \begin{enumerate}
  \item There exists a trace
    $\sigma \in \llbracket\mu X. \Psi\rrbracket_{\mathbb{V}}$ of
    length $l \geq 1$ with
    $\llbracket c_{\sigma}\rrbracket_{\mathbb{V}} =
    \{\sigma\}$. 
  \item $grammarize(c_{\sigma}) \in L(gr(\Psi))$.
  \end{enumerate}
\end{lemma}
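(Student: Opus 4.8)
The plan is to prove the biconditional by gluing together the two preceding lemmas, which already discharge the semantic and the syntactic halves of the correspondence at the level of the approximant index $i$. The starting observation is that, because $\Psi \in CF_{(R,X)}$ is built only from $R$, $X$, $\chop$ and $\lor$, the operator $\gamma \mapsto \llbracket \Psi\rrbracket_{\mathbb{V}[X \mapsto \gamma]}$ is continuous on $(P(State^+), \subseteq)$: disjunction is union and chop distributes over directed unions in each argument. Hence the least fixed point is the union of the ascending chain, i.e.\ $\llbracket \mu X. \Psi\rrbracket_{\mathbb{V}} = \bigcup_{i \geq 0} \gamma^i$, where $(\gamma^i)_{i \geq 0}$ is exactly the sequence of \Cref{representation}. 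Combining this with \Cref{representation} ($\gamma^i = \llbracket C^i\rrbracket_{\mathbb{V}}$) yields $\llbracket \mu X. \Psi\rrbracket_{\mathbb{V}} = \bigcup_{i \geq 0}\llbracket C^i\rrbracket_{\mathbb{V}}$, and since the recursive substitution in the definition of $C^{i+1}$ eventually removes every occurrence of $X$, each member of each $C^i$ is a \emph{pure}-$R$ primitive chop formula of the shape $R \chop \cdots \chop R$, coinciding as a primitive chop formula with a unique $c_\sigma$ of the appropriate length.

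First I would prove the forward direction $(1) \Rightarrow (2)$. Assume $\sigma \in \llbracket \mu X. \Psi\rrbracket_{\mathbb{V}}$ is an $R$-chain of length $l$ with $\sigma \in \llbracket c_\sigma\rrbracket_{\mathbb{V}}$. By the union characterisation, $\sigma \in \llbracket C^i\rrbracket_{\mathbb{V}}$ for some $i$, so $\sigma \in \llbracket c\rrbracket_{\mathbb{V}}$ for some pure-$R$ primitive chop formula $c \in C^i$. Since $c$ generates only chains of a single fixed length and $\sigma$ has length $l$, the formula $c$ must be the length-$l$ chain, i.e.\ $c = c_\sigma$. Then $grammarize(c_\sigma) \in grammarize(C^i) \subseteq \bigcup_{i}grammarize(C^i) = L(gr(\Psi))$, where the last equality is \Cref{word}, giving statement (2).

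Conversely, for $(2) \Rightarrow (1)$ I would start from $grammarize(c_\sigma) \in L(gr(\Psi))$ and invoke \Cref{word} to obtain some $i$ with $grammarize(c_\sigma) \in grammarize(C^i)$; because $grammarize$ is a bijection between pure-$R$ primitive chop formulas and words over the single terminal $R$, this forces $c_\sigma \in C^i$. \Cref{representation} then yields $\llbracket c_\sigma\rrbracket_{\mathbb{V}} \subseteq \llbracket C^i\rrbracket_{\mathbb{V}} = \gamma^i \subseteq \llbracket \mu X. \Psi\rrbracket_{\mathbb{V}}$, so any trace $\sigma$ realising $c_\sigma$ lies in the fixed point and has length $l$, establishing statement (1).

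The routine parts are the two chaining arguments; the parts needing genuine care are (i) justifying that the least fixed point equals the union of the finite approximants $\gamma^i$, which relies on the continuity remark above rather than bare Knaster--Tarski, and (ii) the length bookkeeping that identifies the matching member of $C^i$ as \emph{exactly} $c_\sigma$ — i.e.\ that a pure-$R$ primitive chop formula is determined by its length, so that ``trace of length $l$'', ``the primitive chop formula $c_\sigma$'' and ``the word of length $l$ over $\{R\}$'' correspond one-to-one. I expect (ii), together with pinning down the reading of $\llbracket c_\sigma\rrbracket_{\mathbb{V}} = \{\sigma\}$ — which in use means that $\sigma$ is the given witness realising $c_\sigma$, the nonemptiness being supplied by the given $\sigma$ — to be the main subtlety, since everything semantic has already been discharged by \Cref{representation} and everything syntactic by \Cref{word}.
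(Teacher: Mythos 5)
Your proposal is correct and follows essentially the same route as the paper's proof: both directions are obtained by chaining \Cref{representation} (relating $\gamma^i$ to $C^i$) with \Cref{word} (relating $\bigcup_i grammarize(C^i)$ to $L(gr(\Psi))$), together with the observation that a pure-$R$ primitive chop formula is determined by its length. Your explicit appeal to continuity to justify $\llbracket\mu X.\Psi\rrbracket_{\mathbb{V}} = \bigcup_{i\geq 0}\gamma^i$ in fact tightens a step the paper leaves informal (it only argues that a finite trace must appear in some finite approximant $\gamma^k$).
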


\begin{figure}
\centering
\captionsetup{justification=centering}
\begin{tikzpicture}[shorten >=1pt,node distance=3.5cm, on grid,auto] 
   \node[rectangle, draw] (q_3) {$\llbracket \mu X. \Psi\rrbracket_{\mathbb{V}}$}; 
   \node[rectangle, draw] (q_0) [right=of q_3] {$(\gamma^i)_{i \geq 0}$}; 
   \node[rectangle, draw] (q_1) [right=of q_0] {$(C^i)_{i \geq 0}$}; 
   \node[rectangle, draw] (q_2) [right=of q_1] {$L(gr(\Psi))$}; 
    \path[densely dotted, <->] 
    (q_3) edge  node {} (q_0)
    (q_0) edge  node {\Cref{representation}} (q_1)
    (q_1) edge  node {\Cref{word}} (q_2);
\end{tikzpicture}
\caption{Visualization of established proof connections}
\label{fig1}
\end{figure}

\begin{proof}
  Let us assume relation $R$, recursion variable $X$, valuation
  $\mathbb{V}$ and chop formula
  $\Psi = \bigvee_{1 \leq j \leq n} \varphi_j \in CF_{(R, X)}$ are
  arbitrary, but fixed. Let $c_{\sigma}$ be a primitive chop formula
  of length $l$. We now prove the lemma by establishing the forward-
  and backward-direction, which both follow the outline visualized in
  \Cref{fig1}.

  $\Rightarrow$: Let us assume trace
  $\sigma \in \llbracket\mu X. \Psi\rrbracket_{\mathbb{V}}$ of length
  $l \geq 1$ is arbitrary, but fixed, such that
  $\llbracket c_{\sigma}\rrbracket_{\mathbb{V}} = \{\sigma\}$. We now
  consider the following $\gamma$-sequence:
  $$
  (\gamma^i)_{i \geq 0} \text{ with } \gamma^0 = \varnothing \land \gamma^{i+1} = \llbracket \Psi\rrbracket_{\mathbb{V}[X \mapsto \gamma^i]}
  $$

  This sequence must (after possibly infinitely many steps) have
  reached its least fixed point
  $\llbracket\mu X. \Psi\rrbracket_{\mathbb{V}}$. Since
  $\sigma \in \llbracket\mu X. \Psi\rrbracket_{\mathbb{V}}$ is a
  finite trace by default, there exists a $k \geq 0$ such that
  $\sigma \in \gamma^k$, i.e.\ $\sigma$ has been generated after $k$
  iterations. Using \Cref{representation}, we know that for the
  sequence of sets of primitive chop formulas $(C^i)_{i \geq 0}$ with
  $$
  C^0 = \varnothing \text{ and } C^{i+1} = \bigcup_{1 \leq j \leq n} \{\varphi_j[c^1/X^{(1)}]\cdots[c^z/X^{(z)}]) \text{ | } c^1,\ldots, c^z \in C^i\}
  $$
  it holds that $\gamma^i = \llbracket C^i\rrbracket_{\mathbb{V}}$ for
  all $i \geq 0$. Since $\sigma \in \gamma^k$, we thus know that
  $\sigma \in \llbracket C^k\rrbracket_{\mathbb{V}}$. Any primitive
  chop formula included in $C^k$ can only consist of relation $R$ as
  its atoms. This is trivial, as $C^0$ is the empty set, while
  $C^{i+1}$ replaces all occurrences of recursion variable~$X$ with
  primitive chop formulas of $C^{i}$. Since $\sigma$ is of length $l$,
  $c_{\sigma} \in C^k$ must hold as well.

  We can now construct a derivation for $grammarize(c_{\sigma})$ in
  $gr(\Psi)$. Since $c_{\sigma} \in C^k$,
  $grammarize(c_{\sigma}) \in grammarize(C^k)$ must also hold. Using
  \Cref{word}, this implies $grammarize(c_{\sigma}) \in L(gr(\Psi))$,
  which needed to be proven.

  $\Leftarrow$: Let us assume that
  $grammarize(c_{\sigma}) \in L(gr(\Psi))$. We consider the sequence
  of sets of primitive chop formulas $(C^i)_{i \geq 0}$ with
  $$
  C^0 = \varnothing \text{ and } C^{i+1} = \bigcup_{1 \leq j \leq n}
  \{\varphi_j[c^1/X^{(1)}]\cdots[c^z/X^{(z)}]) \text{ | } c^1, \ldots, c^z
  \in C^i\}
  $$

  Applying \Cref{word}, since
  $grammarize(c_{\sigma}) \in L(gr(\Psi))$, we know that there exists
  a corresponding set of primitive chop formulas $C^k$, such that
  necessarily $grammarize(c_{\sigma}) \in grammarize(C^k)$. This
  implies $c_{\sigma} \in C^k$ for some $k \geq 0$.  Since
  $c_{\sigma}$ is of length $l$, there exists some trace $\sigma$ of
  length $l$ with
  $\llbracket c_{\sigma}\rrbracket_{\mathbb{V}} = \{\sigma\}$. This
  implies that $\sigma \in \llbracket C_k\rrbracket_{\mathbb{V}}$. Let
  us consider the $\gamma$-sequence
  $$(\gamma^i)_{i \geq 0} \text{ with } \gamma^0 = \varnothing \land
  \gamma^{i+1} = \llbracket\Psi\rrbracket_{\mathbb{V}[X \mapsto
    \gamma^i]}$$ generated by the fixed point operation $\mu
  X. \Psi$. Due to \Cref{representation}, we also know that
  $\sigma \in \llbracket C^k\rrbracket_{\mathbb{V}}$ implies
  $\sigma \in \gamma^k$. $\sigma \in \gamma^k$ again implies that
  $\sigma \in \llbracket \mu X. \Psi\rrbracket_{\mathbb{V}}$, which
  needed to be proven.
\end{proof}

\begin{lemma}[Application of Trace Synchronization]
  \label{tslemma}
  For any fixed relation $R$, recursion variable $X$, valuation
  $\mathbb{V}$ and chop formulas $\Psi,\, \Psi' \in CF_{(R,X)}$, if we
  assume $L(gr(\Psi')) \subseteq L(gr(\Psi))$, then also
  $\llbracket\mu X. \Psi'\rrbracket_{\mathbb{V}} \subseteq
  \llbracket\mu X. \Psi\rrbracket_{\mathbb{V}}$.
\end{lemma}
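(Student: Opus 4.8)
The plan is to reduce trace inclusion between the two fixed points to the language inclusion $L(gr(\Psi')) \subseteq L(gr(\Psi))$ supplied by the hypothesis, using the three preceding lemmas as the bridge between the semantic and the grammatical worlds (exactly the chain of correspondences depicted in \Cref{fig1}). Concretely, I would take an arbitrary trace $\sigma \in \llbracket \mu X. \Psi' \rrbracket_{\mathbb{V}}$ and show $\sigma \in \llbracket \mu X. \Psi \rrbracket_{\mathbb{V}}$. Since both $\Psi'$ and $\Psi$ lie in $CF_{(R,X)}$ over the \emph{same} relation $R$, the trace $\sigma$ is a finite chop of copies of $R$ of some length $l \geq 1$, and is therefore represented by the primitive chop formula $c_{\sigma} := \overbrace{R \chop \cdots \chop R}^{l}$ in the sense of \Cref{gr1}.

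First I would apply the forward direction of \Cref{gr1} to $\Psi'$: from $\sigma \in \llbracket \mu X. \Psi' \rrbracket_{\mathbb{V}}$ (statement~1) it yields $grammarize(c_{\sigma}) \in L(gr(\Psi'))$ (statement~2). The assumption $L(gr(\Psi')) \subseteq L(gr(\Psi))$ then immediately gives $grammarize(c_{\sigma}) \in L(gr(\Psi))$. Finally I would apply the backward direction of \Cref{gr1} to $\Psi$, which turns this membership back into the existence of a trace of length $l$ represented by $c_{\sigma}$ that lies in $\llbracket \mu X. \Psi \rrbracket_{\mathbb{V}}$. As $\sigma$ was arbitrary, this establishes $\llbracket \mu X. \Psi' \rrbracket_{\mathbb{V}} \subseteq \llbracket \mu X. \Psi \rrbracket_{\mathbb{V}}$, which is the claim.

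The crucial point, and the step I expect to be the main obstacle, is that the backward direction of \Cref{gr1} must return the \emph{same} trace $\sigma$ we started with, not merely some trace of the same length. This is legitimate because the denotation $\llbracket c_{\sigma} \rrbracket_{\mathbb{V}}$ of a primitive chop formula built solely from $R$ depends only on $R$ and on $l$: the recursion variable $X$ does not occur in $c_{\sigma}$, so $\mathbb{V}$ is irrelevant, and the grammar ($gr(\Psi')$ or $gr(\Psi)$) that produced the word $grammarize(c_{\sigma})$ plays no role in the trace set attached to $c_{\sigma}$. Hence that trace set is identical on both sides and $\sigma$ belongs to it throughout. Should the singleton phrasing $\llbracket c_{\sigma}\rrbracket_{\mathbb{V}} = \{\sigma\}$ in \Cref{gr1} be read too literally, I would inline the argument through \Cref{representation} and \Cref{word} instead: membership $\sigma \in \llbracket \mu X. \Psi' \rrbracket_{\mathbb{V}}$ places $\sigma$ in some $\gamma^k = \llbracket C^k \rrbracket_{\mathbb{V}}$, hence in $\llbracket c_{\sigma}\rrbracket_{\mathbb{V}}$ for an all-$R$ formula $c_{\sigma} \in C^k$ with $grammarize(c_{\sigma}) \in L(gr(\Psi'))$; transporting this word across the inclusion and re-deriving it in $gr(\Psi)$ puts $c_{\sigma}$ in some $C^{k'}$ of the expansion of $\Psi$ (using that $grammarize$ is injective on all-$R$ primitive chop formulas), so $\sigma \in \llbracket C^{k'}\rrbracket_{\mathbb{V}} = \gamma^{k'} \subseteq \llbracket \mu X. \Psi \rrbracket_{\mathbb{V}}$. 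This fallback makes the identification of the trace completely explicit and sidesteps the singleton subtlety.
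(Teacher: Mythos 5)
Your proposal is correct and follows essentially the same route as the paper's own proof: pass from $\sigma$ to the primitive chop formula $c_\sigma$, apply \Cref{gr1} in the forward direction to $\Psi'$, transport the word across the assumed language inclusion, apply \Cref{gr1} backwards to $\Psi$, and identify the resulting trace with $\sigma$ via the observation that $\llbracket c_\sigma\rrbracket_{\mathbb{V}}$ does not depend on which grammar produced the word. Your caution about the singleton phrasing is well placed---the paper itself leans on the equation $\{\sigma\}=\llbracket c_\sigma\rrbracket_{\mathbb{V}}=\{\sigma'\}$ at exactly that point---and your fallback through \Cref{representation} and \Cref{word} is precisely the chain of correspondences the paper's argument ultimately rests on.
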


\begin{proof}
  Let us assume relation $R$, recursion variable $X$, valuation
  $\mathbb{V}$ and chop formulas $\Psi, \Psi' \in CF_{(R,X)}$ are
  arbitrary, but fixed, such that
  $L(gr(\Psi')) \subseteq L(gr(\Psi))$. Let us choose a trace
  $\sigma \in \llbracket\mu X. \Psi'\rrbracket_{\mathbb{V}}$ of length
  $l \geq 1$ arbitrary, but fixed. Let us now consider the primitive
  chop formula $c_{\sigma}$ with
  $$c_{\sigma} = \overbrace{R \chop \ldots \chop R}^{l-times}$$
  Considering that $\Psi'$ is a chop formula, trace
  $\sigma \in \llbracket\mu X. \Psi'\rrbracket_{\mathbb{V}}$ of length
  $l$ must be a trace that has $R$ applied $l-times$ as a
  chop-sequence, i.e.\
  $\llbracket c_{\sigma}\rrbracket_{\mathbb{V}} = \{\sigma\}$. Using
  \Cref{gr1}, we hence know that
  $grammarize(c_{\sigma}) \in L(gr(\Psi'))$. Using our premise, we can
  deduce that $grammarize(c_{\sigma}) \in L(gr(\Psi))$. Applying
  \Cref{gr1} again, we can infer that there also exists a trace
  $\sigma' \in \llbracket\mu X. \Psi\rrbracket_{\mathbb{V}}$ with
  $\llbracket c_{\sigma}\rrbracket_{\mathbb{V}} = \{\sigma'\}$. Since
  $\{\sigma\}= \llbracket c_{\sigma}\rrbracket_{\mathbb{V}} =
  \{\sigma'\}$, we conclude that
  $\sigma \in \llbracket\mu X. \Psi\rrbracket_{\mathbb{V}}$, which
  was to be proven.
\end{proof}

\subsection{Proof of \Cref{t2}}

\begin{proof} We prove that each new rule is \textit{locally
    sound}. \vspace{1em}

  \par{(\texttt{SYNC}).} Let us assume
  $\llbracket \bigwedge \Gamma\rrbracket_{\mathbb{V}} \subseteq
  \llbracket \mu X. \Psi' \lor \bigvee \Delta\rrbracket_{\mathbb{V}}$.
  Let us further assume that the side condition holds,
  i.e. $L(gr(\Psi')) \subseteq L(gr(\Psi))$. Using \Cref{tslemma}, we
  infer that
  \begin{align*}
    \llbracket \bigwedge \Gamma\rrbracket_{\mathbb{V}} \subseteq \llbracket \mu X. \Psi' \lor \bigvee \Delta\rrbracket_{\mathbb{V}} \subseteq \llbracket \mu X. \Psi \lor \bigvee \Delta\rrbracket_{\mathbb{V}}
  \end{align*}

  \iffalse
    
  \par{(\texttt{CH-SYNC}).} Let us assume
  $\llbracket \bigwedge \Gamma\rrbracket_{\mathbb{V}} \subseteq
  \llbracket (\mu X. \Psi_1') \chop \Psi_2 \lor \bigvee
  \Delta\rrbracket_{\mathbb{V}}$.  Let us further assume the side
  condition, i.e. $L(gr(\Psi_1')) \subseteq L(gr(\Psi_1))$. Using
  \Cref{tslemma}, we infer
  \begin{align*}
    \begin{split}
      & \llbracket \bigwedge \Gamma\rrbracket_{\mathbb{V}} \subseteq  \llbracket (\mu X. \Psi_1') \chop \Psi_2 \lor \bigvee \Delta\rrbracket_{\mathbb{V}} \\
      & = \{ \sigma \cdot s \cdot \sigma' \text{ | } \sigma \cdot s \in \llbracket \mu X. \Psi_1'\rrbracket_{\mathbb{V}} \land s \cdot \sigma' \in \llbracket \Psi_2\rrbracket_{\mathbb{V}} \} \cup \llbracket \bigvee \Delta\rrbracket_{\mathbb{V}} \\
      & \subseteq \{ \sigma \cdot s \cdot \sigma' \text{ | } \sigma \cdot s \in \llbracket \mu X. \Psi_1\rrbracket_{\mathbb{V}} \land s \cdot \sigma' \in \llbracket \Psi_2\rrbracket_{\mathbb{V}} \} \cup \llbracket \bigvee \Delta\rrbracket_{\mathbb{V}} \\
      & = \llbracket (\mu X. \Psi_1) \chop \Psi_2 \lor \bigvee \Delta\rrbracket_{\mathbb{V}}
    \end{split}
  \end{align*}

  \fi

\end{proof}

%%% Local Variables:
%%% mode: latex
%%% TeX-master: "main"
%%% End:

\fi

\end{document}

%%% Local Variables:
%%% mode: latex
%%% TeX-master: t
%%% End: